\documentclass[final]{IEEEtran}
\usepackage{amsthm,amssymb,graphicx,multirow,amsmath,color,amsfonts,physics}
\usepackage[update,prepend]{epstopdf}
\usepackage[noadjust]{cite}
\usepackage{tikz}
\usepackage{bbm} 
\usepackage{pdfpages}
\usepackage{balance}
\usepackage{multirow}
\usepackage{comment}
\usepackage{subfigure}
\usepackage{yhmath}

\allowdisplaybreaks 
\allowdisplaybreaks 

\usepackage{float}

\begin{document}
\def\nba{{\mathbf{a}}}
\def\nbb{{\mathbf{b}}}
\def\nbc{{\mathbf{c}}}
\def\nbd{{\mathbf{d}}}
\def\nbe{{\mathbf{e}}}
\def\nbf{{\mathbf{f}}}
\def\nbg{{\mathbf{g}}}
\def\nbh{{\mathbf{h}}}
\def\nbi{{\mathbf{i}}}
\def\nbj{{\mathbf{j}}}
\def\nbk{{\mathbf{k}}}
\def\nbl{{\mathbf{l}}}
\def\nbm{{\mathbf{m}}}
\def\nbn{{\mathbf{n}}}
\def\nbo{{\mathbf{o}}}
\def\nbp{{\mathbf{p}}}
\def\nbq{{\mathbf{q}}}
\def\nbr{{\mathbf{r}}}
\def\nbs{{\mathbf{s}}}
\def\nbt{{\mathbf{t}}}
\def\nbu{{\mathbf{u}}}
\def\nbv{{\mathbf{v}}}
\def\nbw{{\mathbf{w}}}
\def\nbx{{\mathbf{x}}}
\def\nby{{\mathbf{y}}}
\def\nbz{{\mathbf{z}}}
\def\nb0{{\mathbf{0}}}
\def\nb1{{\mathbf{1}}}

\def\nbA{{\mathbf{A}}}
\def\nbB{{\mathbf{B}}}
\def\nbC{{\mathbf{C}}}
\def\nbD{{\mathbf{D}}}
\def\nbE{{\mathbf{E}}}
\def\nbF{{\mathbf{F}}}
\def\nbG{{\mathbf{G}}}
\def\nbH{{\mathbf{H}}}
\def\nbI{{\mathbf{I}}}
\def\nbJ{{\mathbf{J}}}
\def\nbK{{\mathbf{K}}}
\def\nbL{{\mathbf{L}}}
\def\nbM{{\mathbf{M}}}
\def\nbN{{\mathbf{N}}}
\def\nbO{{\mathbf{O}}}
\def\nbP{{\mathbf{P}}}
\def\nbQ{{\mathbf{Q}}}
\def\nbR{{\mathbf{R}}}
\def\nbS{{\mathbf{S}}}
\def\nbT{{\mathbf{T}}}
\def\nbU{{\mathbf{U}}}
\def\nbV{{\mathbf{V}}}
\def\nbW{{\mathbf{W}}}
\def\nbX{{\mathbf{X}}}
\def\nbY{{\mathbf{Y}}}
\def\nbZ{{\mathbf{Z}}}

\def\ncalA{{\mathcal{A}}}
\def\ncalB{{\mathcal{B}}}
\def\ncalC{{\mathcal{C}}}
\def\ncalD{{\mathcal{D}}}
\def\ncalE{{\mathcal{E}}}
\def\ncalF{{\mathcal{F}}}
\def\ncalG{{\mathcal{G}}}
\def\ncalH{{\mathcal{H}}}
\def\ncalI{{\mathcal{I}}}
\def\ncalJ{{\mathcal{J}}}
\def\ncalK{{\mathcal{K}}}
\def\ncalL{{\mathcal{L}}}
\def\ncalM{{\mathcal{M}}}
\def\ncalN{{\mathcal{N}}}
\def\ncalO{{\mathcal{O}}}
\def\ncalP{{\mathcal{P}}}
\def\ncalQ{{\mathcal{Q}}}
\def\ncalR{{\mathcal{R}}}
\def\ncalS{{\mathcal{S}}}
\def\ncalT{{\mathcal{T}}}
\def\ncalU{{\mathcal{U}}}
\def\ncalV{{\mathcal{V}}}
\def\ncalW{{\mathcal{W}}}
\def\ncalX{{\mathcal{X}}}
\def\ncalY{{\mathcal{Y}}}
\def\ncalZ{{\mathcal{Z}}}

\def\nbbA{{\mathbb{A}}}
\def\nbbB{{\mathbb{B}}}
\def\nbbC{{\mathbb{C}}}
\def\nbbD{{\mathbb{D}}}
\def\nbbE{{\mathbb{E}}}
\def\nbbF{{\mathbb{F}}}
\def\nbbG{{\mathbb{G}}}
\def\nbbH{{\mathbb{H}}}
\def\nbbI{{\mathbb{I}}}
\def\nbbJ{{\mathbb{J}}}
\def\nbbK{{\mathbb{K}}}
\def\nbbL{{\mathbb{L}}}
\def\nbbM{{\mathbb{M}}}
\def\nbbN{{\mathbb{N}}}
\def\nbbO{{\mathbb{O}}}
\def\nbbP{{\mathbb{P}}}
\def\nbbQ{{\mathbb{Q}}}
\def\nbbR{{\mathbb{R}}}
\def\nbbS{{\mathbb{S}}}
\def\nbbT{{\mathbb{T}}}
\def\nbbU{{\mathbb{U}}}
\def\nbbV{{\mathbb{V}}}
\def\nbbW{{\mathbb{W}}}
\def\nbbX{{\mathbb{X}}}
\def\nbbY{{\mathbb{Y}}}
\def\nbbZ{{\mathbb{Z}}}

\def\nfrakR{{\mathfrak{R}}}

\def\nrma{{\rm a}}
\def\nrmb{{\rm b}}
\def\nrmc{{\rm c}}
\def\nrmd{{\rm d}}
\def\nrme{{\rm e}}
\def\nrmf{{\rm f}}
\def\nrmg{{\rm g}}
\def\nrmh{{\rm h}}
\def\nrmi{{\rm i}}
\def\nrmj{{\rm j}}
\def\nrmk{{\rm k}}
\def\nrml{{\rm l}}
\def\nrmm{{\rm m}}
\def\nrmn{{\rm n}}
\def\nrmo{{\rm o}}
\def\nrmp{{\rm p}}
\def\nrmq{{\rm q}}
\def\nrmr{{\rm r}}
\def\nrms{{\rm s}}
\def\nrmt{{\rm t}}
\def\nrmu{{\rm u}}
\def\nrmv{{\rm v}}
\def\nrmw{{\rm w}}
\def\nrmx{{\rm x}}
\def\nrmy{{\rm y}}
\def\nrmz{{\rm z}}

\def\nbydef{:=}
\def\nborel{\ncalB(\nbbR)}
\def\nboreld{\ncalB(\nbbR^d)}
\def\sinc{{\rm sinc}}

\newtheorem{lemma}{Lemma}
\newtheorem{thm}{Theorem}
\newtheorem{definition}{Definition}
\newtheorem{ndef}{Definition}
\newtheorem{nrem}{Remark}
\newtheorem{theorem}{Theorem}
\newtheorem{prop}{Proposition}
\newtheorem{cor}{Corollary}
\newtheorem{example}{Example}
\newtheorem{remark}{Remark}
\newtheorem{assumption}{Assumption}
	

\newcommand{\ceil}[1]{\lceil #1\rceil}
\def\argmin{\operatorname{arg~min}}
\def\argmax{\operatorname{arg~max}}
\def\figref#1{Fig.\,\ref{#1}}%
\def\E{\mathbb{E}}
\def\EE{\mathbb{E}^{!o}}
\def\P{\mathbb{P}}
\def\pc{\mathtt{P_c}}
\def\rc{\mathtt{R_c}}   
\def\p{p}

\def\V{\operatorname{Var}}
\def\erfc{\operatorname{erfc}}
\def\erf{\operatorname{erf}}
\def\opt{\mathrm{opt}}
\def\R{\mathbb{R}}
\def\Z{\mathbb{Z}}

\def\LL{\mathcal{L}^{!o}}
\def\var{\operatorname{var}}
\def\supp{\operatorname{supp}}

\def\N{\sigma^2}
\def\T{\beta}							
\def\sinr{\mathtt{SINR}}			
\def\snr{\mathtt{SNR}}
\def\sir{\mathtt{SIR}}
\def\ase{\mathtt{ASE}}
\def\se{\mathtt{SE}}

\def\calN{\mathcal{N}}
\def\FE{\mathcal{F}}
\def\calA{\mathcal{A}}
\def\calK{\mathcal{K}}
\def\calT{\mathcal{T}}
\def\calB{\mathcal{B}}
\def\calE{\mathcal{E}}
\def\calP{\mathcal{P}}
\def\calL{\mathcal{L}}


\def\l{\ell}
\newcommand{\fad}[2]{\ensuremath{\mathtt{h}_{#1}[#2]}}
\newcommand{\h}[1]{\ensuremath{\mathtt{h}_{#1}}}

\newcommand{\err}[1]{\ensuremath{\operatorname{Err}(\eta,#1)}}
\newcommand{\FD}[1]{\ensuremath{|\mathcal{F}_{#1}|}}



\def\Bx{{\mathcal{B}}^x}
\def\Bxx{{\mathcal{B}}^{x_0}}
\def\jx{y}
\def\m{(\bar{n}-1)}
\def\mm{\bar{n}-1}
\def\Nx{{\mathcal{N}}^x}
\def\Nxo{{\mathcal{N}}^{x_0}}
\def\wj{w_{jx_0}}
\def\uij{u_{jx}}
 \def\yj{y}
 \def\yjx{y}
 \def\zjx{z_x}
 \def \tx {y_0}
 \def \htx {h_0}

\def\rx{z_{1}}
\def\ry{z_{2}}

\def\Rx{Z_{1}}
\def\Ry{Z_{2}}

\def \hyxx {h_{y_{x_0}}}
\def \hyx {h_{y_x}}

\def\nbb1{\mathbbm{1}}
\def\xi{\textbf{x}_i}
\def\xj{\textbf{x}_j}
\def\xk{\textbf{x}_k}
\def\xx{\textbf{x}_0}
\def\yk{\textbf{y}_k}
\def\yj{\textbf{y}_j}
\def\yy{\textbf{y}_0}
\def\oe{\textbf{o}_e}
\def\zl{\textbf{z}_l}
\def\wik{\textbf{w}_{i,k}}
\def\ie{{\em i.e. }}
\def\eg{{\em e.g. }}
\def\iid{{\em i.i.d. }}
\def\avg{\rm avg}

\def\rmnuma{\rm\uppercase\expandafter{\romannumeral1}}
\def\rmnumb{\rm\uppercase\expandafter{\romannumeral2}}
\def\rmnumc{\rm\uppercase\expandafter{\romannumeral3}}
\def\rmnumd{\rm\uppercase\expandafter{\romannumeral4}}
\def\rmnume{\rm\uppercase\expandafter{\romannumeral5}}
\def\rmnumf{\rm\uppercase\expandafter{\romannumeral6}}
\pagenumbering{gobble}
\graphicspath{{./Figures/}}
\title{Connectivity of HAPS-based Solutions for\\ Large-scale Wireless Networks:\\ A Percolation Theory Analysis}
\author{
 Hao Lin,~\IEEEmembership{Graduate Student Member,~IEEE},  Mustafa A. Kishk,~\IEEEmembership{Member,~IEEE}\\ and Mohamed-Slim Alouini,~\IEEEmembership{Fellow,~IEEE}
\thanks{Hao Lin is with the Electrical and Computer Engineering Program, CEMSE Division, King Abdullah University of Science and Technology (KAUST),
Thuwal 23955-6900, Saudi Arabia (e-mail: hao.lin.std@gmail.com).\\
\indent Mustafa A. Kishk is with the Department of Electronic Engineering,
Maynooth University, Maynooth, W23 F2H6 Ireland (e-mail:
mustafa.kishk@mu.ie).\\
\indent Mohamed-Slim Alouini is with the CEMSE Division, King Abdullah
University of Science and Technology (KAUST), Thuwal 23955-6900,
Saudi Arabia (e-mail: slim.alouini@kaust.edu.sa).}
}

\maketitle
\vspace{-2cm}
\begin{abstract}
In the era of sixth-generation (6G) wireless communication, numerous applications are expected to be realized, including environmental monitoring, smart agriculture, remote education, security protection, and intelligent transportation systems. These scenarios require large-scale, continuous Internet services in forests, rivers, oceans, and road networks, to name a few, where optical cables are difficult to deploy. High-altitude platform stations (HAPSs) emerge as a promising solution, offering low-latency, high-capacity services while facilitating the establishment of vertical heterogeneous networks (vHetNets) in fiber-less areas. This paper investigates three HAPS-based solutions, where HAPSs can serve wireless devices directly or via gateway (GW) networks: the HAPS-to-device (H2D) scheme, the HAPS-to-GW-to-device (H2G2D) scheme, and the hybrid scheme. Leveraging percolation theory, we study the feasibility of large-scale continuous Internet coverage, where the key performance indicator (KPI) is the percolation probability. We discuss the subcritical and supercritical cases in different coverage schemes, and prove that the phase transition from zero to non-zero percolation probability appears when increasing the HAPS density or GW density. Numerical results verify that the curve of the critical condition of the phase transition exists between the derived lower bound and upper bound, which can help reduce the upfront cost of HAPS-based vHetNet solutions.
\end{abstract}
\begin{IEEEkeywords}
Percolation theory, stochastic geometry, high-altitude platform stations (HAPSs), non-terrestrial networks (NTNs), vertical heterogeneous network (vHetNet).
\end{IEEEkeywords}

\section{Introduction} \label{sec:Intro}

With the development of wireless sensing and communication, more and more applications, such as wearable devices, remote education, intelligent transportation systems, and smart agriculture are expected to be realized \cite{9714482,9042251}. Therefore, the traditional terrestrial network (TN) infrastructure faces explosive growth in connection requirements \cite{9369324}. The sixth-generation (6G) mobile network is expected to integrate non-terrestrial networks (NTNs) with TNs to provide seamless, low-latency and high-capacity connectivity \cite{10396843}, where high altitude platform stations (HAPSs) present a promising solution.

HAPSs operate in the stratosphere, approximately from 20 ${\rm km}$ to 50 ${\rm km}$ above ground, and offer their unique advantages \cite{9380673,10355104}. For instance, HAPSs can provide better line-of-sight (LoS) conditions than TN base stations (BSs). Compared with low-altitude platforms (LAPs) and drones, HAPSs can offer greater stability and wider coverage areas. In the vision of ultra-reliable low-latency communication (URLLC), HAPSs can ensure lower latency than satellites, with higher flexibility and stability. Furthermore, advances in lightweight composite materials, solar panel technology, and autonomous flight technology have made HAPSs more economically viable \cite{10355104}. Therefore, HAPSs can play a vital role in vertical heterogeneous networks (vHetNets) and bridge TNs and the core network through the sky \cite{10417095}. They are expected to help connect rural, remote, and post-disaster regions and accelerate the large-scale Internet service coverage. It is required, however, before mass deployment of such technologies, to use analytical techniques to validate the impact of integrating TNs and NTNs on large-scale wireless service connectivity. 

Stochastic geometry is an important mathematical tool for studying the performance of large-scale wireless networks and it can be used to analyze key performance indicators (KPIs) such as delay, coverage probability, and capacity. In the 6G era, new capabilities for large-scale continuous service should be evaluated. As an application of stochastic geometry and graph theory, percolation theory is used to study the feasibility of large-scale multi-hop networks and continuous coverage areas, where the main KPI is the percolation probability. Especially, the phase transition of percolation probability from zero to non-zero indicates the critical condition to ensure the feasibility of large-scale networks, which can reduce the capital expenditure (CAPEX) and operating expenditure (OPEX) for specific coverage requirements. Therefore, this paper aims to use percolation theory to study the connectivity of different Internet coverage schemes based on HAPSs. 

\subsection{Related Work}
In this paper, we consider three HAPS-based solutions, to realize large-scale continuous Internet coverage areas, where we use percolation theory to evaluate the connectivity in different schemes. Therefore, we divide the related work into: (i) stochastic geometry for HAPS-based solutions and (ii) percolation theory applied on wireless networks.

\indent \textit{Stochastic geometry for HAPS-based solutions:} Stochastic geometry is widely used to capture the performance of large-scale wireless networks, without losing accuracy and tractability \cite{haenggi2009stochastic}. In the vision of future HAPS networks, stochastic geometry can help model the spatial locations of nodes and evaluate large-scale HAPS network performance \cite{9380673}. In \cite{10634042}, authors highlighted NTN’s common system-level metrics, including coverage-based, relay-based, and routing-based analysis. Firstly, HAPSs can be integrated with TN to improve total coverage performance. For example, authors in \cite{10082988} evaluated the performance of spectrum sharing between terrestrial BSs and HAPSs, where suitable deployment density of HAPSs can improve coverage probability and transmission capacity. Cooperation between different NTN platforms can make the best of their respective advantages. Authors in \cite{8533584} established an aerial heterogeneous network architecture using HAPSs and LAPs, where a dynamic LAP placement can enhance the quality of service of HAPSs. In \cite{10050345}, authors compared the performance of LAP-assisted and HAP-assisted systems in post-disaster areas, where HAPSs are not recommended for use in case of relatively small disasters, but provide strong transmit power when the considered area is large. In \cite{10097717}, authors designed an integrated HAPS and LAP system to provide backhaul services for the post-disaster areas and compared the performance of HAP-assisted backhaul and LAP direct backhaul schemes. Many works integrated the space and airborne platforms to realize ubiquitous service and provide opportunities for computing tasks. Authors in \cite{9520123} investigated the performance of cache-enabled hybrid satellite-aerial-terrestrial networks, which can reduce transmission latency and provide an expandable framework for evolved networks. In \cite{9841465}, authors considered a cooperative satellite-aerial-terrestrial network, where information from a group of aerial terminals can be forwarded to the terrestrial destination through satellites or multiple aerial platforms. Authors in \cite{10506977} considered the cost constraints and evaluated the percentage of unmanned aerial vehicles (UAVs), HAPSs, and satellites at different total costs. They proposed an algorithm to obtain better total connection probability at the same cost. Furthermore, the performance of vHetNet architectures becomes more and more important. In \cite{10341311}, authors evaluated the joint coverage probability and transmission rate of the space-air-ground integrated networks (SAGIN), which incorporate multi-band terahertz (THz) and radio frequency (RF) resources. They also proposed an annealing algorithm to optimize the channel allocation. In \cite{10040542}, authors proposed space-air-ground-sea integrated networks (SAGSINs). They studied the coverage probability of stations on far-reaching ocean surfaces and verified the applicability of SAGSINs.

\textit{Percolation theory applied on wireless networks}: Percolation theory is widely used to capture the connectivity of large-scale networks, including multi-hop links, detective paths, continuous coverage, security, to name a few \cite{haenggi2009stochastic,haenggi2012stochastic,elsawy2023tutorial}. For example, percolation theory can be used to derive the critical density of camera sensors in clustered three-dimensional (3D) wireless camera sensor networks \cite{wang2019cooperative}. In \cite{zhaikhan2020safeguarding}, authors characterized the critical density of spatial firewalls, which can physically prevent malware epidemics in large-scale wireless networks. For cognitive networks, the coexistence of random primary and secondary networks was proved in \cite{yemini2019simultaneous}. In \cite{RFIoE1}, the authors proposed an energy-as-a-service platform that can provide wireless energy transfer services for Internet of Everything (IoE) devices. They proved the phase transition of percolation probability in such RF-powered IoE networks and derived the relationship between IoE device density and the critical energy station density \cite{RFIoE1}. For cellular networks, percolation theory can help capture the ability to realize large-scale Internet services through infrastructure sharing technique, especially for mobile users and intelligent vehicles \cite{HLISPC1}. With the development of reconfigurable intelligent surfaces (RIS), more topics of percolation analysis have been investigated recently. In \cite{wu2023connectivity}, authors investigated the connectivity of large-scale RIS-assisted integrated access and backhaul (IAB) networks.
In \cite{zhu2023connectivity}, the authors analyzed the connectivity of networks assisted by the directional antenna and transmissive RIS.
Especially, percolation theory has been used to evaluate the connectivity of NTN platforms. Authors in \cite{anjum2019percolation} derived the bounds of the critical node density of homogeneous and heterogeneous wireless balloon networks. They also derived the critical density of UAVs to ensure network coverage through UAV networks \cite{anjum2020coverage}. Based on spherical stochastic geometry, authors in \cite{HLLEOSat1} investigated the sub-critical and super-critical cases for large-scale continuous service on the earth via low earth orbit (LEO) satellites. They derived the critical density for LEO satellites to generate global continuous services, which can help further realize seamless coverage for the whole earth.

\subsection{Contributions}
The contributions of this paper can be summarized as follows:
\begin{itemize}
    \item We discuss three HAPS-based coverage models: (i) the HAPS-to-Device (H2D) coverage scheme, (ii) the HAPS-to-Gateway-to-Device (H2G2D) coverage scheme and (iii) the hybrid coverage scheme, and define the corresponding random graph for each coverage scheme.
    \item We use the percolation probability to capture the ability to generate large-scale continuous Internet coverage in each coverage scheme. Then, we discuss the sub-critical and super-critical cases in each coverage scheme, where the percolation probability is zero or non-zero, respectively.
    \item We prove the existence of the critical condition for phase transition of percolation probability from zero to non-zero in each HAPS-based coverage scheme, and show the properties of the critical HAPS density and gateway (GW) density, whose theoretical curves should be located between the lower bounds and upper bounds we have obtained.
    \item We emphasize how to use the derived critical conditions for phase transition of percolation probability to reduce the CAPEX and OPEX of HAPS-based solutions for large-scale Internet service coverage.
\end{itemize}

The rest of this paper is organized as follows: In Sec. \ref{sec:SysMod}, we introduce the system models for the H2D, H2G2D, and hybrid coverage schemes, respectively. In Sec. \ref{sec:proofofconcept}, we analyze the sub-critical and super-critical cases for each coverage scheme and prove the concept of phase transition. In Sec. \ref{sec:simulation}, we conduct Monte Carlo simulations and compare different coverage schemes. Finally, we conclude this paper in Sec. \ref{sec:conclusion}.

\section{System Model} \label{sec:SysMod}

In this paper, we focus on the ability of generating large-scale continuous coverage areas using percolation theory. Therefore, we first introduce the percolation in discrete model, continuous model and random disk model, respectively. 
\begin{itemize}
    \item As shown in Fig. \ref{fig:Percolation}(a), the whole plane can be divided into many hexagonal faces using a hexagonal lattice. The hexagons that realize the `required functions' are named `open faces' and others are named `closed faces'. When the open faces are connected and form a giant connected component, the percolation on the random graph is realized.
    \item  As shown in Fig. \ref{fig:Percolation}(b), there exists a continuous open area that performs the required function, which is the face percolation in a continuous model. A continuous face percolation can be considered as a discrete face percolation when the side length of each hexagon approaches zero.
    \item As shown in Fig. \ref{fig:Percolation}(c), a special case of continuous face percolation is the random disk model. When the distance between two neighboring disks is less than or equal to the sum of their radii, these two disks are connected. Multiple connected disks can form a connected component where there exists an `open path' from the left to the right or from the bottom to the top.
\end{itemize}

In wireless communications, our aim is to cover any user equipment (UE) that moves or is located on some random paths. Therefore, percolation represents that a random covered path can be realized under sufficient infrastructure deployment. As shown in Fig. \ref{fig:PercoDesert}, an example is that a device in the center of a desert can follow certain paths and maintain Internet service all the way to the edge of the desert. Such a covered path can also avoid the percolation of the `closed areas'. An example is that any device without Internet service in the desert can quickly obtain Internet service as long as it has a stable moving direction, without having to cross the entire desert.
\begin{figure}
    \centering
    \includegraphics[width=1\linewidth]{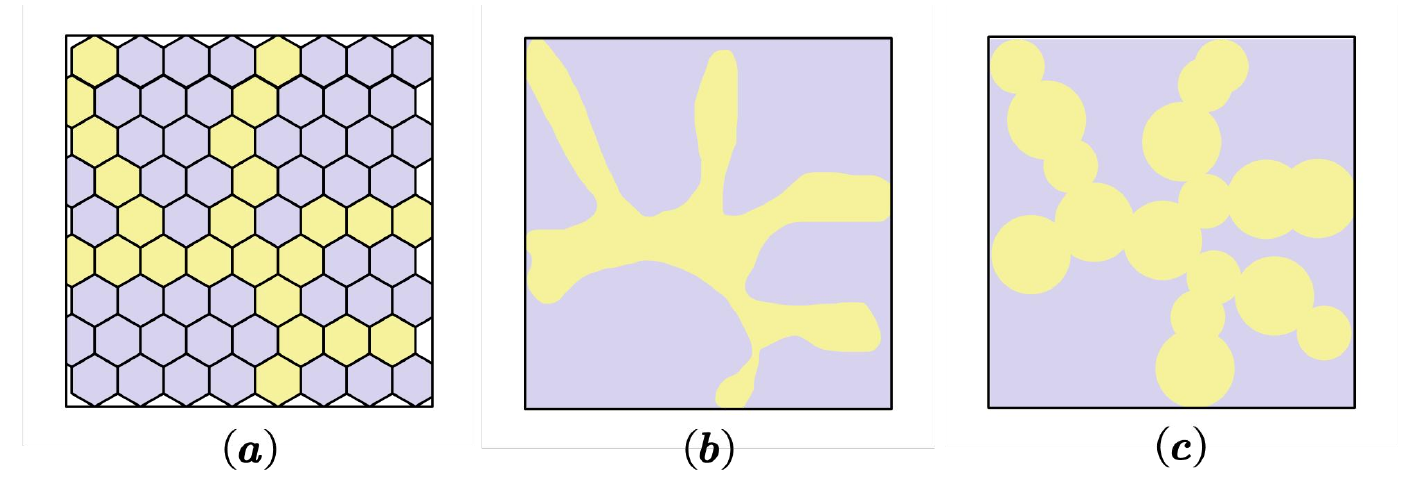}
    \caption{Percolation on (a) discrete model on hexagonal faces, (b) continuous model and (c) random disk model.}
    \label{fig:Percolation}
\end{figure}
\begin{figure}
    \centering
    \includegraphics[width=1\linewidth]{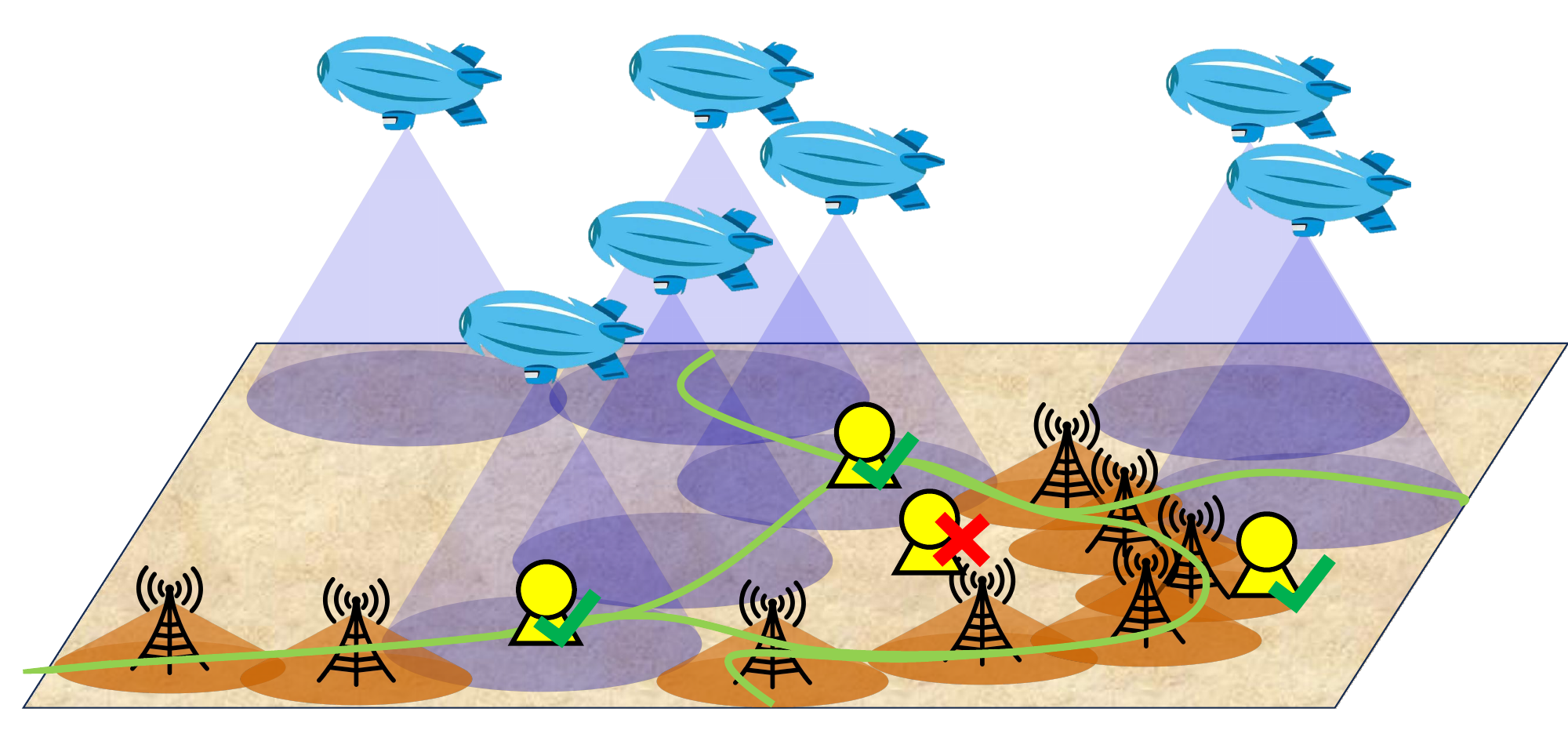}
    \caption{Illustration of percolation in wireless communication. The UE along with the green path can always maintain the Internet service, while the UE without Internet service can find Internet service by searching in any direction.}
    \label{fig:PercoDesert}
\end{figure}

Next, we discuss three HAPS-based solutions for large-scale Internet coverage:
\begin{itemize}
    \item \textbf{HAPS-to-Device Coverage Scheme (H2D)}: In this case, wireless devices can only communicate with HAPSs directly through H2D links. Such a coverage scheme can be considered a single-layer coverage.
    \item \textbf{HAPS-to-GW-to-Device Coverage Scheme (H2G2D)}: In this case, GWs can build a mesh network through GW-to-GW (G2G) links. The wireless devices can not communicate with HAPSs directly, but they can first connect to their neighbor GWs and then communicate with HAPSs indirectly through HAPS-to-GW (H2G) links, G2G links, and GW-to-Device (G2D) links.
    \item \textbf{Hybrid Coverage Scheme}: The wireless devices can form direct links to HAPSs or indirect links via GW mesh networks. Such a coverage scheme can help HAPSs cover more wireless devices than the H2D scheme and H2G2D scheme but requires more spectrum resources.
\end{itemize}

In this paper, the HAPSs we considered are already successfully connected to the core network. It is worth noting that, GWs denote different entities in different realistic heterogeneous networks. For example, in a cellular network supported by HAPSs, GWs refer to cellular BSs and the wireless devices are mobile users. In HAPS-enabled Internet of Things (IoT) networks, GWs mainly serve large numbers of IoT devices, and GWs are called IoT GWs. Furthermore, in certain scenarios, the wireless devices may also include various vehicles or edge computing units. As shown in Fig. \ref{fig:coveragemodels}, we focus on the availability of large-scale continuous Internet coverage through these HAPS-based solutions. Therefore, we respectively introduce the system model for these three coverage schemes.

\begin{figure}[htbp]
\centering
\subfigure[Case 1: HAPS-to-Device (H2D) coverage scheme.]{
\begin{minipage}[t]{1\linewidth}
\centering 
\includegraphics[width=1\textwidth]{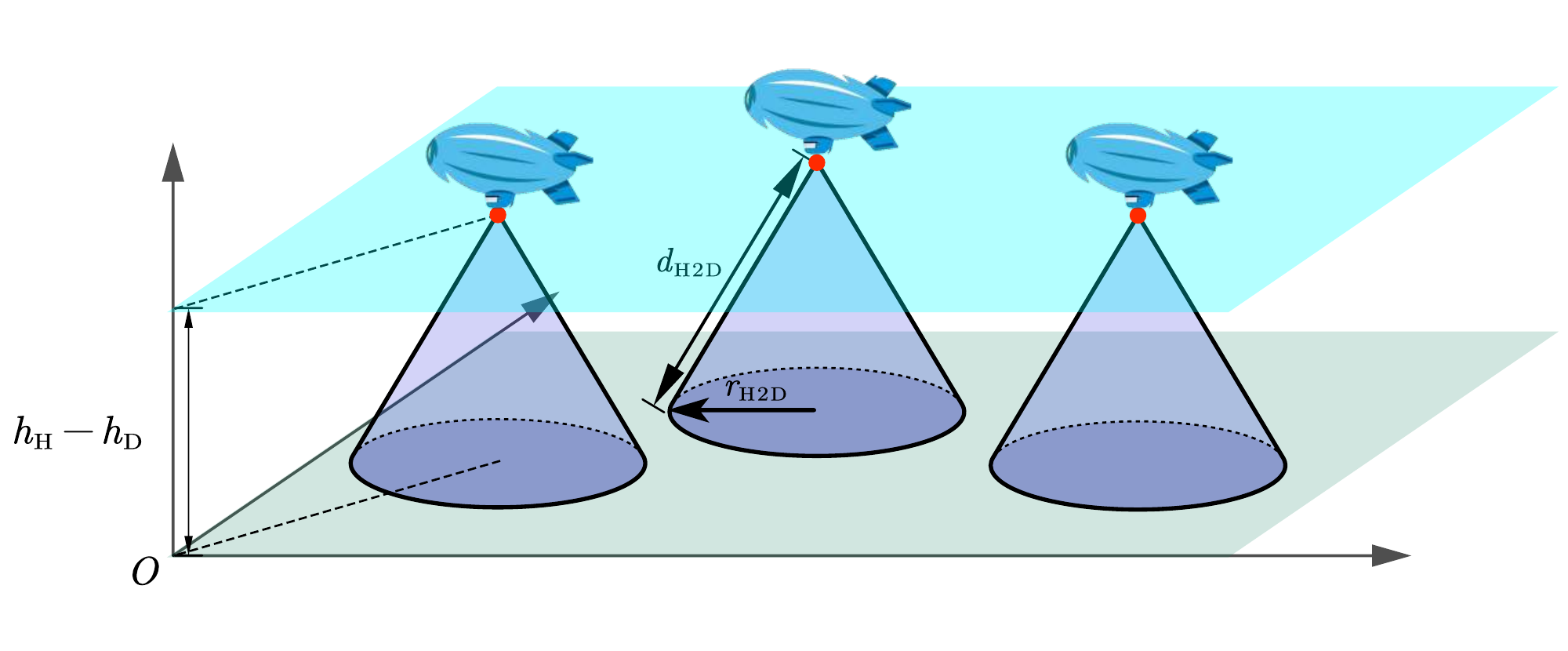}
\label{fig:H2D}
\end{minipage}}

\subfigure[Case 2: HAPS-to-GW-to-Device (H2G2D) coverage scheme.]{
\begin{minipage}[t]{1\linewidth}
\centering 
\includegraphics[width=1\textwidth]{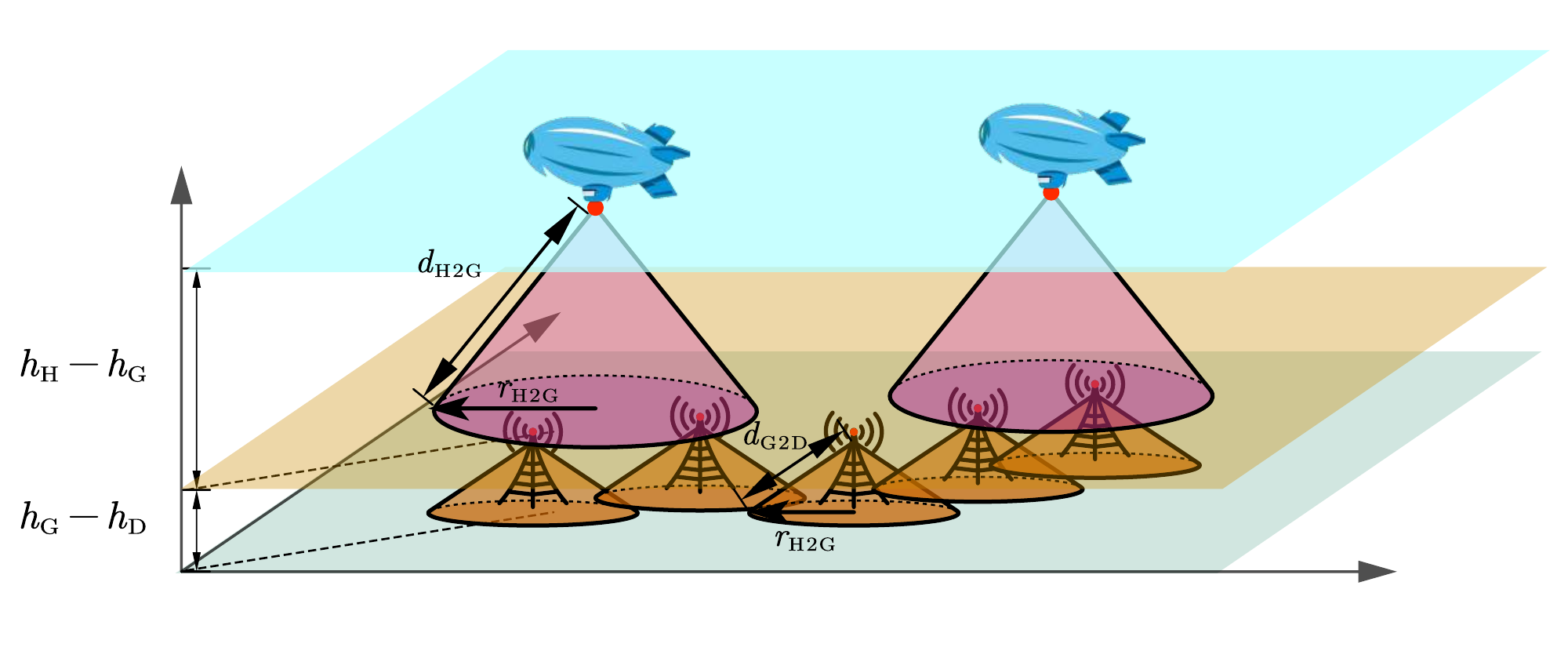}
\label{fig:H2G2D}
\end{minipage}}

\subfigure[Case 3: Hybrid coverage scheme.]{
\begin{minipage}[t]{1\linewidth}
\centering 
\includegraphics[width=1\textwidth]{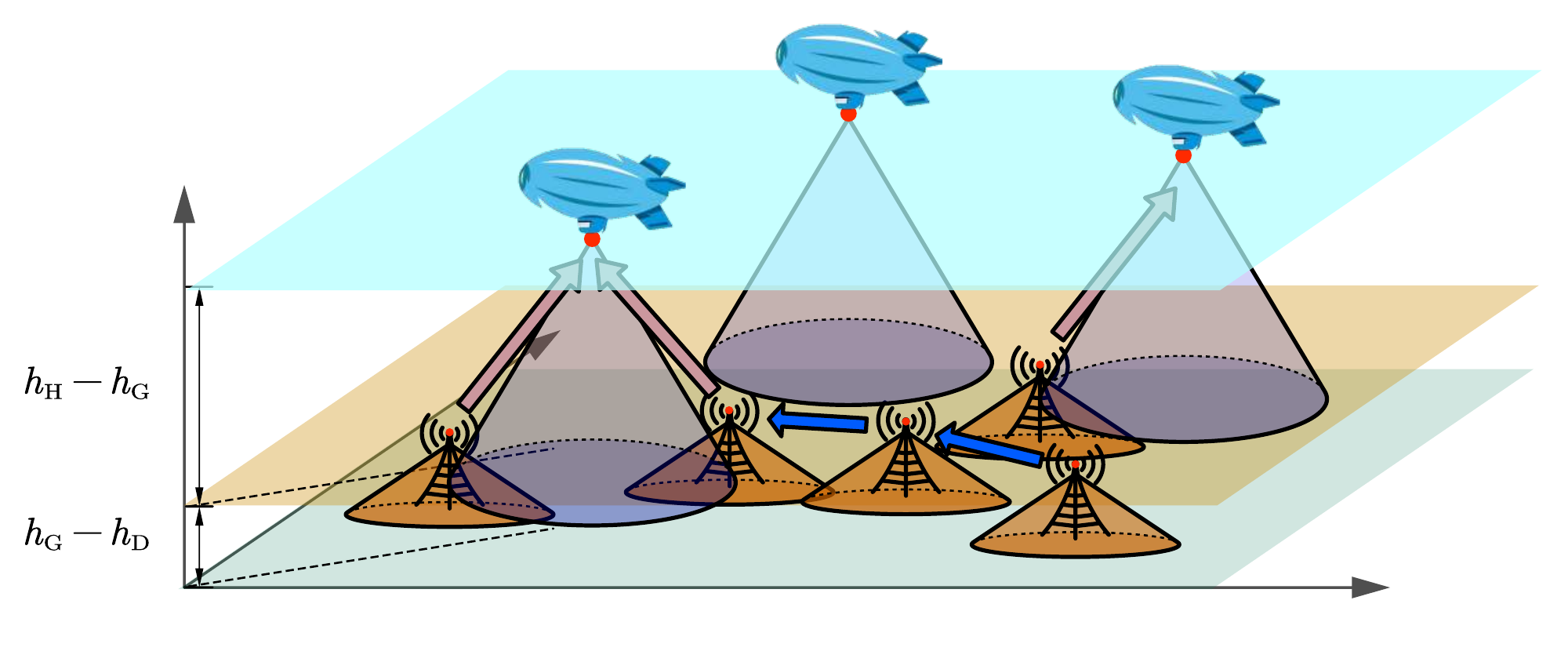}
\label{fig:Hybrid}
\end{minipage}}
\caption{Illustration of H2D, H2G2D and hybrid coverage schemes.}
\label{fig:coveragemodels}
\end{figure}

\subsection{HAPS-to-Device Coverage}
As shown in Fig. \ref{fig:H2D}, we first discuss the case where HAPSs directly communicate with wireless devices on the ground, and investigate the continuity of such H2D coverage areas. We assume that HAPSs are deployed at the same altitude $h_{\rm H}$ and the deployment follows a two-dimensional (2D) Poisson point process (PPP) with density $\lambda_{\rm H}$. Therefore, the set of HAPSs' coverage centers can be modeled as a PPP $\Psi$ with the same density $\lambda_{\rm H}$, where 
\begin{equation}
    \Psi=\{\yj\},
\end{equation}
and $\yj$ represents the location of one HAPS's coverage center on the ground. Each HAPS can provide service for wireless devices within a distance of $d_{\rm H2D}$. We assume that potential wireless devices are on the ground, therefore, the coverage area for wireless devices of the HAPS with projection at $\yj$ can be expressed as:
\begin{equation}
    \mathcal{A}_j=\{\zl\in \mathbb{R}^2:\|\zl-\yj\|\leq r_{\rm H2D}\}
\end{equation}
where
\begin{equation}
    r_{\rm H2D}=\sqrt{d_{\rm H2D}^2-(h_{\rm H}-h_{\rm D})^2},
\end{equation}
$h_{\rm D}$ is the altitude of potential wireless devices and $\|\cdot\|$ means the Euclidean norm.\\
\indent In this paper, we want to investigate the continuity of service areas of HAPSs for wireless devices. Considering only the H2D coverage, we define the random graph $G_{\rm H2D}=\{V_{\rm H2D},E_{\rm H2D}\}$, where the vertex set $V_{\rm H2D}=\Psi$ and the edge set $E_{\rm H2D}$ can be represented as:
\begin{equation}
    E_{\rm H2D}=\{\overline{\yj\yk}:\|\yj-\yk\|\leq 2r_{\rm H2D},\yj,\yk\in V_{\rm H2D}\}
\label{E_H2D}
\end{equation}
because when the distance between $\yj$ and $\yk$ is smaller than $2r_{\rm H2D}$, the coverage areas of these two HAPSs have overlapping areas, that is $\mathcal{A}_j\cap \mathcal{A}_k\neq \varnothing$. Such an H2D coverage can be described as a typical Gilbert disk model.
In this case, percolation probability indicates the probability of generating large-scale continuous service areas through H2D coverage. It represents the probability of generating a giant component containing the origin $K_{\rm H2D}(0)\subseteq G_{\rm H2D}$, whose cardinality is infinite, \ie
\begin{equation}
    \theta_{\rm H2D}(\lambda_{\rm H})=\P\{|K_{\rm H2D}(0)|=\infty\}.
\end{equation}
In this coverage scheme, the percolation probability of the whole system is related to the HAPS density. Therefore, we aim to evaluate the critical condition for the phase transition of percolation probability from zero to non-zero, and the design objective for the HAPS deployment can be formulated as:
\begin{equation}
    \begin{array}{ll}
       \text{ minimize}  & \lambda_{\rm H}  \\
       \text{ subject to}  & \theta_{\rm H2D}(\lambda_{\rm H})>0,\;\lambda_{\rm H}>0.  \\
    \end{array}
    \label{perproH2D}
\end{equation}

\subsection{HAPS-to-GW-to-Device Coverage}
Secondly, we discuss the case where GWs work as relay nodes between HAPSs and wireless devices, as shown in Fig. \ref{fig:H2G2D}. We still assume that HAPSs are deployed at the same altitude $h_{\rm H}$ and their 2D locations follow a PPP $\Psi=\{\yj\}$ with a density $\lambda_{\rm H}$. We assume that each HAPS can cover the GWs within a distance $d_{\rm H2G}$. The distribution of GWs' 2D locations follows a PPP $\Phi=\{\xi\}$ with density $\lambda_{\rm G}$, where the altitude of GWs is $h_{\rm G}$. Therefore, the coverage area for GWs of the HAPS with projection at $\yj$ can be expressed as:
\begin{equation}
    \mathcal{B}_j=\{\zl\in\mathbb{R}^2:\|\zl-\yj\|\leq r_{\rm H2G}\}
\end{equation}
where
\begin{equation}
    r_{\rm H2G}=\sqrt{d_{\rm H2G}^2-(h_{\rm H}-h_{\rm G})^2}.
\end{equation}
Therefore, the set of 2D locations of GWs directly covered by HAPSs can be expressed as
\begin{equation}
    \Phi_d=\{\xi\in\Phi:\xi\in\bigcup_{j:\yj\in\Psi} \mathcal{B}_j\}.
\end{equation}

\indent We assume that there exist wireless mutual communications between GWs, and the maximum GW-to-GW (G2G) communication distance is $r_{\rm G2G}$. Therefore, the GWs outside  the HAPSs' direct coverage can connect to HAPSs through the GW mesh network formed by such multi-hop G2G links. We define the location set of 2D locations of GWs connected to the core network through multi-hop G2G communication as:
\begin{equation}
\begin{array}{r@{}l}
    \Phi_{id}=\{&\xi^0\in\Phi\backslash\Phi_d:\;\exists\xi^1,...,\xi^{N-1}\in \Phi\backslash \Phi_d,\xi^N\in\Phi_d,\\
    &\|\xi^k-\xi^{k+1}\|\leq r_{\rm G2G},\forall k=0,...,N-1,N\geq 1\}.
\end{array}
\end{equation}
Therefore, the set of 2D locations of all GWs that can be successfully connected to the core network is defined as $\Phi_c=\Phi_d \cup \Phi_{id}$.

In this coverage scheme, we aim to investigate the continuity of the coverage of GWs, which are successfully connected to the HAPSs directly or via multi-hop G2G communication. Each GW can build a communication link to the wireless devices within a distance of $d_{\rm G2D}$. Therefore, each GW at $\xi$ can form a circular service area $\mathcal{C}_i$ on the ground, where
\begin{equation}
    \mathcal{C}_i=\{\zl\in\mathbb{R}^2:\|\zl-\xi\|\leq r_{\rm G2D}\}
\end{equation}
and
\begin{equation}
    r_{\rm G2D}=\sqrt{d_{\rm G2D}^2-(h_{\rm G}-h_{\rm D})^2}.
\end{equation}
Considering the H2G and G2D coverage and G2G communications, we define the random graph $G_{\rm H2G2D}=\{V_{\rm H2G2D},E_{\rm H2G2D}\}$, where the vertex set $V_{\rm H2G2D}=\Phi_c$ and the edge set $E_{\rm H2G2D}$ can be expressed as:
\begin{equation}
    E_{\rm H2G2D}=\{\overline{\xi\xk}:\|\xi-\xk\|\leq 2r_{\rm G2D},\xi,\xk\in V_{\rm H2G2D}\}
\label{E_H2G2D}
\end{equation}
because when the distance between $\xi$ and $\xk$ is smaller than $2r_{\rm G2D}$, the coverage areas of these two GWs have overlapping areas, that is $\mathcal{C}_i\cap \mathcal{C}_k\neq \varnothing$. 
In this case, large-scale continuous service areas rely on the H2G, G2D coverage, and G2G communications. Therefore, the percolation probability can be defined as the probability of generating a giant component containing the origin $K_{\rm H2G2D}(0)\subseteq G_{\rm H2G2D}$, whose cardinality is infinite, \ie
\begin{equation}
    \theta_{\rm H2G2D}(\lambda_{\rm H},\lambda_{\rm G})=\P\{|K_{\rm H2G2D}(0)|=\infty\}.
\end{equation}

In the H2G2D coverage scheme, the percolation probability is related to the densities of HAPSs and GWs. Therefore, when we evaluate the critical condition for the phase transition of percolation probability from zero to non-zero, and the design objective can be formulated as two parts:
\begin{equation}
    \begin{array}{ll}
       \text{ minimize}  & \lambda_{\rm H}  \\
       \text{ subject to}  & \theta_{\rm H2G2D}(\lambda_{\rm H},\lambda_{\rm G})>0,  \\
       & \lambda_{\rm H}>0, \; \lambda_{\rm G}>0.  \\
    \end{array}
    \label{perproH2G2DHAPS}
\end{equation}
and 
\begin{equation}
    \begin{array}{ll}
       \text{ minimize}  & \lambda_{\rm G}  \\
       \text{ subject to}  & \theta_{\rm H2G2D}(\lambda_{\rm H},\lambda_{\rm G})>0,  \\
       & \lambda_{\rm H}>0, \; \lambda_{\rm G}>0.  \\
    \end{array}
    \label{perproH2G2DGW}
\end{equation}

\subsection{Hybrid Coverage}
Next, we discuss the hybrid coverage scheme. As shown in Fig. \ref{fig:Hybrid}, HAPSs can not only 
directly cover wireless devices, but also cover the GWs first and then cover the wireless devices. We define the random graph $G_{\rm hbd}=\{V_{\rm hbd},E_{\rm hbd}\}$, where the vertex set $V_{\rm hbd}=V_{\rm H2D}\cup V_{\rm H2G2D}$ and the edge set $E_{\rm hbd}=E_{\rm H2D}\cup E_{\rm H2G2D}\cup E_{\rm HG}$. Especially, $E_{\rm H2D}$, $E_{\rm H2G2D}$ are defined in (\ref{E_H2D}) and (\ref{E_H2G2D}), respectively, and $E_{\rm HG}$ is defined as:
\begin{equation}
\begin{array}{r@{}l}
    E_{\rm HG}=\{\overline{\xi\yj}:\|\xi-\yj\|\leq r_{\rm G2D}&+r_{\rm H2D},\\
    &\yj\in\Psi,\xi\in\Phi_c\}.
\end{array}
\end{equation}
Similarly, the percolation probability in this case can be defined as:
\begin{equation}
    \theta_{\rm hbd}(\lambda_{\rm H},\lambda_{\rm G})=\P\{|K_{\rm hbd}(0)|=\infty\},
\end{equation}
where $K_{\rm hbd}(0)$ represents the giant component containing the origin in the random graph $G_{\rm hbd}$.\\
\indent In the hybrid coverage scheme, the percolation probability is related to the densities of HAPSs and GWs. Therefore, the design objective can be also formulated as two parts:
\begin{equation}
    \begin{array}{ll}
       \text{ minimize}  & \lambda_{\rm H}  \\
       \text{ subject to}  & \theta_{\rm hbd}(\lambda_{\rm H},\lambda_{\rm G})>0,  \\
       & \lambda_{\rm H}>0, \; \lambda_{\rm G}>0.  \\
    \end{array}
    \label{perprohybridHAPS}
\end{equation}
and 
\begin{equation}
    \begin{array}{ll}
       \text{ minimize}  & \lambda_{\rm G}  \\
       \text{ subject to}  & \theta_{\rm hbd}(\lambda_{\rm H},\lambda_{\rm G})>0,  \\
       & \lambda_{\rm H}>0, \; \lambda_{\rm G}>0.  \\
    \end{array}
    \label{perprohybridGW}
\end{equation}
 For ease of reading, we summarize the main notations and parameters in the Table \ref{tab:TableOfNotations}.
\begin{table*}[htbp]
\caption{Table of Notations}
\centering
\begin{center}
\resizebox{\textwidth}{!}{
\renewcommand{\arraystretch}{1}
    \begin{tabular}{ {c} | {l} }
    \hline
        \hline
    \textbf{Notation} & \textbf{Description} \\ \hline
    $\lambda_{\rm H}$; $\Psi=\{\yj\}$ & The density of HAPS; the set of projections of HAPS locations on the ground. \\ \hline
    $\lambda_{\rm G}$; $\Phi=\{\xi\}$ & The density of GWs; the set of projections of GW locations on the ground. \\ \hline
    $d_{\rm H2D}$ & The maximum communication distance between HAPSs and potential wireless devices (H2D).\\ \hline
    $d_{\rm H2G}$ & The maximum communication distance between HAPSs and GWs (H2G). \\ \hline
    $d_{\rm G2D}$ & The maximum communication distance between GWs and wireless devices (G2D). \\ \hline
    $h_{\rm H}$; $h_{\rm G}$; $h_{\rm D}$ & The altitudes of HAPSs, GWs and potential wireless devices, respectively. \\ \hline
    $r_{\rm H2D}$; $r_{\rm H2G}$ & The projected H2D coverage range on the ground; the projected H2G coverage range on the ground. \\ \hline
    $r_{\rm G2D}$; $r_{\rm G2G}$ & The projected G2D coverage range on the ground; the communication range between GWs. \\ \hline
    $G_{\rm H2D}=\{V_{\rm H2D},E_{\rm H2D}\}$ & The random graph of H2D coverage scheme with the vertex set $V_{\rm H2D}$ and edge set $E_{\rm H2D}$. \\ \hline
    $K_{\rm H2D}(0)$ & The giant connected component in H2D coverage scheme where HAPSs cover the origin. \\ \hline
    $G_{\rm H2G2D}=\{V_{\rm H2G2D},E_{\rm H2G2D}\}$ & The random graph of H2D coverage scheme with the vertex set $V_{\rm H2G2D}$ and edge set $E_{\rm H2G2D}$. \\ \hline
    $K_{\rm H2G2D}(0)$ & The giant connected component in H2G2D coverage scheme where GWs cover the origin. \\ \hline
    $G_{\rm hbd}=\{V_{\rm hbd},E_{\rm hbd}\}$ & The random graph of H2D coverage scheme with the vertex set $V_{\rm hbd}$ and edge set $E_{\rm hbd}$. \\ \hline
    $K_{\rm hbd}(0)$ & The giant connected component in hybrid coverage scheme where the origin is covered by GWs or HAPSs directly. \\ \hline
    $\theta_{\rm H2D}$, $\theta_{\rm H2G2D}$, $\theta_{\rm hbd}$ & The percolation probability in H2D, H2G2D, and hybrid coverage scheme, respectively. \\ \hline
    $\mathcal{H}_l$, $\mathcal{T}_l$, $a$ & The designed hexagonal face; the minimum circumscribed circle of $\mathcal{H}_l$, the side length of $\mathcal{H}_l$ (also the radius of $\mathcal{T}_l$). \\ \hline
     \hline
    \end{tabular}
}
\end{center}
\label{tab:TableOfNotations}
\end{table*}

\section{Critical States Analysis} \label{sec:proofofconcept}
\indent To evaluate continuous percolation in a large-scale area, we first divide the whole area into hexagonal faces with side length $a>0$. If any point in a hexagon can not be covered, such a hexagon is considered closed. When the probability of the hexagon $\mathcal{H}_{l}$ being closed is larger than 1/2, there exists a closed circuit around the origin \cite{zhaikhan2020safeguarding}. When a typical device moves through the circuit, its service will be interrupted. Therefore, the percolation probability is zero because any point can not generate a connected component whose cardinality is large. This can be described as:
\begin{equation}
    \theta=0\;{\rm for}\; \P\{\mathcal{H}_l\;{\rm is\;closed}\}>\frac{1}{2}.
\end{equation}
For ease of obtaining the lower bound of the critical density of HAPS or GWs, it is feasible to consider the corresponding minimum circumscribed circles (MCCs) $\mathcal{T}_l$'s of the hexagons $\mathcal{H}_l$'s, which is the green circle in Fig. \ref{fig:H2DClosed}. If any point in $\mathcal{T}_l$ can not be covered, the corresponding hexagon $\mathcal{H}_l$ can not be covered at the same time. Therefore, the probability of $\mathcal{H}_l$ being closed and the probability of $\mathcal{T}_l$ being not covered satisfy:
\begin{equation}
    \P\{\mathcal{H}_l\;{\rm is\;closed}\}>\P\{\mathcal{T}_l\;{\rm is\;not\;covered}\}.
\end{equation}
Therefore, 
\begin{equation}
    \P\{\mathcal{T}_l\;{\rm is\;not\;covered}\}>\frac{1}{2}
\label{Tlclosed}
\end{equation}
is a sufficient condition for zero percolation probability. 

Differently, if the hexagon $\mathcal{H}_{l}$ is covered by the Internet service, it is considered to be open. If the probability of each hexagon being open is larger than 1/2, the percolation probability is non-zero because there exists a giant connected component containing the considered point \cite{zhaikhan2020safeguarding}. This can be described as:
\begin{equation}
    \theta>0\;{\rm for}\; \P\{\mathcal{H}_l\;{\rm is\;open}\}>\frac{1}{2}.
\end{equation}
If everywhere in $\mathcal{T}_l$ can be covered, everywhere in the corresponding hexagon $\mathcal{H}_l$ can be covered at the same time. Therefore, the probability of $\mathcal{H}_l$ being open and the probability of $\mathcal{T}_l$ being covered satisfy:
\begin{equation}
    \P\{\mathcal{H}_l\;{\rm is\;open}\}>\P\{\mathcal{T}_l\;{\rm is\;covered}\}.
\end{equation}
Therefore, 
\begin{equation}
    \P\{\mathcal{T}_l\;{\rm is\;covered}\}>\frac{1}{2}
\label{Tlopen}
\end{equation}
is a sufficient condition for non-zero percolation probability.

Based on these, we discuss the sub-critical and super-critical cases and the proof of concept of each coverage scheme.

\subsection{HAPS-to-Device Coverage}
In the H2D coverage scheme, we first discuss the sub-critical case where the percolation probability is zero.\\
\begin{figure}[ht]
    \centering
    \includegraphics[width=0.7\linewidth]{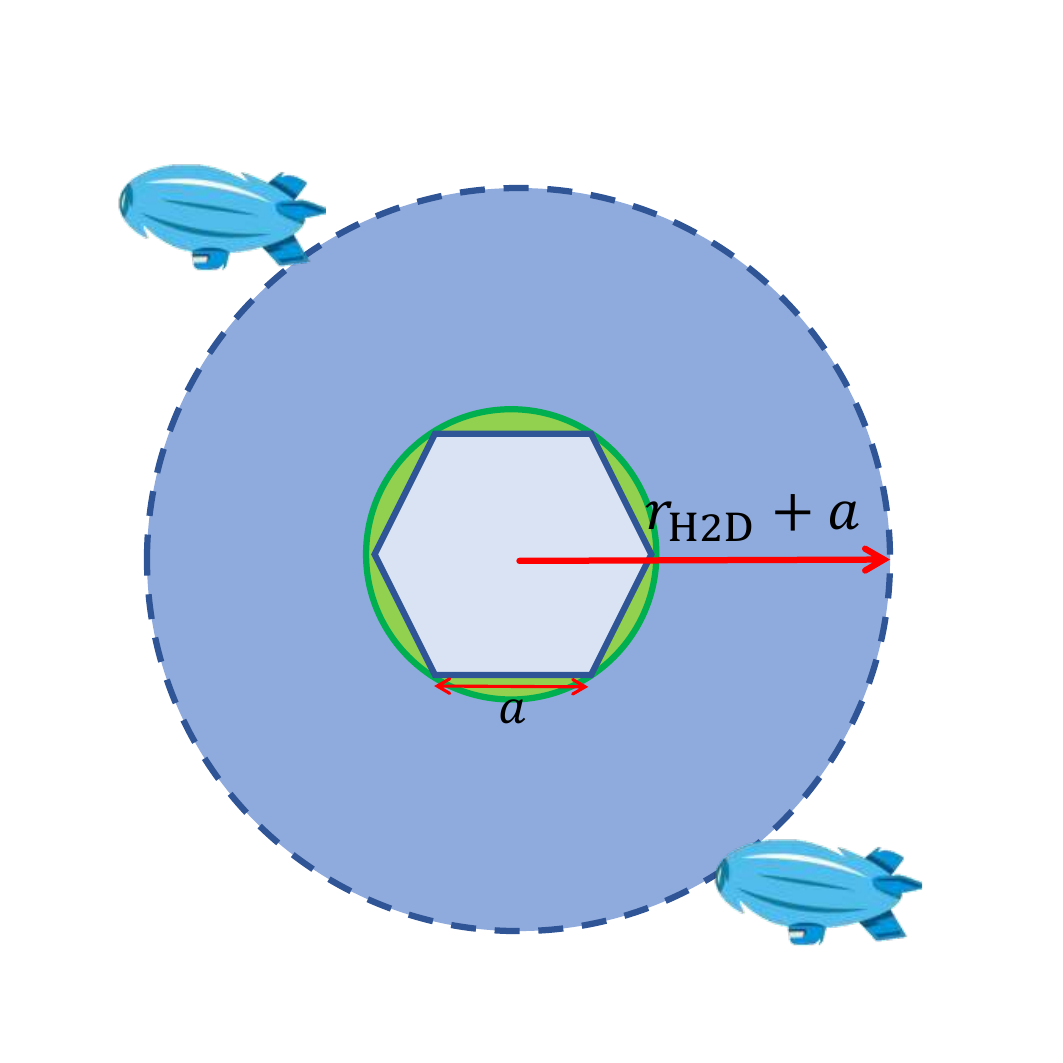}
    \caption{A closed hexagonal face $\mathcal{H}_l$ in the H2D coverage scheme. The HAPSs can not cover the MCC $\mathcal{T}_l$ so that the face $\mathcal{H}_l$ is closed.}
    \label{fig:H2DClosed}
\end{figure}

\textbf{Sub-critical Case:} Consider the MCC $\mathcal{T}_l$ of a hexagon $\mathcal{H}_l$. The radius of $\mathcal{T}_l$ is the same as the side length of the hexagon $a$, and the center of $\mathcal{T}_l$ is the same as the center of $\mathcal{H}_l$. If there is no HAPS inside the circular area with radius $r_{\rm H2D}+a$, any device inside $\mathcal{T}_l$ can not be covered. Therefore, we can obtain:
\begin{equation}
    \P\{\mathcal{T}_l\;{\rm is\;not\;covered}\}=e^{-\lambda_{\rm H}\pi (r_{\rm H2D}+a)^2}.
\label{Tlclosedformula}
\end{equation}
Based on this, we introduce the sufficient condition of HAPS density for no percolation in Theorem \ref{theo:H2Dlower}.
\begin{theorem}
    When the HAPS density $\lambda_{\rm H}$ satisfies $\lambda_{\rm H}<\lambda_{\rm H,L}^{\rm H2D}$, where:
\begin{equation}
    \lambda_{\rm H,L}^{\rm H2D}=\lambda_{\rm H}^{-}\overset{\triangle}{=}\frac{\ln 2}{\pi (r_{\rm H2D}+a)^2},
\end{equation}
the percolation probability of the H2D coverage scheme is zero, \ie $\theta_{\rm H2D}(\lambda_{\rm H})=0$.
\label{theo:H2Dlower}
\end{theorem}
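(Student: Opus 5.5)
The argument is a direct chain through the sufficient condition (\ref{Tlclosed}) established at the start of this section. I would take the closed-form expression (\ref{Tlclosedformula}) as given, namely $\P\{\mathcal{T}_l\;{\rm is\;not\;covered}\}=e^{-\lambda_{\rm H}\pi (r_{\rm H2D}+a)^2}$. Let $o_l$ be the center of $\mathcal{T}_l$. By the triangle inequality, any HAPS projection $\yj$ with $\|\yj-o_l\|>r_{\rm H2D}+a$ satisfies $\|\zl-\yj\|>r_{\rm H2D}$ for every $\zl\in\mathcal{T}_l$. Hence the void event for the disk of radius $r_{\rm H2D}+a$ around $o_l$ implies that no point of $\mathcal{T}_l$ is covered. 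The void probability of the PPP $\Psi$ gives the exponential, so at minimum $\P\{\mathcal{T}_l\;{\rm is\;not\;covered}\}\ge e^{-\lambda_{\rm H}\pi (r_{\rm H2D}+a)^2}$. A lower bound is all that is needed.

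Next I would solve the inequality. The condition $e^{-\lambda_{\rm H}\pi (r_{\rm H2D}+a)^2}>\tfrac12$ is equivalent to $\lambda_{\rm H}\pi (r_{\rm H2D}+a)^2<\ln 2$, that is, $\lambda_{\rm H}<\frac{\ln 2}{\pi (r_{\rm H2D}+a)^2}=\lambda_{\rm H}^{-}$. So under the hypothesis of Theorem \ref{theo:H2Dlower}, (\ref{Tlclosed}) holds for every hexagon $\mathcal{H}_l$. Since the PPP is stationary, this probability is the same for every $l$.

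The last step invokes the discrete face-percolation fact quoted from \cite{zhaikhan2020safeguarding}: if each face is closed with probability greater than $1/2$, then almost surely there is a closed circuit of hexagons surrounding the origin. By definition, no point in the faces of this circuit is covered by any HAPS disk $\mathcal{A}_j$. Every disk in the connected component $K_{\rm H2D}(0)$ covering the origin therefore lies strictly inside the circuit. Otherwise a chain of overlapping disks, whose union is connected, would have to cross the closed circuit, which is impossible. A bounded region contains only finitely many PPP points almost surely, so $|K_{\rm H2D}(0)|<\infty$ a.s. and $\theta_{\rm H2D}(\lambda_{\rm H})=0$.

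I expect the main obstacle to be justifying the lattice step rigorously. The states of neighbouring hexagons are not independent, because one HAPS disk can affect several faces; the dependence has finite range $2(r_{\rm H2D}+a)$. The threshold $1/2$ is exact only for independent site percolation on the triangular lattice. To be fully careful, I would either appeal to the cited result as the paper does, or use a finite-range-dependence domination argument (Liggett–Schonmann–Stacey), accepting a possibly modified threshold. For this proposal I would rely on the earlier-stated criterion, as the paper intends.
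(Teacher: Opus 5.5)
Your proposal follows the paper's proof: the paper substitutes the void probability (\ref{Tlclosedformula}) into the criterion (\ref{Tlclosed}) and solves $e^{-\lambda_{\rm H}\pi(r_{\rm H2D}+a)^2}>\frac{1}{2}$ to get $\lambda_{\rm H}<\lambda_{\rm H}^{-}$. Your triangle-inequality justification of the void event and your closed-circuit-to-finite-cluster step only spell out what the paper leaves implicit. The dependence problem you flag is real, but the paper has it too, since it takes the $1/2$ threshold for the finite-range-dependent hexagonal faces directly from \cite{zhaikhan2020safeguarding}; also, a Liggett--Schonmann--Stacey comparison would not recover the constant $\lambda_{\rm H}^{-}$, because it needs the closed-face probability close to $1$.
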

\begin{IEEEproof}
    Substituting (\ref{Tlclosedformula}) into (\ref{Tlclosed}), we can obtain the sufficient condition for zero percolation probability in the H2D coverage scheme.
\end{IEEEproof}
Therefore, when the HAPS density is less than the lower bound $\lambda_{\rm H,L}^{\rm H2D}$, the percolation probability in a large-scale network is zero.
Next, we introduce the super-critical case where the percolation probability is non-zero.

\begin{figure}[ht]
    \centering
    \includegraphics[width=0.7\linewidth]{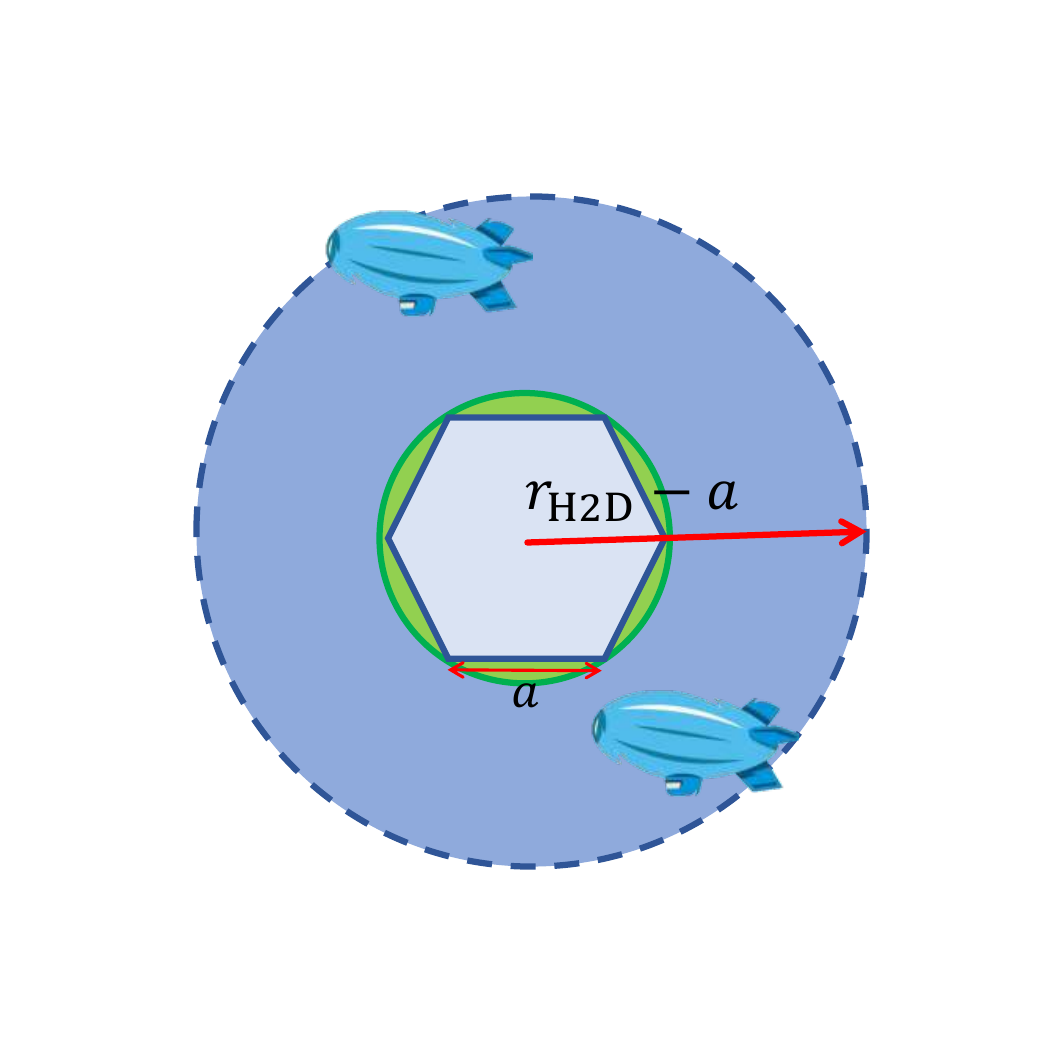}
    \caption{An open hexagonal face $\mathcal{H}_l$ in the H2D coverage scheme. The HAPSs can cover the MCC $\mathcal{T}_l$ so that the face $\mathcal{H}_l$ is open.}
    \label{fig:H2DOpen}
\end{figure}
\textbf{Super-critical Case:} Consider the MCC $\mathcal{T}
_l$ of the hexagon $\mathcal{H}_l$. If there is at least one HAPS inside the circular area with radius $r_{\rm H2D}-a$, any device inside $\mathcal{T}_l$ can be covered. Therefore we can obtain:
\begin{equation}
    \P\{\mathcal{T}_l\;{\rm is\;covered}\}=e^{-\lambda_{\rm H}\pi (r_{\rm H2D}-a)^2}.
\label{Tlopenformula}
\end{equation}
Based on this, we introduce the sufficient condition of HAPS density for non-zero percolation probability in Theorem \ref{theo:H2Dupper}.
\begin{theorem}
    When the HAPS density $\lambda_{\rm H}$ satisfies $\lambda_{\rm H}>\lambda_{\rm H,U}^{\rm H2D}$, where:
\begin{equation}
    \lambda_{\rm H,U}^{\rm H2D}=\lambda_{\rm H}^{+}\overset{\triangle}{=}\frac{\ln 2}{\pi (r_{\rm H2D}-a)^2},
\end{equation}
the percolation probability of the H2D coverage scheme is non-zero, \ie $\theta_{\rm H2D}(\lambda_{\rm H})>0$.
\label{theo:H2Dupper}
\end{theorem}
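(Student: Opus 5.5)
The plan is to mirror the proof of Theorem~\ref{theo:H2Dlower}: substitute an explicit expression for $\P\{\mathcal{T}_l\;{\rm is\;covered}\}$ into the sufficient condition (\ref{Tlopen}) and solve for $\lambda_{\rm H}$. We assume throughout that $a<r_{\rm H2D}$, since otherwise the bound is meaningless.

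First I would fix the coverage event. Let $\mathbf{c}_l$ be the common centre of $\mathcal{H}_l$ and $\mathcal{T}_l$. Suppose some HAPS projection $\textbf{y}_j\in\Psi$ satisfies $\|\textbf{y}_j-\mathbf{c}_l\|\leq r_{\rm H2D}-a$. Then for every $\mathbf{z}\in\mathcal{T}_l$ the triangle inequality gives
\[
\|\mathbf{z}-\textbf{y}_j\|\leq \|\mathbf{z}-\mathbf{c}_l\|+\|\mathbf{c}_l-\textbf{y}_j\|\leq a+(r_{\rm H2D}-a)=r_{\rm H2D}.
\]
Hence $\mathcal{T}_l\subseteq\mathcal{A}_j$, and so $\mathcal{H}_l\subseteq\mathcal{T}_l$ is open. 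Since $\Psi$ is a PPP of intensity $\lambda_{\rm H}$, the void probability of the disk $b(\mathbf{c}_l,r_{\rm H2D}-a)$ yields
\[
\P\{\mathcal{T}_l\;{\rm is\;covered}\}\geq 1-e^{-\lambda_{\rm H}\pi(r_{\rm H2D}-a)^2}.
\]
This is the quantity (\ref{Tlopenformula}) is meant to express; as printed, (\ref{Tlopenformula}) omits the complement and gives the void probability instead. It is only a lower bound on coverage, because a union of several HAPS disks could also cover $\mathcal{T}_l$ without any single HAPS lying within $r_{\rm H2D}-a$ of $\mathbf{c}_l$. A lower bound is exactly what a sufficient condition needs.

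Next, I would impose $1-e^{-\lambda_{\rm H}\pi(r_{\rm H2D}-a)^2}>1/2$. This is equivalent to $e^{-\lambda_{\rm H}\pi(r_{\rm H2D}-a)^2}<1/2$, that is, to
\[
\lambda_{\rm H}>\frac{\ln 2}{\pi(r_{\rm H2D}-a)^2}=\lambda_{\rm H,U}^{\rm H2D}.
\]
Under this condition (\ref{Tlopen}) holds, so $\P\{\mathcal{H}_l\;{\rm is\;open}\}>1/2$. By the face-percolation criterion stated before (\ref{Tlopen}), this gives an infinite open cluster of hexagons. Within it, consecutive open hexagons share an edge and are each contained in a single HAPS disk, so the corresponding HAPSs overlap and are adjacent in $G_{\rm H2D}$. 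This produces an infinite component of $G_{\rm H2D}$, and hence $\theta_{\rm H2D}(\lambda_{\rm H})>0$.

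The main obstacle is the step from "each face is open with probability $>1/2$" to percolation. The open events of neighbouring hexagons are not independent, because their disks of radius $r_{\rm H2D}-a$ overlap. The threshold $1/2$ is the critical value for independent site percolation on the triangular lattice (the dual of the hexagonal faces). Here I would rely on the criterion already invoked in the paper via \cite{zhaikhan2020safeguarding}. If that reliance needed to be made rigorous, I would first try a finite-range-dependence comparison: the events are $k$-dependent with $k\approx r_{\rm H2D}/a$, so a Liggett--Schonmann--Stacey domination argument gives percolation once the open probability is close enough to $1$. That argument yields a possibly larger constant but preserves the $\ln$-type threshold structure.
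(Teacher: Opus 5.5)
Your route is the paper's route. You use the inclusion $\mathcal{T}_l\subseteq\mathcal{A}_j$ when a HAPS lies within $r_{\rm H2D}-a$ of the centre. You then use the void probability to get $\P\{\mathcal{T}_l\;{\rm is\;covered}\}\geq 1-e^{-\lambda_{\rm H}\pi(r_{\rm H2D}-a)^2}$; you are right that (\ref{Tlopenformula}) is missing the complement and should be an inequality. Finally you invoke the criterion (\ref{Tlopen}). But the step you single out as the main obstacle is a genuine gap. Deferring to the criterion ``as invoked in the paper'' does not close it, because the paper's one-line proof has the same hole. The value $1/2$ is the critical probability for \emph{independent} site percolation on the triangular lattice. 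Your face states are increasing functions of a single PPP, correlated over Euclidean distance $2(r_{\rm H2D}-a)$, i.e.\ over order $r_{\rm H2D}/a$ lattice steps. For such correlated fields the threshold $1/2$ is not available, even with positive association (Harris--FKG).

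In fact no argument can close the gap with the stated constant. $G_{\rm H2D}$ is exactly the Gilbert disk graph, i.e.\ the planar Boolean model with radius $r_{\rm H2D}$. Its critical intensity is known to be $\lambda_c\approx 1.128/(\pi r_{\rm H2D}^2)$, corresponding to a critical covered area fraction of about $0.676$, not $1/2$. As $a\to0$ we have $\lambda_{\rm H}^{+}\to\ln 2/(\pi r_{\rm H2D}^2)<\lambda_c$, and already $a<0.21\,r_{\rm H2D}$ gives $\lambda_{\rm H}^{+}<\lambda_c$. Then every $\lambda_{\rm H}\in(\lambda_{\rm H}^{+},\lambda_c)$ has $\P\{\mathcal{T}_l\;{\rm is\;covered}\}>1/2$, yet $\theta_{\rm H2D}(\lambda_{\rm H})=0$. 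So for small hexagons the statement is false. This is also why the paper's corollary value $\ln 2/(\pi r_{\rm H2D}^2)$ for the critical density is off. The statement holds only when $a$ is large enough that $\lambda_{\rm H}^{+}\geq\lambda_c$, and then for reasons unrelated to the $1/2$ criterion.

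Your finite-range-dependence fallback is the right kind of rigorous tool, whether via Liggett--Schonmann--Stacey domination or via coarse-graining at scale $r_{\rm H2D}$ with a Peierls argument. But, as you note yourself, it only gives percolation once the face-open probability is close to $1$. That means a threshold $c/(\pi(r_{\rm H2D}-a)^2)$ with some $c>\ln 2$ depending on the dependence range. This proves a correct but weaker theorem. As written, your proposal establishes the coverage estimate but not $\theta_{\rm H2D}(\lambda_{\rm H})>0$ for all $\lambda_{\rm H}>\lambda_{\rm H}^{+}$.
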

\begin{IEEEproof}
    Substituting (\ref{Tlopenformula}) into (\ref{Tlopen}), we can obtain the sufficient condition for non-zero percolation probability in the H2D coverage scheme.
\end{IEEEproof}
Therefore, when the HAPS density is larger than the upper bound $\lambda_{\rm H,U}^{\rm H2D}$, the percolation probability in a large-scale network is non-zero. Next, to prove that the critical condition for phase transition of percolation probability exists, we introduce the relationship between the HAPS density and the percolation probability in Lemma \ref{lem:increaseH2D}.
\begin{lemma}
    When the density of HAPSs $\lambda_{\rm H}$ increases, the percolation probability $\theta_{\rm H2D}$ does not decrease, that is
\begin{equation}
    \theta_{\rm H2D}(\lambda_{\rm H,2})\geq\theta_{\rm H2D}(\lambda_{\rm H,1})\;{\rm if}\;\lambda_{\rm H,2}>\lambda_{\rm H,1}.
\end{equation}
\label{lem:increaseH2D}
\end{lemma}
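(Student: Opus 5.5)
We will prove the monotonicity in Lemma \ref{lem:increaseH2D} by a coupling (thinning/superposition) argument rather than by any computation. Fix $\lambda_{\rm H,2}>\lambda_{\rm H,1}>0$ and construct, on a single probability space, a PPP $\Psi_1$ of density $\lambda_{\rm H,1}$ and an independent PPP $\Psi'$ of density $\lambda_{\rm H,2}-\lambda_{\rm H,1}$. By the superposition property of independent Poisson processes, $\Psi_2=\Psi_1\cup\Psi'$ is a PPP of density $\lambda_{\rm H,2}$. Hence $\theta_{\rm H2D}(\lambda_{\rm H,i})$ equals the probability of the percolation event in the graph built on $\Psi_i$, and we may compare these events realization by realization. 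Equivalently, one could start from $\Psi_2$ and obtain $\Psi_1$ by independent thinning with retention probability $\lambda_{\rm H,1}/\lambda_{\rm H,2}$.

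Next we check that the graph $G_{\rm H2D}$ is monotone in its vertex set. The edge rule (\ref{E_H2D}) depends only on the pairwise distance $\|\yj-\yk\|\le 2r_{\rm H2D}$, with $r_{\rm H2D}$ fixed. Therefore the graph built on $\Psi_1$ is an induced subgraph of the graph built on $\Psi_2$: every vertex and every edge present for $\Psi_1$ remains present for $\Psi_2$. In terms of coverage, the union of disks $\bigcup_j\mathcal{A}_j$ for $\Psi_1$ is contained in that for $\Psi_2$. It follows that the component $K^{(1)}_{\rm H2D}(0)$ containing the origin for $\Psi_1$ is contained in the component $K^{(2)}_{\rm H2D}(0)$ for $\Psi_2$. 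To see this, note that any path of overlapping disks from a disk covering the origin in $\Psi_1$ is still such a path in $\Psi_2$. Consequently $\{|K^{(1)}_{\rm H2D}(0)|=\infty\}\subseteq\{|K^{(2)}_{\rm H2D}(0)|=\infty\}$.

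Taking probabilities of this inclusion gives $\theta_{\rm H2D}(\lambda_{\rm H,2})\ge\theta_{\rm H2D}(\lambda_{\rm H,1})$. In other words, percolation is an increasing event under the natural partial order on point configurations.

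The only delicate step is making the meaning of ``component containing the origin'' precise enough for the inclusion to be immediate. We will interpret it as the connected component of the covered region (union of the disks $\mathcal{A}_j$) that contains $0$, or equivalently the graph component of HAPSs whose disks reach the origin. With either reading, adding points can only merge components and never split them, so the inclusion holds. This is also the point where we will remark that the coupling is legitimate, namely that the law of the event depends only on the law of the point process. Once that is noted, the argument is complete.
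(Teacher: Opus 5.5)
Your proposal is correct and follows essentially the same route as the paper's Appendix~A. You couple the two PPPs so that $\Psi_1\subseteq\Psi_2$; the paper does this by independent thinning of $\Psi_2$ with retention probability $\lambda_{\rm H,1}/\lambda_{\rm H,2}$, which you correctly note is equivalent to your superposition. From there you deduce $V_{\rm H2D,1}\subseteq V_{\rm H2D,2}$, $E_{\rm H2D,1}\subseteq E_{\rm H2D,2}$ and $K_{\rm H2D,1}(0)\subseteq K_{\rm H2D,2}(0)$, and take probabilities of the event inclusion, exactly as the paper does. Your explicit fix for the meaning of ``the component containing the origin'' (the origin is a.s.\ not a point of the PPP) is a small tightening that the paper leaves implicit.
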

\begin{IEEEproof}
    See Appendix~\ref{app:increaseH2D}.
\end{IEEEproof}
Therefore, with the derived lower bound and upper bound, we can prove the existence of the critical condition of HAPS density for phase transition of percolation probability, which is shown in Lemma \ref{lem:criticalH2D}.

\begin{lemma}
For the H2D coverage scheme, there exists a critical value of HAPS density, that is $\lambda_{\rm H,c}^{\rm H2D}$ where
\begin{equation}
\begin{array}{ll}
    \theta_{\rm H2D}(\lambda_{\rm H})=0 & {\rm for}\;\lambda_{\rm H}\leq \lambda_{\rm H,c}^{\rm H2D}, \\
    \theta_{\rm H2D}(\lambda_{\rm H})>0 & {\rm for}\;\lambda_{\rm H}> \lambda_{\rm H,c}^{\rm H2D}. 
\end{array}
\end{equation}
Especially, the lower bound of $\lambda_{\rm H,c}^{\rm H2D}$ is
\begin{equation}
    \lambda_{\rm H,L}^{\rm H2D}=\lambda_{\rm H}^{-}
\end{equation}
and the upper bound of $\lambda_{\rm H,c}^{\rm H2D}$ is
\begin{equation}
    \lambda_{\rm H,U}^{\rm H2D}=\lambda_{\rm H}^{+}.
\end{equation}
Therefore, we have $\lambda_{\rm H,L}^{\rm H2D}\leq\lambda_{\rm H,c}^{\rm H2D}\leq\lambda_{\rm H,U}^{\rm H2D}$.
\label{lem:criticalH2D}
\end{lemma}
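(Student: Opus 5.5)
The plan is the standard one for establishing a critical threshold from monotonicity together with two-sided bounds. I will define
\[
\lambda_{\rm H,c}^{\rm H2D}\overset{\triangle}{=}\sup\{\lambda_{\rm H}>0:\theta_{\rm H2D}(\lambda_{\rm H})=0\},
\]
and then show that this quantity satisfies both the dichotomy and the two bounds claimed in Lemma \ref{lem:criticalH2D}.

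\textbf{Step 1: the set is nonempty, so the supremum is at least the lower bound.} By Theorem \ref{theo:H2Dlower}, every $\lambda_{\rm H}<\lambda_{\rm H}^{-}$ gives $\theta_{\rm H2D}(\lambda_{\rm H})=0$. Hence the set in the definition contains $(0,\lambda_{\rm H}^{-})$, and so $\lambda_{\rm H,c}^{\rm H2D}\geq\lambda_{\rm H,L}^{\rm H2D}$.

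\textbf{Step 2: the set is bounded, so the supremum is at most the upper bound.} By Theorem \ref{theo:H2Dupper}, every $\lambda_{\rm H}>\lambda_{\rm H}^{+}$ gives $\theta_{\rm H2D}(\lambda_{\rm H})>0$. No such density belongs to the set, so $\lambda_{\rm H,c}^{\rm H2D}\leq\lambda_{\rm H,U}^{\rm H2D}$. Both bounds are finite and positive provided $a<r_{\rm H2D}$, which I take as a standing assumption on the hexagon size.

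\textbf{Step 3: monotonicity turns the supremum into a sharp split.} Lemma \ref{lem:increaseH2D} says $\theta_{\rm H2D}$ is non-decreasing in $\lambda_{\rm H}$.
\begin{itemize}
\item If $\lambda_{\rm H}<\lambda_{\rm H,c}^{\rm H2D}$, there is some $\lambda'\in(\lambda_{\rm H},\lambda_{\rm H,c}^{\rm H2D}]$ with $\theta_{\rm H2D}(\lambda')=0$, by the definition of the supremum. Monotonicity and nonnegativity then force $\theta_{\rm H2D}(\lambda_{\rm H})=0$.
\item If $\lambda_{\rm H}>\lambda_{\rm H,c}^{\rm H2D}$, then $\lambda_{\rm H}$ is not in the set, so $\theta_{\rm H2D}(\lambda_{\rm H})>0$.
\end{itemize}
This gives the dichotomy except at the single point $\lambda_{\rm H}=\lambda_{\rm H,c}^{\rm H2D}$.

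\textbf{Main obstacle: the boundary value $\lambda_{\rm H}=\lambda_{\rm H,c}^{\rm H2D}$.} The statement assigns $\theta_{\rm H2D}=0$ there, and this does not follow from the supremum argument alone.

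My first attempt would be to treat it as the convention implicit in the lemma, so that $\lambda_{\rm H,c}^{\rm H2D}$ is the threshold and the non-strict inequality only records the side on which it is placed.

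For a rigorous treatment, I would appeal to the known result for the Gilbert disk model in $\mathbb{R}^2$: there is no percolation at criticality, by continuum analogues of the RSW/Harris arguments. Since $G_{\rm H2D}$ is exactly a Gilbert graph with connection radius $2r_{\rm H2D}$, this result applies directly and yields $\theta_{\rm H2D}(\lambda_{\rm H,c}^{\rm H2D})=0$.

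Combining Steps 1 through 3 gives $\lambda_{\rm H,L}^{\rm H2D}\leq\lambda_{\rm H,c}^{\rm H2D}\leq\lambda_{\rm H,U}^{\rm H2D}$, which completes the lemma.
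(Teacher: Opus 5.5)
Your proposal is correct and follows the same route as the paper: the paper's proof is a one-line combination of the monotonicity in Lemma~\ref{lem:increaseH2D} with the bounds of Theorems~\ref{theo:H2Dlower} and~\ref{theo:H2Dupper}. You go further by defining $\lambda_{\rm H,c}^{\rm H2D}$ explicitly as a supremum and by citing the classical absence of percolation at criticality for the planar Gilbert disk model, which establishes the endpoint claim $\theta_{\rm H2D}(\lambda_{\rm H,c}^{\rm H2D})=0$; the paper never justifies that claim, and your ``convention'' reading on its own would not establish it either.
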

\begin{IEEEproof}
    Because $\theta_{\rm H2D}(\lambda_{\rm H})$ is a non-decreasing function of $\lambda_{\rm H}$, and $\theta_{\rm H2D}(\lambda_{\rm H})=0$ when $\lambda_{\rm H}<\lambda_{\rm H,L}^{\rm H2D}$ and $\theta_{\rm H2D}(\lambda_{\rm H})>0$ when $\lambda_{\rm H}>\lambda_{\rm H,U}^{\rm H2D}$, the critical value for phase transition of percolation probability exists between the derived lower bound and upper bound.
\end{IEEEproof}

\begin{figure}[ht]
    \centering
    \includegraphics[width=1\linewidth]{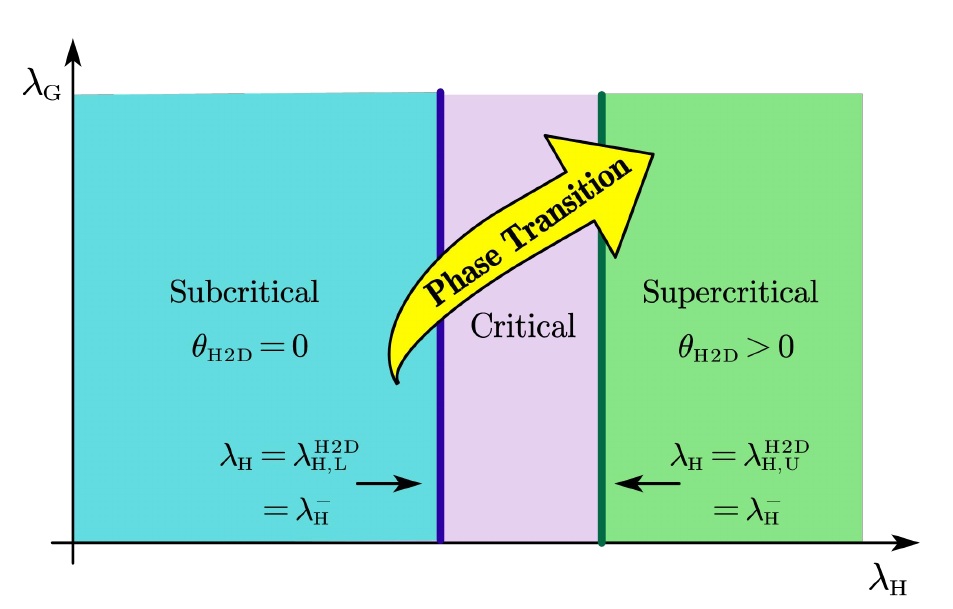}
    \caption{Sub-critical, super-critical, and critical regions for the H2D coverage scheme.}
    \label{fig:criticalregionH2D}
\end{figure}
Notice that the lower bound $\lambda_{\rm H,L}^{\rm H2D}$ and upper bound $\lambda_{\rm H,U}^{\rm H2D}$ are both related to the side length of the designed hexagonal face, we can find out a closed form value of HAPS critical density in the below corollary, which is always located between the lower bound and upper bound.
\begin{cor}
    When the side length of the designed hexagonal faces approaches zero $a\rightarrow 0$, $\lim\limits_{a\rightarrow 0} \lambda_{\rm H,L}^{\rm H2D}=\lim\limits_{a\rightarrow 0} \lambda_{\rm H,U}^{\rm H2D}=\frac{\ln 2}{\pi r_{\rm H2D}^2}$. Therefore, the critical density $\lambda_{\rm H,c}^{\rm H2D}=\frac{\ln 2}{\pi r_{\rm H2D}^2}$ when we consider continuous percolation and the designed hexagons are much smaller then the coverage area of HAPSs.
\end{cor}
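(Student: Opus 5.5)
The plan is to take the limit $a\rightarrow 0$ in the two explicit bounds of Theorems \ref{theo:H2Dlower} and \ref{theo:H2Dupper}, and then pin $\lambda_{\rm H,c}^{\rm H2D}$ between them using Lemma \ref{lem:criticalH2D}.

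\emph{Step 1 (limits).} The maps $a\mapsto \ln 2/(\pi(r_{\rm H2D}\pm a)^2)$ are continuous at $a=0$, because $r_{\rm H2D}>0$. Both therefore converge to $\ln 2/(\pi r_{\rm H2D}^2)$. For the upper bound we restrict to $0<a<r_{\rm H2D}$; this is needed anyway so that the disk of radius $r_{\rm H2D}-a$ used in the super-critical case is meaningful.

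\emph{Step 2 (independence of $a$).} The critical density $\lambda_{\rm H,c}^{\rm H2D}$ is defined through $\theta_{\rm H2D}$ alone, and $\theta_{\rm H2D}$ depends only on the Gilbert graph $G_{\rm H2D}$. Hence $\lambda_{\rm H,c}^{\rm H2D}$ does not depend on the auxiliary hexagon side length $a$. Lemma \ref{lem:criticalH2D} gives, for every admissible $a$,
\begin{equation*}
\frac{\ln 2}{\pi (r_{\rm H2D}+a)^2}\leq \lambda_{\rm H,c}^{\rm H2D}\leq \frac{\ln 2}{\pi (r_{\rm H2D}-a)^2}.
\end{equation*}
Letting $a\rightarrow 0$ on both sides, the squeeze argument yields $\lambda_{\rm H,c}^{\rm H2D}=\ln 2/(\pi r_{\rm H2D}^2)$.

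\emph{Main obstacle.} Step 2 only works if the bounds of Theorems \ref{theo:H2Dlower} and \ref{theo:H2Dupper} are valid uniformly for arbitrarily small $a$. That in turn requires the face-percolation criteria (\ref{Tlclosed}) and (\ref{Tlopen}) to hold for every $a>0$, with the threshold $1/2$ not depending on $a$. I will rely on them as stated, citing \cite{zhaikhan2020safeguarding}. I will also check that the expressions (\ref{Tlclosedformula}) and (\ref{Tlopenformula}) enter only through the monotone maps $\lambda\mapsto e^{-\lambda\pi(r\pm a)^2}$, so that inverting the inequality against $1/2$ produces exactly the stated thresholds for every $a$.

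If one is uneasy about relying on a discrete criterion as $a\rightarrow 0$, the statement should be read as the paper frames it: the continuum-percolation value approximated by fine hexagons. In that case the result is the common limit of the two bounds, and the squeeze is what identifies it with $\lambda_{\rm H,c}^{\rm H2D}$.
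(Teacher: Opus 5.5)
You follow the same route as the paper, which gives no separate proof: let $a\rightarrow 0$ in $\lambda_{\rm H}^{-}$ and $\lambda_{\rm H}^{+}$ from Lemma~\ref{lem:criticalH2D} and squeeze. Your Steps 1--2 are sound as logic; in particular $\lambda_{\rm H,c}^{\rm H2D}$ does not depend on $a$, so valid $a$-dependent bounds could be squeezed. The genuine gap is the point you call the main obstacle and then rely on anyway. The threshold $1/2$ in (\ref{Tlclosed}) and (\ref{Tlopen}) is the critical probability of \emph{independent} site percolation on the triangular lattice (hexagonal faces). Here the hexagon states are strongly dependent. Whether $\mathcal{T}_l$ is covered is decided by the HAPS points in a disk of radius $r_{\rm H2D}+a$, so all hexagons within distance about $2(r_{\rm H2D}+a)$ share randomness. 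The number of such hexagons grows like $(r_{\rm H2D}/a)^2$ as $a\rightarrow 0$. No result gives the threshold $1/2$ for such dependent fields: domination results for finite-range dependent fields (Liggett--Schonmann--Stacey type) need marginals close to $1$, and the required closeness degenerates as the dependence range grows. So the bounds of Theorems~\ref{theo:H2Dlower} and \ref{theo:H2Dupper} are not available uniformly in small $a$, which is exactly what your squeeze needs. Citing \cite{zhaikhan2020safeguarding} does not supply this.

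The gap cannot be repaired, because the conclusion is false. As $a\rightarrow 0$, the probability that $\mathcal{T}_l$ is fully covered tends to the covered area fraction $1-e^{-\lambda_{\rm H}\pi r_{\rm H2D}^2}$. The corollary therefore claims that the Boolean model of radius-$r_{\rm H2D}$ disks (the Gilbert graph with connection radius $2r_{\rm H2D}$) percolates exactly when its covered fraction exceeds $1/2$. In fact the critical intensity of this model is well established numerically: $\lambda_{\rm H,c}^{\rm H2D}\pi r_{\rm H2D}^2\approx 1.128$, not $\ln 2\approx 0.693$. Equivalently, the critical covered fraction is about $0.676$, not $1/2$, and there is no closed form of the stated kind. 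Hence the bound $\lambda_{\rm H}^{+}$ lies below the true critical density whenever $(1-a/r_{\rm H2D})^2>\ln 2/1.128$, e.g.\ for every $a<0.2\,r_{\rm H2D}$. For any $\lambda_{\rm H}$ with $\ln 2<\lambda_{\rm H}\pi r_{\rm H2D}^2<1.128$, Theorem~\ref{theo:H2Dupper} with small $a$ asserts $\theta_{\rm H2D}(\lambda_{\rm H})>0$, while actually $\theta_{\rm H2D}(\lambda_{\rm H})=0$. Only the lower half of your squeeze survives, and it does so because of the known numerical value, not the $1/2$ criterion. The paper's statement has the same defect. Your fallback reading as ``the common limit of the two bounds'' does not help, since that limit is tied to $\lambda_{\rm H,c}^{\rm H2D}$ only through the invalid upper bound. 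Separately, (\ref{Tlopenformula}) as printed lacks the ``$1-$'' and should be an inequality, $\P\{\mathcal{T}_l\;{\rm is\;covered}\}\geq 1-e^{-\lambda_{\rm H}\pi (r_{\rm H2D}-a)^2}$. Only with that correction does inverting against $1/2$ give $\lambda_{\rm H}^{+}$, so your check that it yields ``exactly the stated thresholds'' would fail on the formula as written.
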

However, such a closed-form expression is practical when the considered network is much larger than the coverage radius and the cardinality of giant connected components can be infinite. For finite networks, such a closed-form expression corresponds to a low level of percolation probability, where the percolation probability rapidly increases when the HAPS density grows from this value.

\subsection{HAPS-to-GW-to-Device Coverage}
In the H2G2D coverage scheme, we also introduce the sub-critical case first, where the percolation probability is zero.\\

\begin{figure}[ht]
    \centering
    \includegraphics[width=0.7\linewidth]{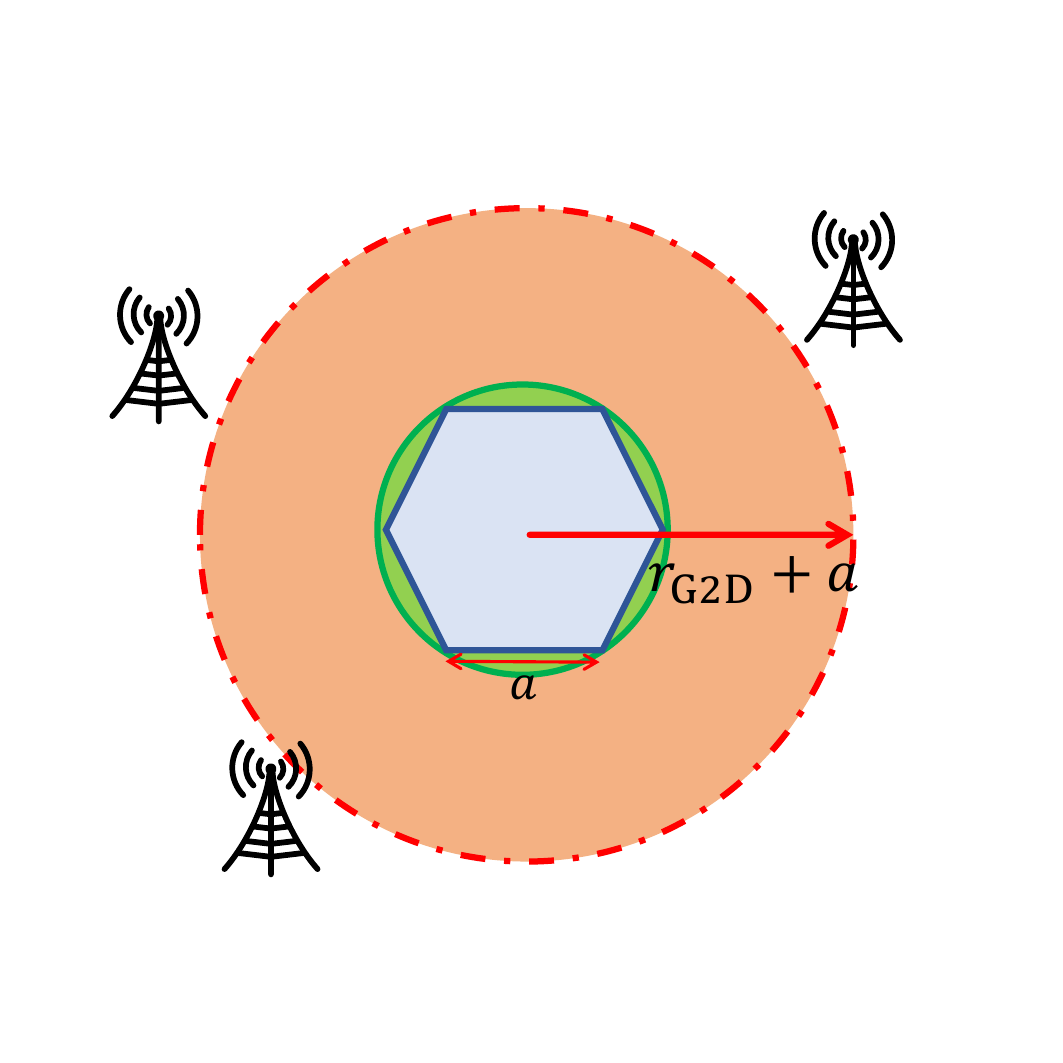}
    \caption{A closed hexagonal face $\mathcal{H}_l$ in the H2G2D coverage scheme. The GWs can not cover the MCC $\mathcal{T}_l$ so that the face $\mathcal{H}_l$ is closed.}
    \label{fig:H2G2DClosed}
\end{figure}

\textbf{Sub-critical Case:} Consider the MCC $\mathcal{T}_l$ of a hexagon $\mathcal{H}_l$. The radius of $\mathcal{T}_l$ is the same as the side length of $\mathcal{H}_l$. If there is no GWs inside the circular area with radius $r_{\rm G2D}+a$, any device inside $\mathcal{T}_l$ can not be covered. Therefore, we can obtain a lower bound of $\P\{\mathcal{T}_l\;{\rm is\;not\;covered}\}$:
\begin{equation}
    \P\{\mathcal{T}_l\;{\rm is\;not\;covered}\}\geq e^{-\lambda_{\rm G}(r_{\rm G2D}+a)^2}.
\label{TlclosedH2G2D}
\end{equation}
Based on this, we introduce the sufficient condition for zero percolation probability in Theorem \ref{theo:H2G2Dlower}.
\begin{theorem}
    When the GW density $\lambda_{\rm G}$ satisfies $\lambda_{\rm G}<\lambda_{\rm G,L}^{\rm H2G2D}(\lambda_{\rm H})$, where:
\begin{equation}
    \lambda_{\rm G,L}^{\rm H2G2D}(\lambda_{\rm H})=\lambda_{\rm G}^{-}\overset{\triangle}{=}\frac{\ln 2}{\pi(r_{\rm G2D}+a)^2},
\label{H2G2Dlowercond}
\end{equation}
the percolation probability of the H2G2D coverage scheme is zero, \ie $\theta_{\rm H2G2D}(\lambda_{\rm H},\lambda_{\rm G})=0$.
\label{theo:H2G2Dlower}
\end{theorem}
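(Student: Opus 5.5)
The plan mirrors the proof of Theorem~\ref{theo:H2Dlower}: combine the bound (\ref{TlclosedH2G2D}) with the sufficient condition (\ref{Tlclosed}). The only difference is that in the H2G2D scheme a point is covered only by a GW in $\Phi_c\subseteq\Phi$, not by an arbitrary GW. So we work with an inequality rather than an equality.

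\textbf{Step 1: reduce to an event about $\Phi$ alone.} Every vertex of $G_{\rm H2G2D}$ lies in $\Phi_c\subseteq\Phi$. A point of $\mathcal{T}_l$ is covered only if some $\xi\in\Phi_c$ lies within $r_{\rm G2D}$ of it. Any such GW lies inside the disk $D_l$ of radius $r_{\rm G2D}+a$ centred at the centre of $\mathcal{T}_l$. Hence
\[
\{\Phi(D_l)=0\}\subseteq\{\Phi_c(D_l)=0\}\subseteq\{\mathcal{T}_l\;{\rm is\;not\;covered}\}.
\]
The first event involves only the PPP $\Phi$ of density $\lambda_{\rm G}$, so its probability is the void probability $e^{-\lambda_{\rm G}\pi(r_{\rm G2D}+a)^2}$. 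This is (\ref{TlclosedH2G2D}), where the factor $\pi$ should appear in the exponent. The bound holds for every $\lambda_{\rm H}$, which is why the threshold in (\ref{H2G2Dlowercond}) does not actually depend on $\lambda_{\rm H}$.

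\textbf{Step 2: invert the condition.} If $\lambda_{\rm G}<\lambda_{\rm G}^{-}$, then $\lambda_{\rm G}\pi(r_{\rm G2D}+a)^2<\ln 2$. It follows that
\[
\P\{\mathcal{T}_l\;{\rm is\;not\;covered}\}\geq e^{-\lambda_{\rm G}\pi(r_{\rm G2D}+a)^2}>\tfrac12,
\]
so (\ref{Tlclosed}) holds for every face.

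\textbf{Step 3: conclude.} By the discussion preceding (\ref{Tlclosed}), each hexagon is then closed with probability larger than $1/2$. This yields a closed circuit surrounding the origin, through which the component $K_{\rm H2G2D}(0)$ cannot extend, so $\theta_{\rm H2G2D}(\lambda_{\rm H},\lambda_{\rm G})=0$.

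\textbf{Main obstacle.} The delicate point is that, unlike the H2D case, the closed/open states of hexagons are not independent. They are driven by the same PPP, and $\Phi_c$ adds long-range dependence through multi-hop G2G links and HAPS coverage. The paper's framework (citing \cite{zhaikhan2020safeguarding}) uses the face-percolation criterion with threshold $1/2$ directly. I would inherit it in the same way: the event $\{\Phi(D_l)=0\}$ is determined by $\Phi$ restricted to a bounded region, which gives a finite-range dependent face model, and the cited criterion then applies. Making this rigorous would require a domination argument for finitely dependent site percolation on the hexagonal faces. That is where I expect the real difficulty to lie, but for the purposes of this statement I would invoke the established criterion exactly as in Theorem~\ref{theo:H2Dlower}.
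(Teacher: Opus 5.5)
Your proposal is correct and follows the paper's own proof. The paper likewise substitutes the void-probability bound (\ref{TlclosedH2G2D}) into (\ref{Tlclosed}), and its accompanying remark that $G_{\rm H2G2D}$ is a subgraph of the Gilbert graph on all of $\Phi$ is exactly your inclusion $\Phi_c\subseteq\Phi$; you are also right that the exponent in (\ref{TlclosedH2G2D}) is missing a factor $\pi$.

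The dependence issue you flag is shared with the paper, which applies the threshold-$1/2$ face criterion of \cite{zhaikhan2020safeguarding} without further justification. Note, however, that a finite-range domination argument would not by itself recover the threshold $1/2$, so the constant $\ln 2$ rests on that criterion rather than on a rigorous reduction.
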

\begin{IEEEproof}
    Substituting the lower bound in (\ref{TlclosedH2G2D}) into (\ref{Tlclosed}), we can obtain the sufficient condition for zero percolation probability in the H2G2D coverage scheme.
\end{IEEEproof}
    Especially, the random graph $G_{\rm H2G2D}$ is a subset of the random graph $G=\{V,E\}$, where $V=\Phi$ contains all GWs and $E$ represents whether the GWs in $V$ have coverage overlapping. The sufficient condition for zero percolation probability in $G$ is (\ref{H2G2Dlowercond}), which is also the sufficient condition for no percolation in $G_{\rm H2G2D}$.

Next, we introduce the super-critical case where the percolation probability is non-zero.

\begin{figure}[ht]
    \centering
    \includegraphics[width=0.7\linewidth]{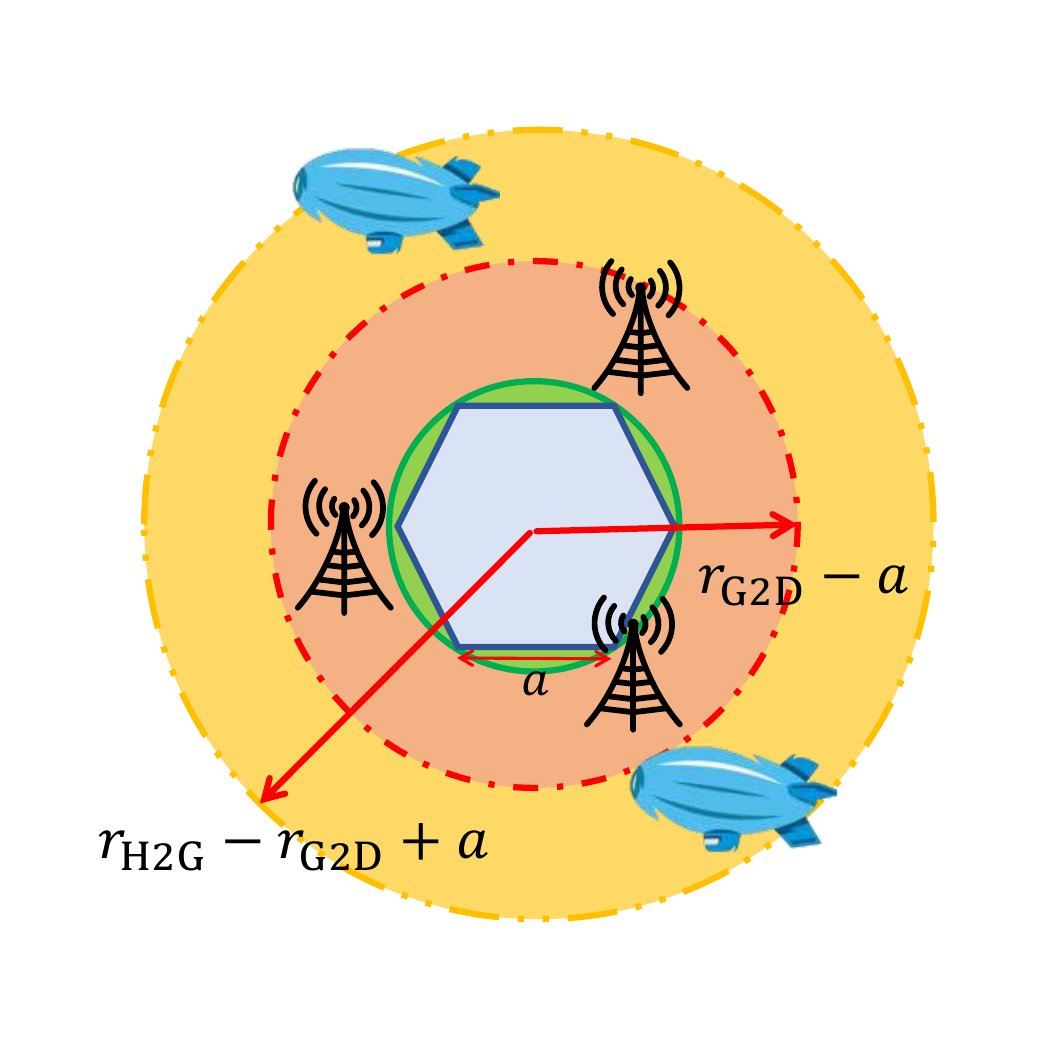}
    \caption{An open hexagonal face $\mathcal{H}_l$ in the H2G2D coverage scheme. The HAPSs can cover the GWs in the designed area, and the GWs in the designed area can cover the MCC $\mathcal{T}_l$ so that the face $\mathcal{H}_l$ is open.}
    \label{fig:H2G2DOpen}
\end{figure}
\textbf{Super-critical Case:} Consider the MCC $\mathcal{T}
_l$ of the hexagon $\mathcal{H}_l$. If there exists at least one GW inside the circular area with radius $r_{\rm G2D}-a$ and at least one HAPS inside the circular area with radius $r_{\rm H2G}-r_{\rm G2D}+a$, any device inside $\mathcal{T}_l$ can be covered. Therefore, we can obtain a lower bound of $\P\{\mathcal{T}_l\;{\rm is\;covered}\}$:
\begin{equation}
\begin{array}{r@{}l}
    \P&\{\mathcal{T}_l\;{\rm is\;covered}\}\\
    &\geq (1-e^{-\lambda_{\rm G}\pi (r_{\rm G2D}-a)^2})(1-e^{-\lambda_{\rm H}\pi (r_{\rm H2G}-r_{\rm G2D}+a)^2}).
\end{array}
\label{TlopenH2G2D}
\end{equation}

Based on this, we obtain the sufficient condition for non-zero percolation probability in Theorem \ref{theo:H2G2Dupper}.
\begin{theorem}
For a fixed value of HAPS density $\lambda_{\rm H}$, when the GW density $\lambda_{\rm G}$ satisfies $\lambda_{\rm G}>\lambda_{\rm G,U}^{\rm H2G2D}(\lambda_{\rm H})$, where:
\begin{equation}
    \lambda_{\rm G,U}^{\rm H2G2D}(\lambda_{\rm H})=\frac{\displaystyle\ln (1+\frac{\frac{1}{2}}{\frac{1}{2}-e^{-\lambda_{\rm H}\pi(r_{\rm H2G}-r_{\rm G2W}+a)^2}})}{\pi(r_{\rm G2D}-a)^2}
\end{equation}
where
\begin{equation}
    \lambda_{\rm H}>\lambda_{\rm H}^{*}
\end{equation}
and
\begin{equation}
    \lambda_{\rm H}^{*}\overset{\triangle}{=}\frac{\ln 2}{\pi (r_{\rm H2G}-r_{\rm G2D}+a)^2},
\end{equation}
percolation probability is non-zero, \ie $\theta_{\rm H2G2D}(\lambda_{\rm H},\lambda_{\rm G})>0$.\\
\indent For a fixed value of GW density $\lambda_{\rm G}>\lambda_{\rm G}^{+}$, where
\begin{equation}
    \lambda_{\rm G}^{+}\overset{\triangle}{=}\frac{\ln 2}{\pi(r_{\rm G2D}-a)^2},
\end{equation}
the sufficient condition for $\theta_{\rm H2G2D}(\lambda_{\rm H},\lambda_{\rm G})>0$ can be rewritten as: 

\begin{equation}
    \lambda_{\rm H}>{\lambda_{\rm G,U}^{\rm H2G2D}}^{-1}(\lambda_{\rm G})\overset{\triangle}{=}\frac{\displaystyle\ln (1+\frac{\frac{1}{2}}{\frac{1}{2}-e^{-\lambda_{\rm G}\pi(r_{\rm H2G}-a)^2}})}{\pi(r_{\rm H2G}-r_{\rm G2D}+a)^2}.
\end{equation}

\label{theo:H2G2Dupper}
\end{theorem}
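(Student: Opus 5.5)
The plan mirrors the proof of Theorem~\ref{theo:H2Dupper}. Substitute the product lower bound (\ref{TlopenH2G2D}) on $\P\{\mathcal{T}_l\;{\rm is\;covered}\}$ into the sufficient condition (\ref{Tlopen}). Any pair $(\lambda_{\rm H},\lambda_{\rm G})$ for which the right-hand side of (\ref{TlopenH2G2D}) exceeds $\tfrac12$ then gives $\theta_{\rm H2G2D}(\lambda_{\rm H},\lambda_{\rm G})>0$. To lighten notation, write
\[
p_{\rm G}=e^{-\lambda_{\rm G}\pi(r_{\rm G2D}-a)^2},\qquad p_{\rm H}=e^{-\lambda_{\rm H}\pi(r_{\rm H2G}-r_{\rm G2D}+a)^2}.
\]
The condition to be solved is then $(1-p_{\rm G})(1-p_{\rm H})>\tfrac12$.

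Before solving it, I will justify (\ref{TlopenH2G2D}) geometrically. Suppose a GW lies within distance $r_{\rm G2D}-a$ of the center of $\mathcal{T}_l$. Its disk $\mathcal{C}_i$ then contains $\mathcal{T}_l$, because every point of $\mathcal{T}_l$ is within $a$ of the center. Now suppose a HAPS projection lies within $r_{\rm H2G}-r_{\rm G2D}+a$ of the center. By the triangle inequality, that HAPS is within $r_{\rm H2G}$ of the GW, so the GW belongs to $\Phi_d\subseteq\Phi_c$. Because $\Phi$ and $\Psi$ are independent PPPs, the two void probabilities factor, which gives the product form.

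For the first part, fix $\lambda_{\rm H}$ and solve for $\lambda_{\rm G}$. The condition is equivalent to $1-p_{\rm G}>\frac{1/2}{1-p_{\rm H}}$. For the right-hand side to be below $1$ we need $p_{\rm H}<\tfrac12$, which is exactly $\lambda_{\rm H}>\lambda_{\rm H}^{*}$. Under that assumption,
\[
p_{\rm G}<1-\frac{1/2}{1-p_{\rm H}}=\frac{1/2-p_{\rm H}}{1-p_{\rm H}} .
\]
Taking logarithms yields
\[
\lambda_{\rm G}\pi(r_{\rm G2D}-a)^2>\ln\frac{1-p_{\rm H}}{1/2-p_{\rm H}}=\ln\Bigl(1+\frac{1/2}{1/2-p_{\rm H}}\Bigr),
\]
which is the stated threshold $\lambda_{\rm G,U}^{\rm H2G2D}(\lambda_{\rm H})$. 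Here the subscript ``G2W'' in the statement is read as ``G2D''. If $\lambda_{\rm H}\le\lambda_{\rm H}^{*}$, no finite $\lambda_{\rm G}$ can satisfy the bound, which explains the restriction in the statement.

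For the second part the roles are symmetric. Fix $\lambda_{\rm G}>\lambda_{\rm G}^{+}$, so that $p_{\rm G}<\tfrac12$, and solve for $\lambda_{\rm H}$. This gives
\[
\lambda_{\rm H}\pi(r_{\rm H2G}-r_{\rm G2D}+a)^2>\ln\Bigl(1+\frac{1/2}{1/2-p_{\rm G}}\Bigr).
\]

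The main obstacle is the exponent inside the inverse formula. The statement writes $e^{-\lambda_{\rm G}\pi(r_{\rm H2G}-a)^2}$, whereas the algebra naturally produces $p_{\rm G}$, with radius $r_{\rm G2D}-a$. I would derive the expression with $p_{\rm G}$ and note that this is what inverting the first expression actually gives. Beyond that, the only genuine care needed is the geometric containment step and the positivity of $r_{\rm G2D}-a$ and $r_{\rm H2G}-r_{\rm G2D}+a$. We implicitly assume $a<r_{\rm G2D}$ and $r_{\rm H2G}>r_{\rm G2D}$.
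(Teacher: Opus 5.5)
Your proposal is correct and follows the paper's argument: the paper substitutes the product lower bound (\ref{TlopenH2G2D}) into the sufficient condition (\ref{Tlopen}) and solves for $\lambda_{\rm G}$ (respectively $\lambda_{\rm H}$), and your triangle-inequality and independence justification is the same reasoning the paper gives just before the theorem. Your readings of ``G2W'' as ``G2D'' and of the exponent in the inverse formula as $-\lambda_{\rm G}\pi(r_{\rm G2D}-a)^2$ are also right: the condition $(1-e^{-\lambda_{\rm G}\pi(r_{\rm G2D}-a)^2})(1-e^{-\lambda_{\rm H}\pi(r_{\rm H2G}-r_{\rm G2D}+a)^2})>\tfrac12$ is symmetric in its two factors, and the hypothesis $\lambda_{\rm G}>\lambda_{\rm G}^{+}$ is itself defined through $r_{\rm G2D}-a$.
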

\begin{IEEEproof}
    Substituting the lower bound in (\ref{TlopenH2G2D}) into (\ref{Tlopen}), we can obtain the sufficient condition for non-zero percolation probability in the H2G2D coverage scheme.
\end{IEEEproof}
    Especially, the super-critical case corresponds to a random graph $G_d'=\{V_d',E_d'\}$, where $V_d'\subseteq\Phi_d$, $\Phi_d\subseteq V_{\rm H2G2D}$ and $E_d'$ represents whether the GWs in $V_d'$ have coverage overlapping. Therefore, $G_d'$ is a subset of $G_{\rm H2G2D}$, and the sufficient condition for non-zero percolation probability in $G_d'$ is (\ref{H2G2Dlowercond}), which is also the sufficient condition for non-zero percolation probability in $G_{\rm H2G2D}$. When $r_{\rm H2G}$ approaches infinity, all GWs can be covered by HAPSs. In this case, the sufficient condition for non-zero percolation probability is only related to the GW density.
\begin{remark}
    When $r_{\rm H2G}\rightarrow \infty$, the sufficient condition for non-zero percolation probability in the H2G2D coverage scheme approaches:
\begin{equation}
    \lambda_{\rm G}>\lambda_{\rm G}^{+},
\end{equation}
where $\lambda_{\rm H}>0$. This is also the upper bound for the simple Gilbert disk model of GWs, because all GWs can be covered by HAPS directly. Also, $\lambda_{\rm G}^{+}$ is the lower bound of $\lambda_{\rm G,L}^{\rm H2G2D}(\lambda_{\rm H})$.
    
\end{remark}
To prove that the critical condition for phase transition of percolation probability exists, we introduce the relationship between HAPS density, GW density and percolation probability in Lemma \ref{lem:increaseH2G2D}.
\begin{lemma}
For a fixed value of HAPS density $\lambda_{\rm H,1}$, the percolation probability $\theta_{\rm H2G2D}$ does not decrease when the GW density increases, that is
\begin{equation}
    \theta_{\rm H2G2D}(\lambda_{\rm H,1},\lambda_{\rm G,2})\geq \theta_{\rm H2G2D}(\lambda_{\rm H,1},\lambda_{\rm G,1})\;{\rm if}\;\lambda_{\rm G,2}>\lambda_{\rm G,1}.
\end{equation}
For a fixed value of GW density $\lambda_{\rm G,1}$, the percolation probability $\theta_{\rm H2G2D}$ does not decrease when the HAPS density increases, that is
\begin{equation}
    \theta_{\rm H2G2D}(\lambda_{\rm H,2},\lambda_{\rm G,1})\geq \theta_{\rm H2G2D}(\lambda_{\rm H,1},\lambda_{\rm G,1})\;{\rm if}\;\lambda_{\rm H,2}>\lambda_{\rm H,1}.
\end{equation}

\label{lem:increaseH2G2D}
\end{lemma}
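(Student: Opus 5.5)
The plan is a standard coupling argument via superposition (thinning) of Poisson point processes, combined with monotonicity of the event $\{|K_{\rm H2G2D}(0)|=\infty\}$ in the underlying configuration. The same strategy is used for Lemma \ref{lem:increaseH2D}. Here it is applied separately to each of the two independent PPPs $\Phi$ and $\Psi$.

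For the first claim, fix $\lambda_{\rm H,1}$ and let $\lambda_{\rm G,2}>\lambda_{\rm G,1}$. Realize $\Phi^{(2)}=\Phi^{(1)}\cup\Phi'$, where $\Phi^{(1)}$ is a PPP of density $\lambda_{\rm G,1}$, $\Phi'$ is an independent PPP of density $\lambda_{\rm G,2}-\lambda_{\rm G,1}$, and $\Psi$ is common to both. By the superposition theorem, $\Phi^{(2)}$ is a PPP of density $\lambda_{\rm G,2}$, so both percolation probabilities can be computed on this common probability space.

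The key deterministic step is to show that adding GWs (or HAPSs) can only enlarge the graph $G_{\rm H2G2D}$, in the sense that $V_{\rm H2G2D}$ and $E_{\rm H2G2D}$ are non-decreasing under inclusion of point configurations. I would check each piece in turn.

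First, $\Phi_d^{(1)}\subseteq\Phi_d^{(2)}$, since $\Phi_d$ consists of points of $\Phi$ inside the union $\bigcup_j\mathcal{B}_j$, and that union depends only on $\Psi$.

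Second, $\Phi_{id}^{(1)}\subseteq\Phi_c^{(2)}$. Take any $\xi\in\Phi_{id}^{(1)}$ with its multi-hop chain $\xi^0,\dots,\xi^N$ in $\Phi^{(1)}$, where $\xi^N\in\Phi_d^{(1)}\subseteq\Phi_d^{(2)}$ and consecutive distances are at most $r_{\rm G2G}$. Some of the intermediate points may now belong to $\Phi_d^{(2)}$. In that case, truncate the chain at the first index $k$ with $\xi^k\in\Phi_d^{(2)}$. This gives either $\xi\in\Phi_d^{(2)}$ (if $k=0$) or $\xi\in\Phi_{id}^{(2)}$ (if $k\ge1$). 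This truncation is the only subtle point, because the definition of $\Phi_{id}$ requires the intermediate hops to lie in $\Phi\backslash\Phi_d$.

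Third, $E_{\rm H2G2D}$ is defined purely by the distance threshold $2r_{\rm G2D}$ among vertices. The edge set is therefore induced by the vertex set and grows with it.

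Together these give $G^{(1)}_{\rm H2G2D}\subseteq G^{(2)}_{\rm H2G2D}$. Hence the component containing the origin (equivalently, the component of GWs covering the origin) satisfies $K^{(1)}(0)\subseteq K^{(2)}(0)$. So $\{|K^{(1)}(0)|=\infty\}\subseteq\{|K^{(2)}(0)|=\infty\}$, and taking probabilities yields the first inequality.

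For the second claim, fix $\lambda_{\rm G,1}$ and write $\Psi^{(2)}=\Psi^{(1)}\cup\Psi'$, with $\Phi$ common to both. Then $\bigcup_j\mathcal{B}_j$ grows, so $\Phi_d^{(1)}\subseteq\Phi_d^{(2)}$. The same chain-truncation argument gives $\Phi_c^{(1)}\subseteq\Phi_c^{(2)}$, and the conclusion follows exactly as before.

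The main obstacle is the truncation step showing $\Phi_c$ is monotone. It is not a literal subset relation for $\Phi_{id}$, since a point can move from $\Phi_{id}$ into $\Phi_d$. For this reason I would state and prove the monotonicity directly for $\Phi_c=\Phi_d\cup\Phi_{id}$, characterized as the set of GWs that have a G2G path (of any length $N\ge0$) to some point of $\Phi_d$. That characterization is manifestly monotone in both $\Phi$ and $\Psi$.
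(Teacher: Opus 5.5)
Your proposal is correct and follows the paper's proof. Both couple the processes by thinning/superposition with the other process held fixed, show $\Phi_{c,1}\subseteq\Phi_{c,2}$, and deduce nested vertex and edge sets, nested components containing the origin, and the probability inequality; in the HAPS case the paper likewise only claims $\Phi_{id,1}\subseteq\Phi_{d,2}\cup\Phi_{id,2}$, which is exactly what your chain truncation justifies. One small correction: in the GW-density case the truncation is vacuous, because with $\Psi$ fixed the union $\bigcup_j\mathcal{B}_j$ does not change, so no intermediate hop can enter $\Phi_d^{(2)}$ and $\Phi_{id}^{(1)}\subseteq\Phi_{id}^{(2)}$ holds literally, as the paper states.
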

\begin{IEEEproof}
    See Appendix~\ref{app:increaseH2G2D}.
\end{IEEEproof}

Therefore, with the derived sub-critical and super-critical regions, we can prove the existence of the critical condition of HAPS density and GW density for phase transition of percolation probability in the H2G2D coverage scheme, which is shown in Lemma \ref{lem:criticalH2G2D}.

\begin{lemma}For the H2G2D coverage scheme, as shown in Fig. \ref{fig:criticalregionH2G2D}, the critical condition for phase transition of percolation probability can be described as an implicit function $\mathcal{F}(\lambda_{\rm H},\lambda_{\rm G})=0$, whose curve is between the super-critical region and sub-critical region. It has four properties: (i) the increase in $\lambda_{\rm H}$ does not lead to a higher critical value of $\lambda_{\rm G}$, (ii) the increase in $\lambda_{\rm G}$ does not lead to a higher critical value of $\lambda_{\rm H}$, (iii) for $\lambda_{\rm H}>\lambda_{\rm H}^{*}$, the critical value of $\lambda_{\rm G}$ is between $\lambda_{\rm G,L}^{\rm H2G2D}(\lambda_{\rm H})$ and $\lambda_{\rm G,U}^{\rm H2G2D}(\lambda_{\rm H})$, (iv) for $\lambda_{\rm G}>\lambda_{\rm G}^{+}$, the critical value of $\lambda_{\rm H}$ is between 0 and  ${\lambda_{\rm G,U}^{\rm H2G2D}}^{-1}(\lambda_{\rm G})$.
\label{lem:criticalH2G2D}
\end{lemma}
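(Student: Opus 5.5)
I would prove Lemma \ref{lem:criticalH2G2D} by combining the coordinatewise monotonicity of Lemma \ref{lem:increaseH2G2D} with the sub- and super-critical regions given by Theorem \ref{theo:H2G2Dlower} and Theorem \ref{theo:H2G2Dupper}. The curve is defined through two critical-value functions. For fixed $\lambda_{\rm H}>0$ let
\begin{equation}
\lambda_{\rm G,c}(\lambda_{\rm H})\overset{\triangle}{=}\sup\{\lambda_{\rm G}>0:\theta_{\rm H2G2D}(\lambda_{\rm H},\lambda_{\rm G})=0\}.
\end{equation}
For fixed $\lambda_{\rm G}>0$ define $\lambda_{\rm H,c}(\lambda_{\rm G})$ analogously, as a supremum over $\lambda_{\rm H}$. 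The first part of Lemma \ref{lem:increaseH2G2D} says that $\lambda_{\rm G}\mapsto\theta_{\rm H2G2D}(\lambda_{\rm H},\lambda_{\rm G})$ is non-decreasing. Hence the zero set in $\lambda_{\rm G}$ is an interval starting at $0$, and $\theta_{\rm H2G2D}>0$ exactly above $\lambda_{\rm G,c}(\lambda_{\rm H})$. This is the same argument as in Lemma \ref{lem:criticalH2D}. The same reasoning applies in the $\lambda_{\rm H}$ direction. The set $\mathcal{F}(\lambda_{\rm H},\lambda_{\rm G})=0$ is then the boundary of the region $\{\theta_{\rm H2G2D}>0\}$ in the positive quadrant. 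By monotonicity in both coordinates, this region is an up-set: if a point is in it, so is every point with both coordinates at least as large. Its boundary therefore separates the sub-critical region from the super-critical region.

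For properties (i) and (ii), take $\lambda_{\rm H,2}>\lambda_{\rm H,1}$ and any $\lambda_{\rm G}>\lambda_{\rm G,c}(\lambda_{\rm H,1})$. Then $\theta_{\rm H2G2D}(\lambda_{\rm H,1},\lambda_{\rm G})>0$. The second part of Lemma \ref{lem:increaseH2G2D} gives $\theta_{\rm H2G2D}(\lambda_{\rm H,2},\lambda_{\rm G})>0$. So every such $\lambda_{\rm G}$ is at least $\lambda_{\rm G,c}(\lambda_{\rm H,2})$, which gives $\lambda_{\rm G,c}(\lambda_{\rm H,2})\le\lambda_{\rm G,c}(\lambda_{\rm H,1})$. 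Property (ii) follows symmetrically with the roles of the two coordinates swapped.

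For property (iii), fix $\lambda_{\rm H}>\lambda_{\rm H}^{*}$. This condition makes $e^{-\lambda_{\rm H}\pi(r_{\rm H2G}-r_{\rm G2D}+a)^2}<\frac{1}{2}$, so $\lambda_{\rm G,U}^{\rm H2G2D}(\lambda_{\rm H})$ is finite and positive. Theorem \ref{theo:H2G2Dlower} gives $\theta_{\rm H2G2D}=0$ for all $\lambda_{\rm G}<\lambda_{\rm G,L}^{\rm H2G2D}(\lambda_{\rm H})$, hence $\lambda_{\rm G,c}\ge\lambda_{\rm G,L}^{\rm H2G2D}$. Theorem \ref{theo:H2G2Dupper} gives $\theta_{\rm H2G2D}>0$ for all $\lambda_{\rm G}>\lambda_{\rm G,U}^{\rm H2G2D}(\lambda_{\rm H})$, hence $\lambda_{\rm G,c}\le\lambda_{\rm G,U}^{\rm H2G2D}$. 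For property (iv), fix $\lambda_{\rm G}>\lambda_{\rm G}^{+}$. Then $e^{-\lambda_{\rm G}\pi(r_{\rm G2D}-a)^2}<\frac{1}{2}$, which keeps the expression ${\lambda_{\rm G,U}^{\rm H2G2D}}^{-1}(\lambda_{\rm G})$ well defined. The second half of Theorem \ref{theo:H2G2Dupper} then gives $\theta_{\rm H2G2D}>0$ for all $\lambda_{\rm H}>{\lambda_{\rm G,U}^{\rm H2G2D}}^{-1}(\lambda_{\rm G})$. Hence $0\le\lambda_{\rm H,c}(\lambda_{\rm G})\le{\lambda_{\rm G,U}^{\rm H2G2D}}^{-1}(\lambda_{\rm G})$; the lower bound is $0$ because no positive HAPS-density threshold for zero percolation was derived.

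The main obstacle is making ``the curve $\mathcal{F}=0$'' precise, because no continuity of $\theta_{\rm H2G2D}$ is available. I would define $\mathcal{F}$ as the boundary of the up-set $\{\theta_{\rm H2G2D}>0\}$ rather than as a level set. I would then check that the two parametrizations, $\lambda_{\rm G,c}(\cdot)$ and $\lambda_{\rm H,c}(\cdot)$, describe the same monotone boundary. This amounts to a generalized inverse of a non-increasing function, and the up-set property from Lemma \ref{lem:increaseH2G2D} gives it. Whether the boundary points themselves have $\theta_{\rm H2G2D}=0$ or $\theta_{\rm H2G2D}>0$ does not matter here, since none of (i)--(iv) depends on it.
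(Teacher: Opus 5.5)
Your proof is correct, but it is organized differently from the paper's Appendix~\ref{app:criticalH2G2D}.

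The paper starts from a zero-percolation point $(\lambda_{\rm H,1},\lambda_{\rm G,1})$ and splits into three cases (I, II, III). The split depends on whether raising both densities, only $\lambda_{\rm H}$, or only $\lambda_{\rm G}$ makes $\theta_{\rm H2G2D}$ positive. From this it reads off the shape of the critical curve: non-increasing pieces from case I, and vertical and horizontal saturation pieces from cases II and III, where the requirement on one density becomes constant. It collects all such points into $\mathcal{F}(\lambda_{\rm H},\lambda_{\rm G})=0$. It then places the curve relative to the region $\lambda_{\rm G}<\lambda_{\rm G}^{-}$ and the region $\lambda_{\rm H}>\lambda_{\rm H}^{*}$, $\lambda_{\rm G}>\lambda_{\rm G}^{+}$.

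You instead define the curve directly as the boundary of the up-set $\{\theta_{\rm H2G2D}>0\}$, with the critical values given as suprema. Then (i) and (ii) are one-line consequences of Lemma~\ref{lem:increaseH2G2D}, and (iii) and (iv) are a sandwich between Theorems~\ref{theo:H2G2Dlower} and~\ref{theo:H2G2Dupper}.

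What your route buys is precision. It needs no continuity of $\theta_{\rm H2G2D}$ and does not care about its value on the boundary. It also avoids the index slips in the paper's case I: the stated zero set $\theta_{\rm H2G2D}(\lambda_{\rm H},\lambda_{\rm G,2})=0$ for $\lambda_{\rm H}\le\lambda_{\rm H}'$ with $\lambda_{\rm H}'\ge\lambda_{\rm H,1}$ contradicts that case's own hypothesis. What the paper's case analysis buys is a qualitative picture of the saturation segments seen in Fig.~\ref{fig:criticalregionH2G2D} and the heatmaps, which (i)--(iv) do not require.

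Two details are worth making explicit.
\begin{enumerate}
\item Adopt $\sup\emptyset:=0$ for $\lambda_{\rm H,c}(\lambda_{\rm G})$. For large $\lambda_{\rm G}$ the zero set in $\lambda_{\rm H}$ can be empty, and this is exactly why the lower bound in (iv) is $0$.
\item Your well-definedness condition $e^{-\lambda_{\rm G}\pi(r_{\rm G2D}-a)^2}<\frac{1}{2}$ is what solving (\ref{TlopenH2G2D}) for $\lambda_{\rm H}$ actually produces. Theorem~\ref{theo:H2G2Dupper} prints $r_{\rm H2G}-a$ in that exponent, which appears to be a typo.
\end{enumerate}
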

\begin{IEEEproof}
    See Appendix~\ref{app:criticalH2G2D}.
\end{IEEEproof}
Even though the expression of $\mathcal{F}(\lambda_{\rm H},\lambda_{\rm G})$ can not be obtained, the characteristics of the critical curve can be verified in simulations.
\begin{figure}[ht]
    \centering
    \includegraphics[width=1\linewidth]{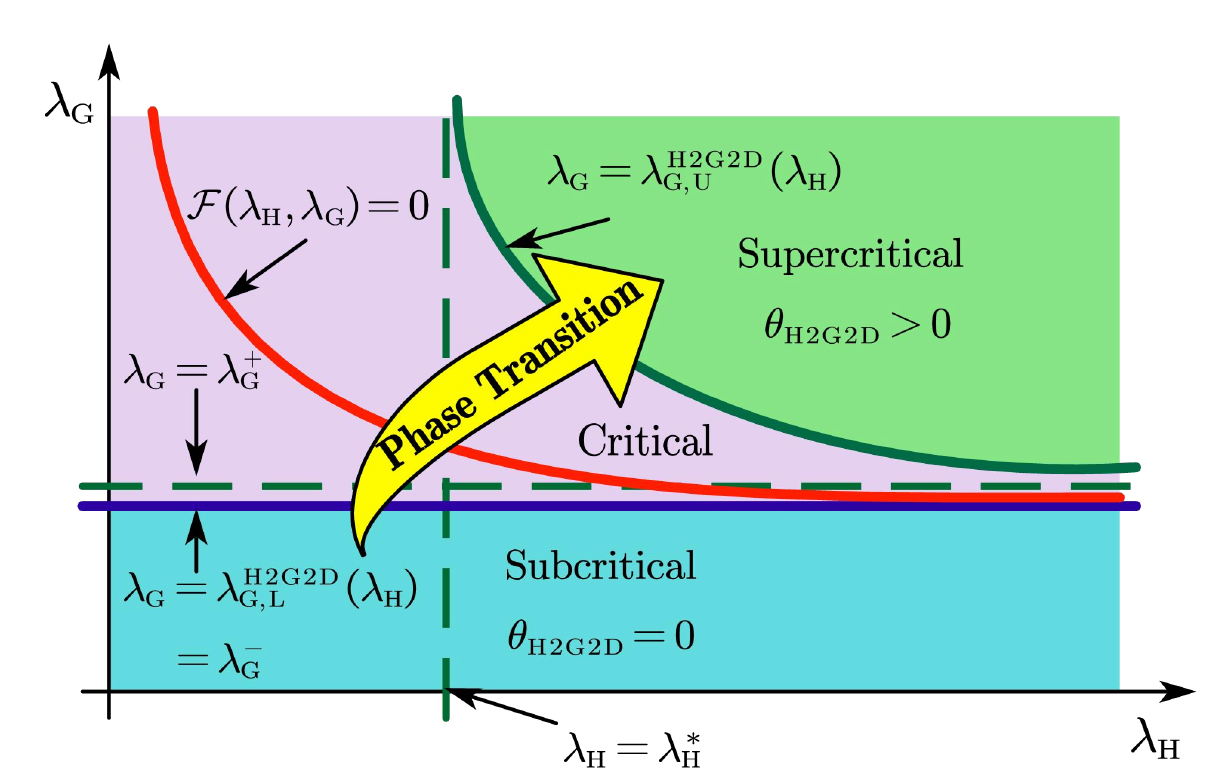}
    \caption{Sub-critical, super-critical, and critical regions for the H2G2D coverage scheme.}
    \label{fig:criticalregionH2G2D}
\end{figure}
\subsection{Hybrid Coverage}
In the hybrid coverage scheme, we discuss the sub-critical case first where the percolation probability is zero.\\

\begin{figure}[ht]
    \centering
    \includegraphics[width=0.7\linewidth]{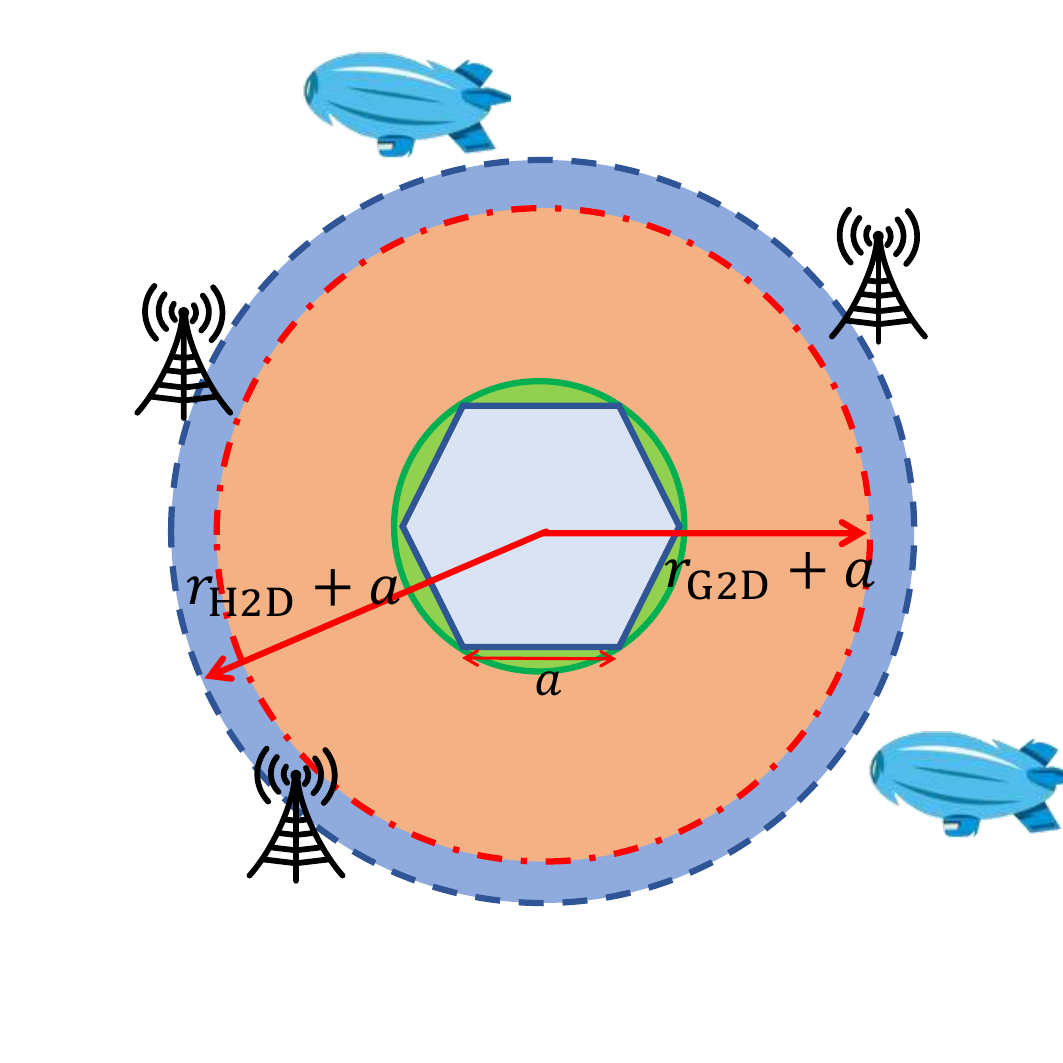}
    \caption{A closed hexagonal face $\mathcal{H}_l$ in the hybrid coverage scheme. The HAPSs and GWs are distributed outside their designed areas, respectively. Therefore, the MCC $\mathcal{T}_l$ can not be covered through H2D coverage or H2G2D coverage so that the face $\mathcal{H}_l$ is closed.}
    \label{fig:HybridClosed}
\end{figure}
\textbf{Sub-critical Case:} In the hybrid coverage scheme, we can jointly consider the H2D coverage and H2G2D coverage. Conside the MCC $\mathcal{T}_l$ of a hexagon $\mathcal{H}_l$, if there is no GW inside the circular area with radius $r_{\rm G2D}+a$ and there is no HAPS inside the circular area with radius $r_{\rm H2D}+a$, any device inside $\mathcal{T}_l$ can not be covered. Therefore, we can obtain a lower bound of $\P\{\mathcal{T}_l\;{\rm is\;not\;covered}\}$:
\begin{equation}
    \P\{\mathcal{T}_l\;{\rm is\;not\;covered}\}\geq e^{-\lambda_{\rm G}(r_{\rm G2D}+a)^2} e^{-\lambda_{\rm H}(r_{\rm H2D}+a)^2}.
\label{Tlclosedhybrid}
\end{equation}
Based on this, we introduce the sufficient condition for zero percolation probability in Theorem \ref{theo:hybridlower}.
\begin{theorem}
For a fixed value of HAPS density $\lambda_{\rm H}<\frac{\ln 2}{\pi (r_{\rm H2D}+a)^2}$, when the GW density $\lambda_{\rm G}$ and the HAPS density $\lambda_{\rm H}$ satisfies $\lambda_{\rm G}<\lambda_{\rm G,L}^{\rm hbd}(\lambda_{\rm H})$, where:
\begin{equation}
    \lambda_{\rm G,L}^{\rm hbd}(\lambda_{\rm H})=\frac{\ln 2}{\pi (r_{\rm G2D}+a)^2}-\lambda_{\rm H}\frac{(r_{\rm H2D}+a)^2}{(r_{\rm G2D}+a)^2},
\end{equation}
the percolation probability of the hybrid coverage scheme is zero, \ie $\theta_{\rm hbd}(\lambda_{\rm H},\lambda_{\rm G})=0$.

For a fixed value of GW density $\lambda_{\rm G}<\lambda_{\rm G}^{-}$, the sufficient condition for $\theta_{\rm hbd}(\lambda_{\rm H},\lambda_{\rm G})=0$ can be rewritten as:
\begin{equation}
    \lambda_{\rm H}<{\lambda_{\rm G,L}^{\rm hbd}}^{-1}(\lambda_{\rm G}),
\end{equation}
where ${\lambda_{\rm G,L}^{\rm hbd}}^{-1}(\lambda_{\rm G})$ is the inverse function of $\lambda_{\rm G,L}^{\rm hbd}(\lambda_{\rm H})$.
\label{theo:hybridlower}
\end{theorem}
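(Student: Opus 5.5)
The plan follows the pattern of Theorems \ref{theo:H2Dlower} and \ref{theo:H2G2Dlower}: substitute the lower bound (\ref{Tlclosedhybrid}) into the sufficient condition (\ref{Tlclosed}). First I would justify (\ref{Tlclosedhybrid}), with the factor $\pi$ restored in both exponents. A point of $\mathcal{T}_l$ is covered in $G_{\rm hbd}$ only if it lies within $r_{\rm H2D}$ of some HAPS projection or within $r_{\rm G2D}$ of some GW in $\Phi_c\subseteq\Phi$. If $\Psi$ has no point in the disc of radius $r_{\rm H2D}+a$ centred at the centre of $\mathcal{H}_l$, and $\Phi$ has no point in the disc of radius $r_{\rm G2D}+a$, then no point of $\mathcal{T}_l$ is covered. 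Since $\Psi$ and $\Phi$ are independent PPPs, the void probabilities multiply, giving
\begin{equation*}
\P\{\mathcal{T}_l\;{\rm is\;not\;covered}\}\geq e^{-\lambda_{\rm G}\pi(r_{\rm G2D}+a)^2-\lambda_{\rm H}\pi(r_{\rm H2D}+a)^2}.
\end{equation*}
I would also note that using all of $\Phi$ instead of $\Phi_c$ only enlarges the covered region, so the bound is legitimate. This mirrors the remark after Theorem \ref{theo:H2G2Dlower} that $G_{\rm hbd}$ is a subgraph of the Gilbert graph built on $\Psi\cup\Phi$ with the mixed radii.

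Next, I would require the right-hand side to exceed $\frac12$. Taking logarithms, this is
\begin{equation*}
\lambda_{\rm G}\pi(r_{\rm G2D}+a)^2+\lambda_{\rm H}\pi(r_{\rm H2D}+a)^2<\ln 2 .
\end{equation*}
Solving for $\lambda_{\rm G}$ gives exactly $\lambda_{\rm G}<\lambda_{\rm G,L}^{\rm hbd}(\lambda_{\rm H})$. The hypothesis $\lambda_{\rm H}<\frac{\ln 2}{\pi(r_{\rm H2D}+a)^2}$ is exactly what makes $\lambda_{\rm G,L}^{\rm hbd}(\lambda_{\rm H})>0$, so the admissible set of GW densities is nonempty. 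By (\ref{Tlclosed}), there is then a closed circuit around the origin almost surely, hence $\theta_{\rm hbd}(\lambda_{\rm H},\lambda_{\rm G})=0$.

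For the second statement, I would observe that the region $\{\lambda_{\rm G}\pi(r_{\rm G2D}+a)^2+\lambda_{\rm H}\pi(r_{\rm H2D}+a)^2<\ln 2\}$ is symmetric in the two variables. The map $\lambda_{\rm H}\mapsto\lambda_{\rm G,L}^{\rm hbd}(\lambda_{\rm H})$ is affine and strictly decreasing, so it is invertible with
\begin{equation*}
{\lambda_{\rm G,L}^{\rm hbd}}^{-1}(\lambda_{\rm G})=\frac{\ln 2-\lambda_{\rm G}\pi(r_{\rm G2D}+a)^2}{\pi(r_{\rm H2D}+a)^2}.
\end{equation*}
The condition $\lambda_{\rm G}<\lambda_{\rm G}^{-}$ guarantees this quantity is positive. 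Because the map is decreasing, $\lambda_{\rm G}<\lambda_{\rm G,L}^{\rm hbd}(\lambda_{\rm H})$ is equivalent to $\lambda_{\rm H}<{\lambda_{\rm G,L}^{\rm hbd}}^{-1}(\lambda_{\rm G})$, which gives the rewritten condition.

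The only real subtlety is the first step. One must check that the hybrid coverage notion, including $E_{\rm HG}$ edges and GWs reached through multi-hop G2G links, does not let any point of $\mathcal{T}_l$ be covered by some entity outside the two enlarged discs. Every covering element is either a HAPS disc of radius $r_{\rm H2D}$ or a GW disc of radius $r_{\rm G2D}$, and $\mathcal{T}_l$ has radius $a$, so the triangle inequality settles this. I expect no further obstacle, since the percolation step is inherited directly from (\ref{Tlclosed}).
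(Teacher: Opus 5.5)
Your proposal is correct and follows the same route as the paper's proof. Like the paper, you substitute the void-probability lower bound (\ref{Tlclosedhybrid}) into the sufficient condition (\ref{Tlclosed}) and solve the resulting linear inequality for $\lambda_{\rm G}$, or inversely for $\lambda_{\rm H}$. Restoring the missing factor $\pi$ in both exponents of (\ref{Tlclosedhybrid}) is exactly what the stated formula for $\lambda_{\rm G,L}^{\rm hbd}$ requires, and your explicit inverse and positivity checks fill in details the paper leaves implicit.
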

\begin{IEEEproof}
    Substituting the lower bound in (\ref{Tlclosedhybrid}) into (\ref{Tlclosed}), we can obtain the sufficient condition for zero percolation probability in the hybrid coverage scheme.
\end{IEEEproof}
Next, we introduce the super-critical case where the percolation probability is non-zero.\\

\begin{figure}[ht]
    \centering
    \includegraphics[width=0.7\linewidth]{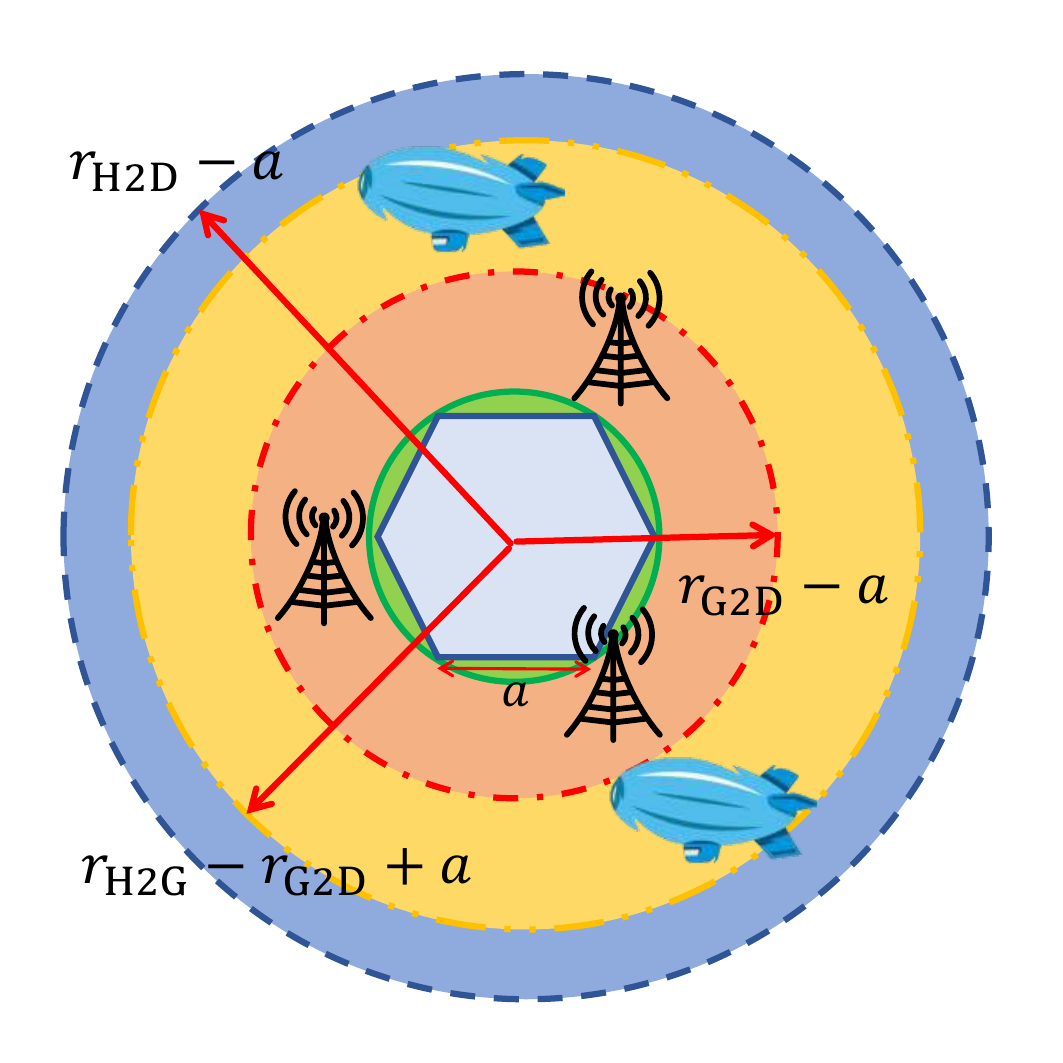}
    \caption{An open hexagonal face $\mathcal{H}_l$ in hybrid coverage scheme when $r_{\rm H2D}-a>r_{\rm H2D}-r_{\rm G2D}+a$. The HAPSs and GWs are distributed outside their designed areas, respectively. Therefore, the MCC $\mathcal{T}_l$ can be covered through H2D coverage or H2G2D coverage so that the face $\mathcal{H}_l$ is open. Especially, in this case, when HAPSs cover the hexagon via GWs, they can already cover the hexagon through H2D links.}
    \label{fig:HybridOpen1}
\end{figure}
\begin{figure}[ht]
    \centering
    \includegraphics[width=0.7\linewidth]{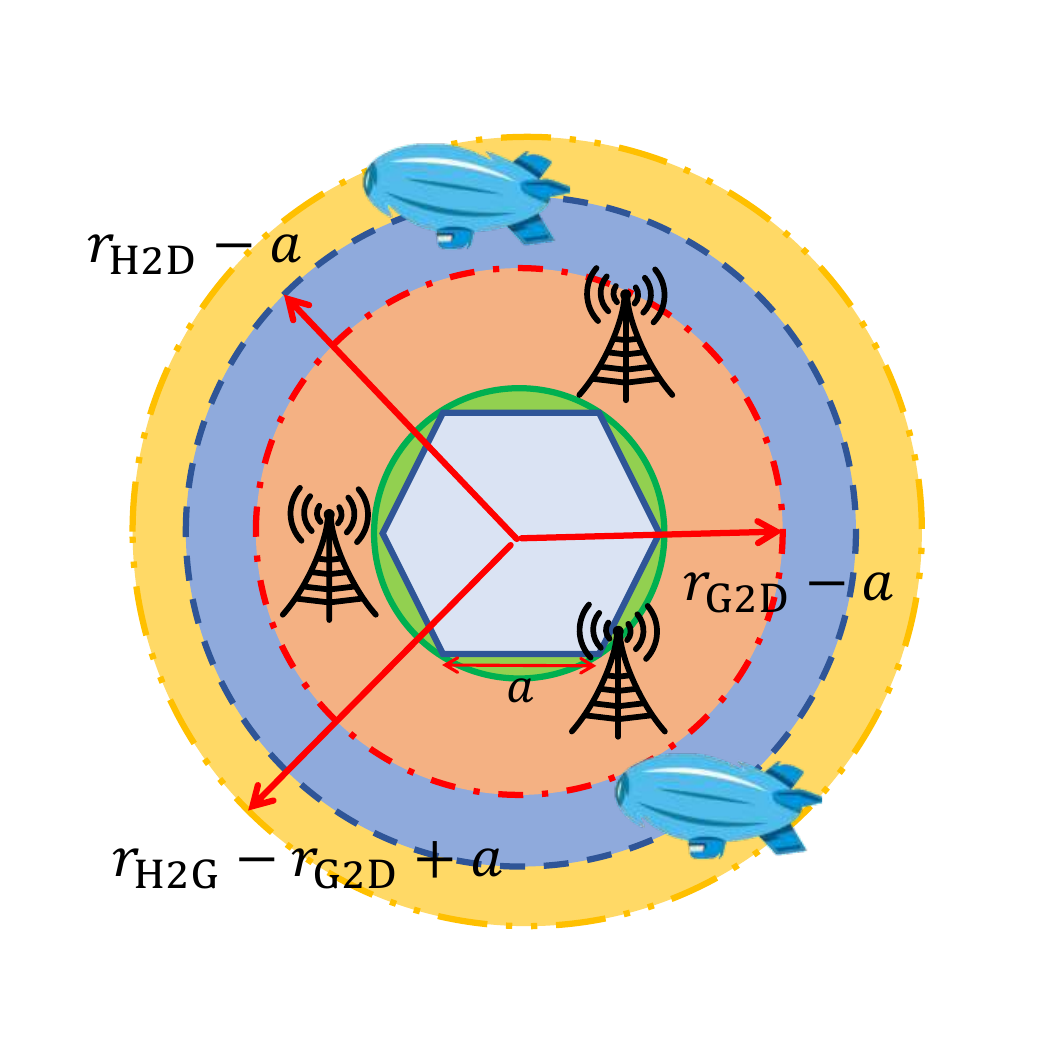}
    \caption{An open hexagonal face $\mathcal{H}_l$ in the hybrid coverage scheme when $r_{\rm H2D}-a<r_{\rm H2D}-r_{\rm G2D}+a$. The HAPSs and GWs are distributed outside their designed areas, respectively. Therefore, the MCC $\mathcal{T}_l$ can be covered through H2D coverage or H2G2D coverage so that the face $\mathcal{H}_l$ is open. }
    \label{fig:HybridOpen2}
\end{figure}

\textbf{Super-critical Case:} Consider the MCC $\mathcal{T}_l$ of the hexagon $\mathcal{H}_l$. There are two possible scenarios that ensure $\mathcal{T}_{l}$ is covered: \textit{(i) there is at least one GW inside the circular area with radius $r_{\rm G2D}-a$ and at least one HAPS inside the circular area with radius $r_{\rm H2G}-r_{\rm H2D}+a$}, or \textit{(ii) there is at least one HAPS inside the circular area with radius $r_{\rm H2D}-a$}. If $r_{\rm H2G}-r_{\rm H2D}+a>r_{\rm H2D}-a$, we can obtain:
\begin{equation}
\begin{array}{r@{}l}
    \P\{\mathcal{T}_l{\rm \;is\;covered}\}\geq e^{-\lambda_{\rm G}\pi (r_{\rm G2D}-a)^2}(1-e^{-\lambda_{\rm H}\pi (r_{\rm H2D}-a)^2})\\
+(1-e^{-\lambda_{\rm G}\pi (r_{\rm G2D}-a)^2})(1-e^{-\lambda_{\rm H}\pi (r_{\rm H2G}-r_{\rm G2D}+a)^2}).
\end{array}
\label{TlclosedHBD1}
\end{equation}
If $r_{\rm H2G}-r_{\rm H2D}+a\leq r_{\rm H2D}-a$, we can obtain:
\begin{equation}
    \P\{\mathcal{T}_l{\rm \;is\;covered}\}\geq 1-e^{-\lambda_{\rm H}\pi (r_{\rm H2D}-a)^2}.
\label{TlclosedHBD2}
\end{equation}
Based on this, we can obtain the sufficient condition for non-zero percolation probability in Theorem \ref{theo:hybridupper}.
\begin{theorem}
    When $r_{\rm H2G}-r_{\rm H2D}+a\leq r_{\rm H2D}-a$, a sufficient condition for non-zero percolation probability, \ie $\theta_{\rm hbd}(\lambda_{\rm H},\lambda_{\rm G})>0$, is 
\begin{equation}
    \lambda_{\rm H}>\lambda_{\rm H}^{+}.
\end{equation}
When $r_{\rm H2G}-r_{\rm H2D}+a>r_{\rm H2D}-a$, a sufficient condition for $\theta_{\rm hbd}(\lambda_{\rm H},\lambda_{\rm G})>0$ is 
\begin{equation}
    \lambda_{\rm H}>\lambda_{\rm H}^{+}.
\end{equation}
For a fixed value of HAPS density $\lambda_{\rm H}^{*}<\lambda_{\rm H}<\lambda_{\rm H}^{+}$, when the GW density $\lambda_{\rm G}$ satisfies $\lambda_{\rm G}>\lambda_{\rm G,U}^{\rm hbd}(\lambda_{\rm H})$  where
\begin{equation}
    \lambda_{\rm G,U}^{\rm hbd}(\lambda_{\rm H})=\frac{\displaystyle\ln (1+\frac{e^{-\lambda_{\rm H}\pi(r_{\rm H2D}-a)^2}-\frac{1}{2}}{\frac{1}{2}-e^{-\lambda_{\rm H}\pi (r_{\rm H2G}-r_{\rm G2D}+a)^2}})}{\pi (r_{\rm G2D}-a)^2}, 
\end{equation}
percolation probability of the hybrid coverage scheme is non-zero, \ie $\theta_{\rm hbd}(\lambda_{\rm H},\lambda_{\rm G})>0$.

For a fixed value of GW density $\lambda_{\rm G}$, the sufficient condition for $\theta_{\rm hbd}(\lambda_{\rm H},\lambda_{\rm G})>0$ can be rewritten as:
\begin{equation}
    \lambda_{\rm H}>{\lambda_{\rm G,U}^{\rm hbd}}^{-1}(\lambda_{\rm G}),
\end{equation}
where ${\lambda_{\rm G,U}^{\rm hbd}}^{-1}(\lambda_{\rm G})$ is the inverse function of $\lambda_{\rm G,U}^{\rm hbd}(\lambda_{\rm H})$.
\label{theo:hybridupper}
\end{theorem}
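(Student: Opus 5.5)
The argument follows the one used for Theorems \ref{theo:H2Dupper} and \ref{theo:H2G2Dupper}. We substitute the lower bounds (\ref{TlclosedHBD1}) and (\ref{TlclosedHBD2}) on $\P\{\mathcal{T}_l\;{\rm is\;covered}\}$ into the sufficient condition (\ref{Tlopen}) and solve the resulting inequality for $\lambda_{\rm G}$ or $\lambda_{\rm H}$.

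Before doing so, we make the two covering events explicit.
\begin{itemize}
\item Event (ii): a HAPS lies within $r_{\rm H2D}-a$ of the center. Then $\mathcal{T}_l\subseteq\mathcal{A}_j$, so every point is covered through H2D links.
\item Event (i): a GW lies within $r_{\rm G2D}-a$ and a HAPS lies within $r_{\rm H2G}-r_{\rm G2D}+a$ of the center. The triangle inequality gives $\|\xi-\yj\|\le r_{\rm H2G}$, so the GW is in $\Phi_d\subseteq\Phi_c$ and $\mathcal{T}_l\subseteq\mathcal{C}_i$.
\end{itemize}
Since $V_{\rm hbd}\supseteq V_{\rm H2D}\cup V_{\rm H2G2D}$, both events produce open faces of $G_{\rm hbd}$.

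We then bound $\P\{(i)\cup(ii)\}$ from below, using independence of $\Psi$ and $\Phi$.
\begin{itemize}
\item When the H2G disk radius is at most $r_{\rm H2D}-a$, event (i) implies event (ii). We keep only (ii), which yields (\ref{TlclosedHBD2}). Setting $1-e^{-\lambda_{\rm H}\pi(r_{\rm H2D}-a)^2}>1/2$ gives exactly $\lambda_{\rm H}>\lambda_{\rm H}^{+}$.
\item In the other case, we split on whether a GW is present in the $(r_{\rm G2D}-a)$-disk. With a GW present, it suffices that a HAPS lies in the larger H2G disk. With no GW, we need (ii). This gives the two-term bound (\ref{TlclosedHBD1}).
\end{itemize}

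From (\ref{TlclosedHBD1}), set $u=e^{-\lambda_{\rm G}\pi(r_{\rm G2D}-a)^2}$, $p=1-e^{-\lambda_{\rm H}\pi(r_{\rm H2D}-a)^2}$ and $q=1-e^{-\lambda_{\rm H}\pi(r_{\rm H2G}-r_{\rm G2D}+a)^2}$, with $q\ge p$. The bound reads $up+(1-u)q$, a convex combination of $p$ and $q$.
\begin{itemize}
\item If $p>1/2$, i.e.\ $\lambda_{\rm H}>\lambda_{\rm H}^{+}$, the bound exceeds $1/2$ for every $\lambda_{\rm G}$. This is the first sufficient condition.
\item If $\lambda_{\rm H}^{*}<\lambda_{\rm H}<\lambda_{\rm H}^{+}$, then $p<1/2<q$. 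The condition $up+(1-u)q>1/2$ becomes $u<(q-\tfrac12)/(q-p)$. Taking logarithms gives $\lambda_{\rm G}>\ln\!\big((q-p)/(q-\tfrac12)\big)/\big(\pi(r_{\rm G2D}-a)^2\big)$. Writing $(q-p)/(q-\frac12)=1+(\frac12-p)/(q-\frac12)$ and substituting $\frac12-p=e^{-\lambda_{\rm H}\pi(r_{\rm H2D}-a)^2}-\frac12$ and $q-\frac12=\frac12-e^{-\lambda_{\rm H}\pi(r_{\rm H2G}-r_{\rm G2D}+a)^2}$ reproduces $\lambda_{\rm G,U}^{\rm hbd}(\lambda_{\rm H})$.
\end{itemize}

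For the inverse statement, we show that $\lambda_{\rm G,U}^{\rm hbd}$ is strictly decreasing in $\lambda_{\rm H}$ on $(\lambda_{\rm H}^{*},\lambda_{\rm H}^{+})$: the numerator $\frac12-p$ decreases and the denominator $q-\frac12$ increases. It tends to $+\infty$ at $\lambda_{\rm H}^{*}$ and to $0$ at $\lambda_{\rm H}^{+}$. So it is a bijection onto $(0,\infty)$, its inverse is well defined, and $\lambda_{\rm G}>\lambda_{\rm G,U}^{\rm hbd}(\lambda_{\rm H})$ is equivalent to $\lambda_{\rm H}>{\lambda_{\rm G,U}^{\rm hbd}}^{-1}(\lambda_{\rm G})$.

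The main obstacle is making the event decomposition rigorous and consistent with the statement. The statement's case split is written in terms of $r_{\rm H2G}-r_{\rm H2D}+a$, whereas event (i) naturally uses radius $r_{\rm H2G}-r_{\rm G2D}+a$. We would interpret the case condition as comparing the HAPS disk of event (i) with that of event (ii), so that one event contains the other in the nested case. We would also check the sign conditions $q>1/2>p$ that are needed before taking logarithms.
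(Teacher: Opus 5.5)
Your proposal is correct and follows the paper's proof, which likewise substitutes the lower bounds (\ref{TlclosedHBD1}) and (\ref{TlclosedHBD2}) into (\ref{Tlopen}). You also supply the details the paper omits: the conditioning on GW presence behind the two-term bound, the algebra recovering $\lambda_{\rm G,U}^{\rm hbd}$, and the strict monotonicity that makes the inverse well defined. Your reading of the case condition as comparing $r_{\rm H2G}-r_{\rm G2D}+a$ with $r_{\rm H2D}-a$ is the intended one; the $r_{\rm H2D}$ in the statement's case split is a typo.
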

\begin{IEEEproof}
    Substituting the lower bounds in (\ref{TlclosedHBD1}) and (\ref{TlclosedHBD2}) into (\ref{Tlopen}) respectively, we can obtain the sufficient conditions for non-zero percolation probability in the hybrid coverage scheme.
\end{IEEEproof}
When $r_{\rm H2G}$ approaches infinity, all GWs can be covered by HAPS, where the upper bound for non-zero percolation probability in the hybrid coverage scheme can be simplified.
\begin{remark}
    When $r_{\rm H2G}\rightarrow \infty$, the sufficient condition approaches $\lambda_{\rm H}>\lambda_{\rm H}^{+}$ or $\lambda_{\rm G}>\lambda_{\rm G,*}^{\rm hbd}(\lambda_{\rm H})$, where:
\begin{equation}
    \lambda_{\rm G,*}^{\rm hbd}(\lambda_{\rm H})=\frac{\ln 2}{\pi (r_{\rm G2D}-a)^2}-\lambda_{\rm H}\frac{(r_{\rm H2D}-a)^2}{(r_{\rm G2D}-a)^2}
\end{equation}
and $0<\lambda_{\rm H}<\lambda_{\rm H}^{+}$. This is also the upper bound for two Gilbert disk models of GWs and HAPS with different coverage radii for wireless devices because all GWs are already covered by HAPS directly. This is also the lower bound of $\lambda_{\rm G,U}^{\rm hbd}$ and also the asymptote when $\lambda_{\rm H}$ approaches $\lambda_{\rm H}^{*}$.
    
\end{remark}

To prove the existence of the critical region for phase transition of percolation probability, we introduce the relationship between HAPS density, GW density, and percolation probability.
\begin{lemma}
For a fixed value of HAPS density $\lambda_{\rm H,1}$, the percolation probability $\theta_{\rm hbd}$ does not decrease when the GW density increases, that is
\begin{equation}
    \theta_{\rm hbd}(\lambda_{\rm H,1},\lambda_{\rm G,2})\geq \theta_{\rm hbd}(\lambda_{\rm H,1},\lambda_{\rm G,1})\;{\rm if}\;\lambda_{\rm G,2}>\lambda_{\rm G,1}.
\end{equation}
For a fixed value of GW density $\lambda_{\rm G,1}$, the percolation probability $\theta_{\rm hbd}$ does not decrease when the HAPS density increases, that is
\begin{equation}
    \theta_{\rm hbd}(\lambda_{\rm H,2},\lambda_{\rm G,1})\geq \theta_{\rm hbd}(\lambda_{\rm H,1},\lambda_{\rm G,1})\;{\rm if}\;\lambda_{\rm H,2}>\lambda_{\rm H,1}.
\end{equation}

\label{lem:increaseHybrid}
\end{lemma}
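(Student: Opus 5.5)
A coupling (thinning) argument shows that the hybrid graph is monotone under adding points. I would reuse the approach of Lemma \ref{lem:increaseH2D} and Lemma \ref{lem:increaseH2G2D}.

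\textbf{Monotonicity in $\lambda_{\rm G}$.} Fix $\lambda_{\rm H,1}$ and let $\lambda_{\rm G,2}>\lambda_{\rm G,1}$. Construct the GW process of density $\lambda_{\rm G,2}$ as a superposition $\Phi_2=\Phi_1\cup\Phi'$, where $\Phi_1$ and $\Phi'$ are independent PPPs of densities $\lambda_{\rm G,1}$ and $\lambda_{\rm G,2}-\lambda_{\rm G,1}$. Equivalently, $\Phi_1$ is an independent $p$-thinning of $\Phi_2$ with $p=\lambda_{\rm G,1}/\lambda_{\rm G,2}$. Use the same $\Psi$ in both configurations.

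\textbf{Key step: the vertex and edge sets only grow.}
\begin{itemize}
\item $\Phi_d$ can only grow, since $\Phi_{d,1}=\Phi_1\cap\bigcup_j\mathcal{B}_j\subseteq\Phi_{d,2}$.
\item Every GW in $\Phi_{c,1}$ reaches $\Phi_{d,1}$ through a multi-hop G2G chain with hops of length at most $r_{\rm G2G}$ in $\Phi_1\subseteq\Phi_2$. That chain is still present in $\Phi_2$. It may reach $\Phi_{d,2}$ earlier, since points of $\Phi_{c,1}$ can become directly covered. Either way the GW stays in $\Phi_{c,2}$, so $\Phi_{c,1}\subseteq\Phi_{c,2}$.
\item The vertex set $V_{\rm hbd}=\Psi\cup\Phi_c$ is therefore increasing.
\item Each edge set $E_{\rm H2D}$, $E_{\rm H2G2D}$ and $E_{\rm HG}$ is defined by a pairwise distance criterion on vertices present in the graph. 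Hence $G_{\rm hbd}^{(1)}$ is a subgraph of $G_{\rm hbd}^{(2)}$ realization-wise.
\end{itemize}

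\textbf{Conclusion for $\lambda_{\rm G}$.} The component $K_{\rm hbd}(0)$ is defined through coverage of the origin. The covered region $\bigcup_j\mathcal{A}_j\cup\bigcup_{i\in\Phi_c}\mathcal{C}_i$ is increasing, so the component containing the origin in graph 1 is contained in a component of graph 2. The event $\{|K_{\rm hbd}(0)|=\infty\}$ is therefore increasing, and taking expectations under the coupling gives $\theta_{\rm hbd}(\lambda_{\rm H,1},\lambda_{\rm G,2})\ge\theta_{\rm hbd}(\lambda_{\rm H,1},\lambda_{\rm G,1})$.

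\textbf{Monotonicity in $\lambda_{\rm H}$.} Fix $\Phi$ and superpose $\Psi_2=\Psi_1\cup\Psi'$. The union $\bigcup_j\mathcal{B}_j$ grows, so $\Phi_d$ grows. As above, $\Phi_c$ grows: a chain ending in $\Phi_{d,1}$ still ends in $\Phi_{d,2}$, or is cut short at an earlier directly covered GW. The HAPS vertex set grows too, so again every edge set is monotone and $G^{(1)}_{\rm hbd}\subseteq G^{(2)}_{\rm hbd}$.

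\textbf{Main obstacle.} The delicate point is $\Phi_{id}$. Its definition requires the intermediate hops to lie in $\Phi\backslash\Phi_d$, and adding points enlarges $\Phi_d$, so membership in $\Phi_{id}$ alone is not monotone. I would handle this explicitly by truncating each chain at its first point of the enlarged $\Phi_d$. This shows the union $\Phi_c$ is monotone, which is all the graph needs.
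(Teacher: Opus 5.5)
Your proposal is correct and follows the paper's proof. Both use the same thinning/superposition coupling and show that $\Phi_c$, $V_{\rm hbd}$ and each of $E_{\rm H2D}$, $E_{\rm H2G2D}$, $E_{\rm HG}$ can only grow, which gives $K_{\rm hbd,1}(0)\subseteq K_{\rm hbd,2}(0)$. Your explicit truncation of G2G chains at the first point of the enlarged $\Phi_d$ makes rigorous what the paper only states informally for the HAPS-density case. In the GW-density case the truncation is not needed: with $\Psi$ fixed, no point of $\Phi_1$ becomes newly directly covered, so your remark that such points "can become directly covered" is harmless but inaccurate there.
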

\begin{IEEEproof}
    See Appendix~\ref{app:increaseHybrid}.
\end{IEEEproof}
Therefore, with the derived sub-critical and super-critical regions, we can prove the existence of the critical condition of HAPS density and GW density for phase transition of percolation probability, which is shown in Lemma \ref{lem:criticalHybrid}. 
\begin{lemma}
    For the hybrid coverage scheme, as shown in Fig. \ref{fig:CRHybrid}, the critical condition for phase transition of percolation probability can be described as an implicit function $\mathcal{G}(\lambda_{\rm H},\lambda_{\rm G})=0$, whose curve is between the super-critical region and sub-critical region. It has five properties: (i) the increase in $\lambda_{\rm H}$ does not lead to a higher critical value of $\lambda_{\rm G}$, (ii) the increase in $\lambda_{\rm G}$ does not lead to a higher critical value of $\lambda_{\rm H}$, (iii) the phase transition does not exist in the region where $\lambda_{\rm H}>\lambda_{\rm H}^{+}$, (iv) when $r_{\rm H2G}-r_{\rm H2D}+a>r_{\rm H2D}-a$, for $\lambda_{\rm H}^{*}<\lambda_{\rm H}<\lambda_{\rm H}^{+}$, the critical value of $\lambda_{\rm G}$ is between $\max\{0,\lambda_{\rm G,L}^{\rm hbd}(\lambda_{\rm H})\}$ and $\lambda_{\rm G,U}^{\rm hbd}(\lambda_{\rm H})$, (v) for $\lambda_{\rm G}>0$, the critical value of $\lambda_{\rm H}$ is between  $\max\{0,{\lambda_{\rm G,L}^{\rm hbd}}^{-1}(\lambda_{\rm G})\}$ and ${\lambda_{\rm G,U}^{\rm hbd}}^{-1}(\lambda_{\rm G})$.
\label{lem:criticalHybrid}
\end{lemma}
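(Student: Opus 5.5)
The argument mirrors that of Lemma~\ref{lem:criticalH2G2D}, combining the monotonicity of Lemma~\ref{lem:increaseHybrid} with the sub- and super-critical regions of Theorems~\ref{theo:hybridlower} and \ref{theo:hybridupper}.

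\emph{Defining the critical curve.} For each fixed $\lambda_{\rm H}>0$, define
\begin{equation}
\lambda_{\rm G,c}(\lambda_{\rm H})=\inf\{\lambda_{\rm G}>0:\theta_{\rm hbd}(\lambda_{\rm H},\lambda_{\rm G})>0\},
\end{equation}
with the conventions $\inf\emptyset=\infty$ and $\lambda_{\rm G,c}=0$ when the set is all of $(0,\infty)$. Define $\lambda_{\rm H,c}(\lambda_{\rm G})$ symmetrically. By the first part of Lemma~\ref{lem:increaseHybrid}, $\theta_{\rm hbd}(\lambda_{\rm H},\cdot)$ is non-decreasing. Hence $\theta_{\rm hbd}=0$ below $\lambda_{\rm G,c}$ and $\theta_{\rm hbd}>0$ above it, so the boundary $\mathcal{G}(\lambda_{\rm H},\lambda_{\rm G})=0$ is the graph $\lambda_{\rm G}=\lambda_{\rm G,c}(\lambda_{\rm H})$, and it separates the two regions.

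\emph{Properties (i) and (ii).} Let $\lambda_{\rm H,2}>\lambda_{\rm H,1}$ and take any $\lambda_{\rm G}>\lambda_{\rm G,c}(\lambda_{\rm H,1})$. Then $\theta_{\rm hbd}(\lambda_{\rm H,1},\lambda_{\rm G})>0$. The second part of Lemma~\ref{lem:increaseHybrid} gives $\theta_{\rm hbd}(\lambda_{\rm H,2},\lambda_{\rm G})>0$, so $\lambda_{\rm G,c}(\lambda_{\rm H,2})\le\lambda_{\rm G,c}(\lambda_{\rm H,1})$. This is (i). Property (ii) follows by exchanging the roles of the two densities.

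\emph{Property (iii).} By Theorem~\ref{theo:hybridupper}, in both geometric cases $\lambda_{\rm H}>\lambda_{\rm H}^{+}$ implies $\theta_{\rm hbd}>0$ for every $\lambda_{\rm G}>0$. In that range $\lambda_{\rm G,c}=0$, so no transition occurs inside the positive quadrant.

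\emph{Property (iv).} Fix $\lambda_{\rm H}\in(\lambda_{\rm H}^{*},\lambda_{\rm H}^{+})$ in the case $r_{\rm H2G}-r_{\rm H2D}+a>r_{\rm H2D}-a$.
\begin{itemize}
\item \emph{Lower bound.} Theorem~\ref{theo:hybridlower} gives $\theta_{\rm hbd}=0$ whenever $\lambda_{\rm G}<\lambda_{\rm G,L}^{\rm hbd}(\lambda_{\rm H})$, so $\lambda_{\rm G,c}\ge\lambda_{\rm G,L}^{\rm hbd}(\lambda_{\rm H})$. Since densities are nonnegative, $\lambda_{\rm G,c}\ge\max\{0,\lambda_{\rm G,L}^{\rm hbd}(\lambda_{\rm H})\}$. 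The maximum matters because $\lambda_{\rm G,L}^{\rm hbd}$ is negative once $\lambda_{\rm H}\ge\lambda_{\rm H}^{-}$.
\item \emph{Upper bound.} Theorem~\ref{theo:hybridupper} gives $\theta_{\rm hbd}>0$ for $\lambda_{\rm G}>\lambda_{\rm G,U}^{\rm hbd}(\lambda_{\rm H})$, so $\lambda_{\rm G,c}\le\lambda_{\rm G,U}^{\rm hbd}(\lambda_{\rm H})$. This bound is finite because $\lambda_{\rm H}>\lambda_{\rm H}^{*}$ makes $\tfrac12-e^{-\lambda_{\rm H}\pi(r_{\rm H2G}-r_{\rm G2D}+a)^2}>0$. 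Also $\lambda_{\rm H}<\lambda_{\rm H}^{+}$ gives $e^{-\lambda_{\rm H}\pi(r_{\rm H2D}-a)^2}>\tfrac12$, so the logarithm is positive.
\item \emph{Consistency.} It should be checked that $\lambda_{\rm G,L}^{\rm hbd}\le\lambda_{\rm G,U}^{\rm hbd}$. This holds automatically: both are valid bounds on the same quantity $\lambda_{\rm G,c}$, since a point cannot have both zero and positive percolation probability.
\end{itemize}

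\emph{Property (v).} Fix $\lambda_{\rm G}>0$. Inverting the previous bounds gives $\max\{0,{\lambda_{\rm G,L}^{\rm hbd}}^{-1}(\lambda_{\rm G})\}\le\lambda_{\rm H,c}(\lambda_{\rm G})\le{\lambda_{\rm G,U}^{\rm hbd}}^{-1}(\lambda_{\rm G})$. The inverses are well defined because both bound functions are strictly monotone in $\lambda_{\rm H}$:
\begin{itemize}
\item $\lambda_{\rm G,L}^{\rm hbd}$ is linear with negative slope.
\item $\lambda_{\rm G,U}^{\rm hbd}$ decreases, since its numerator inside the logarithm decreases and its denominator increases as $\lambda_{\rm H}$ grows.
\end{itemize}

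\emph{Main obstacle.} The delicate step is (v). When $\lambda_{\rm G}$ is large, ${\lambda_{\rm G,U}^{\rm hbd}}^{-1}(\lambda_{\rm G})$ must be interpreted on $(\lambda_{\rm H}^{*},\lambda_{\rm H}^{+})$, and its range must be matched with $\lambda_{\rm G}>0$. I would argue that $\lambda_{\rm G,U}^{\rm hbd}\to\infty$ as $\lambda_{\rm H}\downarrow\lambda_{\rm H}^{*}$ and $\to0$ as $\lambda_{\rm H}\uparrow\lambda_{\rm H}^{+}$. By continuity, the inverse is then defined for every $\lambda_{\rm G}>0$ and takes values in $(\lambda_{\rm H}^{*},\lambda_{\rm H}^{+})$.
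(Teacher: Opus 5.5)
Your argument is correct, but it is organized differently from the paper's. The paper's proof spends most of its effort showing that the sub- and super-critical regions are disjoint, and it does this case by case:
in Cases 1 and 2 of Fig.~\ref{fig:CRHybrid} it notes that the super-critical region lies to the right of $\lambda_{\rm H}^{+}$, respectively of $\lambda_{\rm H}^{*}\geq\lambda_{\rm H}^{-}$, whereas the sub-critical region requires $\lambda_{\rm H}<\lambda_{\rm H}^{-}$. In Case 3 it proves the explicit chain $\lambda_{\rm G,L}^{\rm hbd}(\lambda_{\rm H})<\lambda_{\rm G,*}^{\rm hbd}(\lambda_{\rm H})\leq\lambda_{\rm G,U}^{\rm hbd}(\lambda_{\rm H})$. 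The second step comes from replacing the denominator $\frac{1}{2}-e^{-\lambda_{\rm H}\pi(r_{\rm H2G}-r_{\rm G2D}+a)^2}$ inside the logarithm by $\frac{1}{2}$. The paper then obtains $\mathcal{G}=0$ informally as the $p\to0$ limit of the level sets $\theta_{\rm hbd}=p$, and reads (i)--(ii) off Lemma~\ref{lem:increaseHybrid}.

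You proceed differently in three ways. First, you define the thresholds as infima, which makes (i)--(ii) a rigorous two-line consequence of monotonicity and avoids the level-set limit. Second, you replace the case analysis and the algebra by the observation that Theorems~\ref{theo:hybridlower} and~\ref{theo:hybridupper} cannot both apply at the same point. Third, you supply what the paper leaves implicit in (v): $\lambda_{\rm G,U}^{\rm hbd}$ is a continuous strictly decreasing bijection of $(\lambda_{\rm H}^{*},\lambda_{\rm H}^{+})$ onto $(0,\infty)$, so its inverse exists for every $\lambda_{\rm G}>0$.

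What your shortcut gives up is the extra content of the paper's chain. That chain locates the $r_{\rm H2G}\to\infty$ asymptote $\lambda_{\rm G,*}^{\rm hbd}$ strictly between the two bounds. It also checks the compatibility of the two hexagon estimates directly, rather than inheriting it from the theorems.

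Two minor refinements. Since $\lambda_{\rm G,c}$ is only monotone (and may be infinite), the critical set is the union of the graphs of $\lambda_{\rm G,c}$ and $\lambda_{\rm H,c}$, not a single graph; a jump of one is a flat piece of the other, cf.\ cases II and III in Appendix~\ref{app:criticalH2G2D}. And when $r_{\rm H2G}-r_{\rm G2D}+a\leq r_{\rm H2D}-a$, the interval $(\lambda_{\rm H}^{*},\lambda_{\rm H}^{+})$ is empty. In that case the upper bound in (v) has to be read as $\lambda_{\rm H}^{+}$, which you get from (iii).
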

\begin{IEEEproof}
    See Appendix~\ref{app:criticalhybrid}.
\end{IEEEproof}

By observing the sub-critical and super-critical regions of the H2D, H2G2D, and hybrid coverage schemes, we can find out the relationship between them.

\begin{remark}
    Because the random graphs satisfy $G_{\rm H2D}\subseteq G_{\rm hbd}$ and $G_{\rm H2G2D}\subseteq G_{\rm hbd}$, we can also know that $\theta_{\rm hbd}(\lambda_{\rm H},\lambda_{\rm G})\geq \theta_{\rm H2D}(\lambda_{\rm H},\lambda_{\rm G})$ and $\theta_{\rm hbd}(\lambda_{\rm H},\lambda_{\rm G})\geq \theta_{\rm H2G2D}(\lambda_{\rm H},\lambda_{\rm G})$. The sub-critical region of the hybrid coverage scheme is contained by the H2D sub-critical region or the H2G2D sub-critical region. Similarly, the super-critical region of the hybrid coverage scheme contains both of the H2D super-critical region and the H2G2D super-critical region.
\end{remark}

\begin{figure}[htbp]
\centering
\subfigure[Case 1: Critical region in hybrid coverage scheme when $r_{\rm H2G}-r_{\rm G2D}+a\leq r_{\rm H2D}-a\leq r_{\rm H2D}+a$.]{
\begin{minipage}[t]{1\linewidth}
\centering 
\includegraphics[width=1\textwidth]{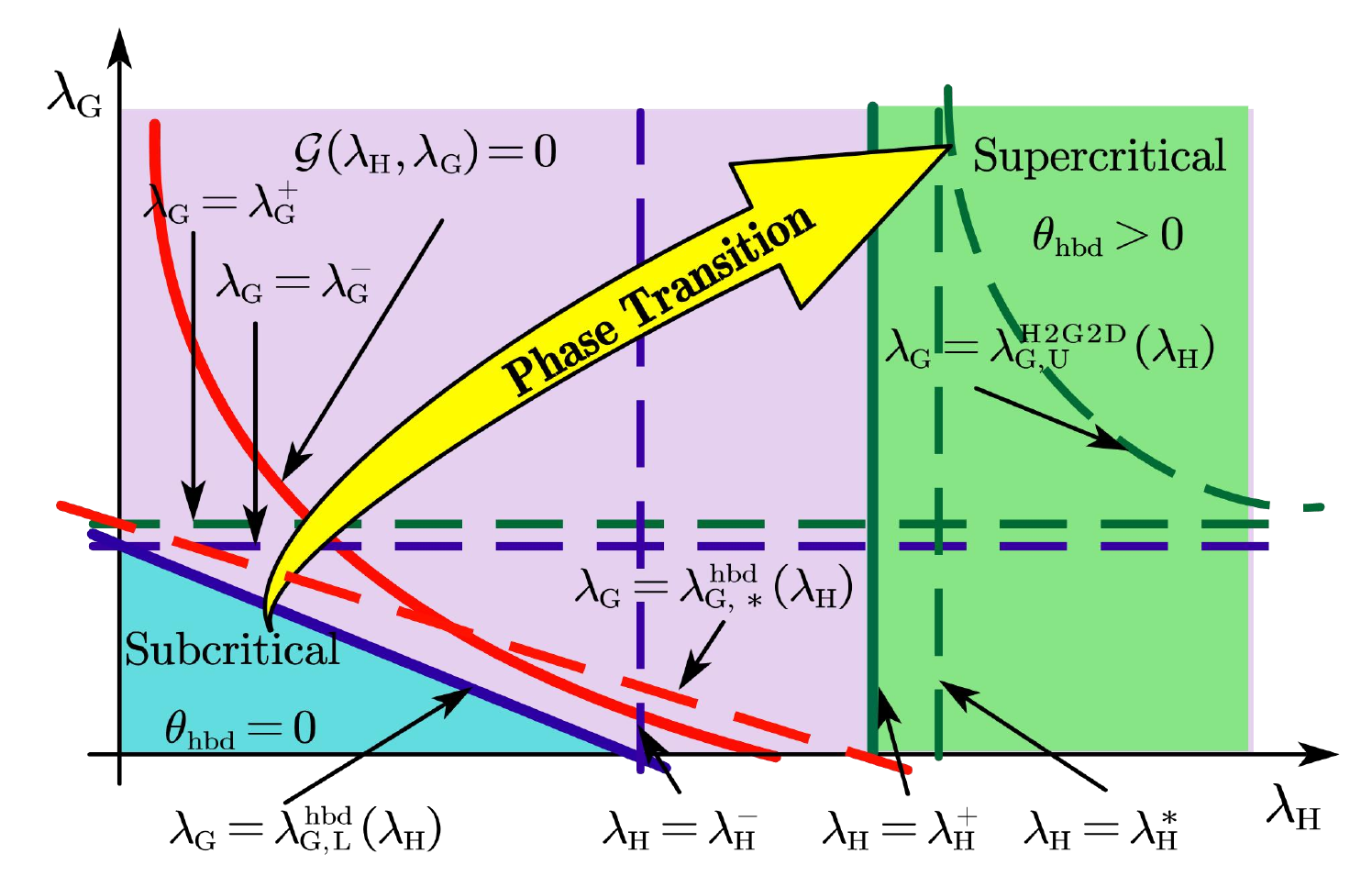}
\label{fig:CRHybrid1}
\end{minipage}}

\subfigure[Case 2: Critical region in hybrid coverage scheme when $ r_{\rm H2D}-a\leq r_{\rm H2G}-r_{\rm G2D}+a\leq r_{\rm H2D}+a$.]{
\begin{minipage}[t]{1\linewidth}
\centering 
\includegraphics[width=1\textwidth]{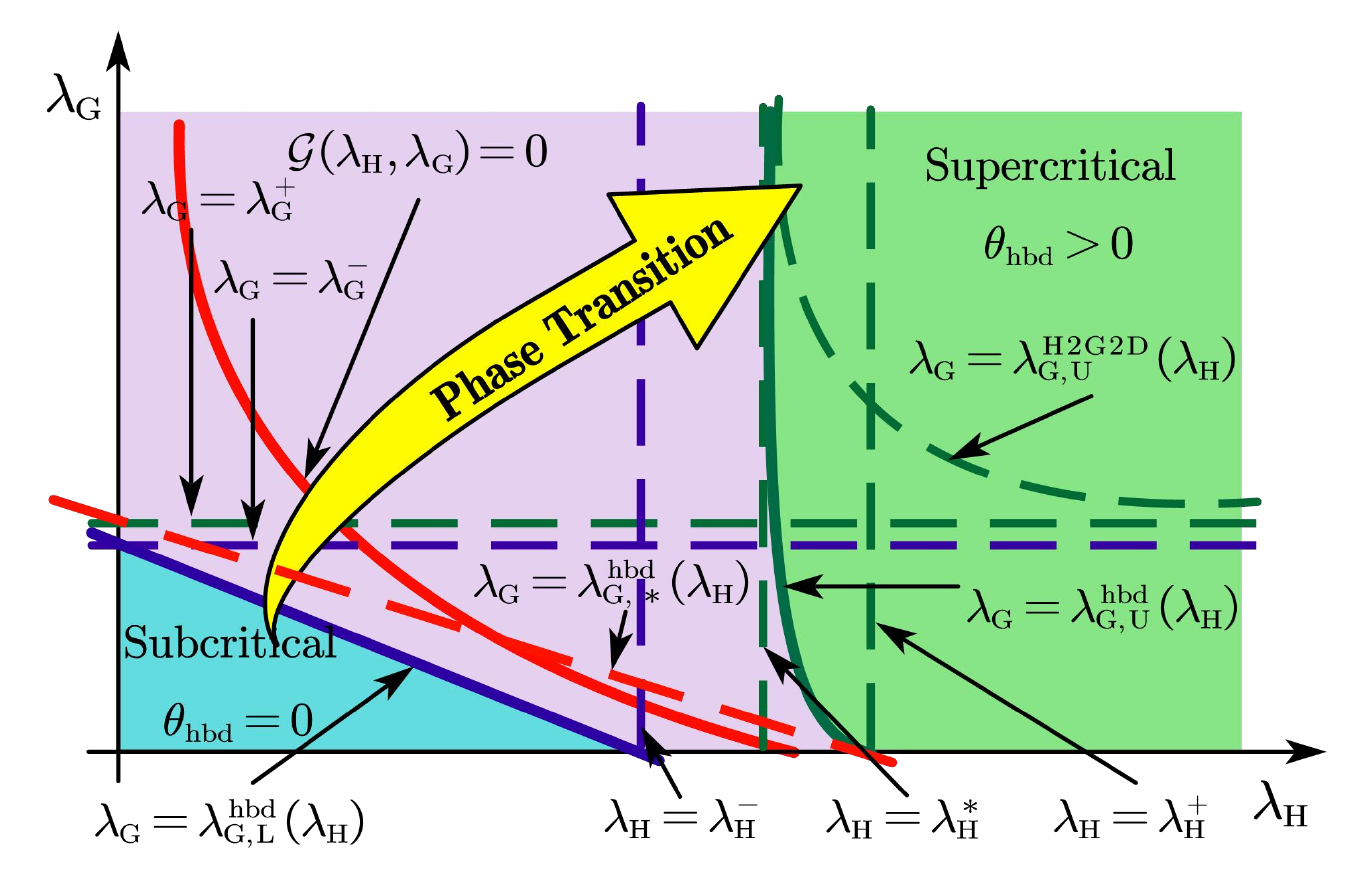}
\label{fig:CRHybrid2}
\end{minipage}}

\subfigure[Case 3: Critical region in hybrid coverage scheme when $ r_{\rm H2D}-a\leq r_{\rm H2D}+a\leq r_{\rm H2G}-r_{\rm G2D}+a$.]{
\begin{minipage}[t]{1\linewidth}
\centering 
\includegraphics[width=1\textwidth]{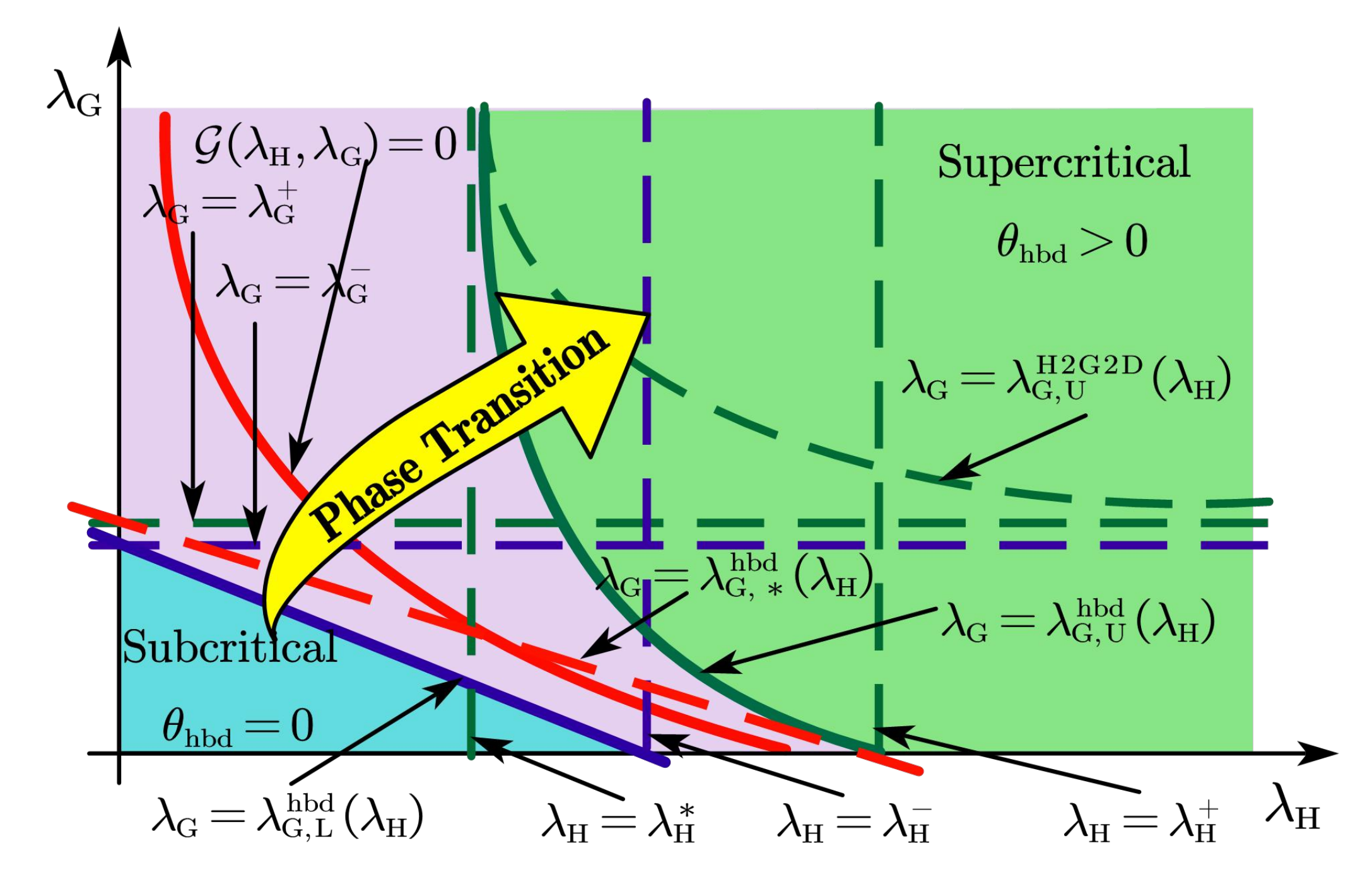}
\label{fig:CRHybrid3}
\end{minipage}}
\caption{Sub-critical, super-critical, and critical region for the hybrid coverage scheme.}
\label{fig:CRHybrid}
\end{figure}

For all of these three HAPS-based solutions, we summarize the necessary expressions for ease of reading and comparison in Table \ref{tab:bounds}.

\begin{table*}[ht]
\caption{Table of Bounds}
\label{tab:bounds}
\centering
\begin{center}
\begin{tabular}{|llllll|}
\hline
\multicolumn{2}{|l|}{}       & \multicolumn{2}{l|}{Lower Bound} & \multicolumn{2}{l|}{Upper bound} \\ \hline
\multicolumn{2}{|l|}{H2D}    & \multicolumn{2}{l|}{$
    \lambda_{\rm H,L}^{\rm H2D}=\lambda_{\rm H}^{-}\overset{\triangle}{=}\frac{\ln 2}{\pi (r_{\rm H2D}+a)^2}$}            & \multicolumn{2}{l|}{$
    \lambda_{\rm H,U}^{\rm H2D}=\lambda_{\rm H}^{+}\overset{\triangle}{=}\frac{\ln 2}{\pi (r_{\rm H2D}-a)^2}$}            \\ \hline
\multicolumn{2}{|l|}{H2G2D}  & \multicolumn{2}{l|}{$\lambda_{\rm G,L}^{\rm H2G2D}(\lambda_{\rm H})=\lambda_{\rm G}^{-}\overset{\triangle}{=}\frac{\ln 2}{\pi(r_{\rm G2D}+a)^2}$}            & \multicolumn{2}{l|}{$\lambda_{\rm G,U}^{\rm H2G2D}(\lambda_{\rm H})=\frac{\ln (1+\frac{\frac{1}{2}}{\frac{1}{2}-e^{-\lambda_{\rm H}\pi(r_{\rm H2G}-r_{\rm G2W}+a)^2}})}{\pi(r_{\rm G2D}-a)^2}$ for $\lambda_{\rm H}>\lambda_{\rm H}^{*}$}            \\ \hline
\multicolumn{2}{|l|}{Hybrid} & \multicolumn{2}{l|}{$\lambda_{\rm G,L}^{\rm hbd}(\lambda_{\rm H})=\frac{\ln 2}{\pi (r_{\rm G2D}+a)^2}-\lambda_{\rm H}\frac{(r_{\rm H2D}+a)^2}{(r_{\rm G2D}+a)^2}$}            & \multicolumn{2}{l|}{$\lambda_{\rm H,U}^{\rm hbd}=\lambda_{\rm H}^{+}$ or $\lambda_{\rm G,U}^{\rm hbd}(\lambda_{\rm H})=\frac{\ln (1+\frac{e^{-\lambda_{\rm H}\pi(r_{\rm H2D}-a)^2}-\frac{1}{2}}{\frac{1}{2}-e^{-\lambda_{\rm H}\pi (r_{\rm H2G}-r_{\rm G2D}+a)^2}})}{\pi (r_{\rm G2D}-a)^2}$ for $\lambda_{\rm H}^{*}<\lambda_{\rm H}<\lambda_{\rm H}^{+}$}            \\ \hline
\multicolumn{6}{|c|}{$\lim\limits_{r_{\rm H2G}\rightarrow\infty}\lambda_{\rm G,L}^{\rm H2G2D}(\lambda_{\rm H})=\lambda_{\rm G}^{+}\overset{\triangle}{=}\frac{\ln 2}{\pi(r_{\rm G2D}-a)^2}$,\;$\lambda_{\rm H}^{*}\overset{\triangle}{=}\frac{\ln 2}{\pi (r_{\rm H2G}-r_{\rm G2D}+a)^2}$}                            \\ \hline

\multicolumn{6}{|c|}{$\lim\limits_{r_{\rm H2G}\rightarrow\infty}\lambda_{\rm G,L}^{\rm hbd}(\lambda_{\rm H})=\lambda_{\rm G,*}^{\rm hbd}(\lambda_{\rm H})=\frac{\ln 2}{\pi (r_{\rm G2D}-a)^2}-\lambda_{\rm H}\frac{(r_{\rm H2D}-a)^2}{(r_{\rm G2D}-a)^2}$}                            \\ \hline
\end{tabular}
\end{center}
\end{table*}


\section{Simulation results and discussion}\label{sec:simulation}

\indent In this paper, we proved the existence of critical conditions for phase transition of percolation probability from zero to non-zero in the H2D coverage scheme, H2G2D coverage scheme, and hybrid coverage scheme. We take the HAPS-enabled IoT network as an example,  assuming that HAPSs operate at the altitude $h_{\rm H}=20\;{\rm km}$ and the altitudes of IoT GWs $h_{\rm G}=10\;{\rm m}$. The potential wireless devices are deployed on the ground so that $h_{\rm D}=0\;{\rm m}$. We conduct the simulation in a $400\;{\rm km}\times 400\;{\rm km}$ area. The H2D, H2G, G2D, G2G communication ranges are $d_{\rm H2D}=30\;{\rm km}$, $d_{\rm H2G}=50\;{\rm km}$, $d_{\rm G2D}=10\;{\rm km}$ and $r_{\rm G2G}=15\;{\rm km}$ \cite{9380673,arum2020review,soy2023coverage}.

\begin{figure}[ht]
    \centering
    \includegraphics[width=1\linewidth]{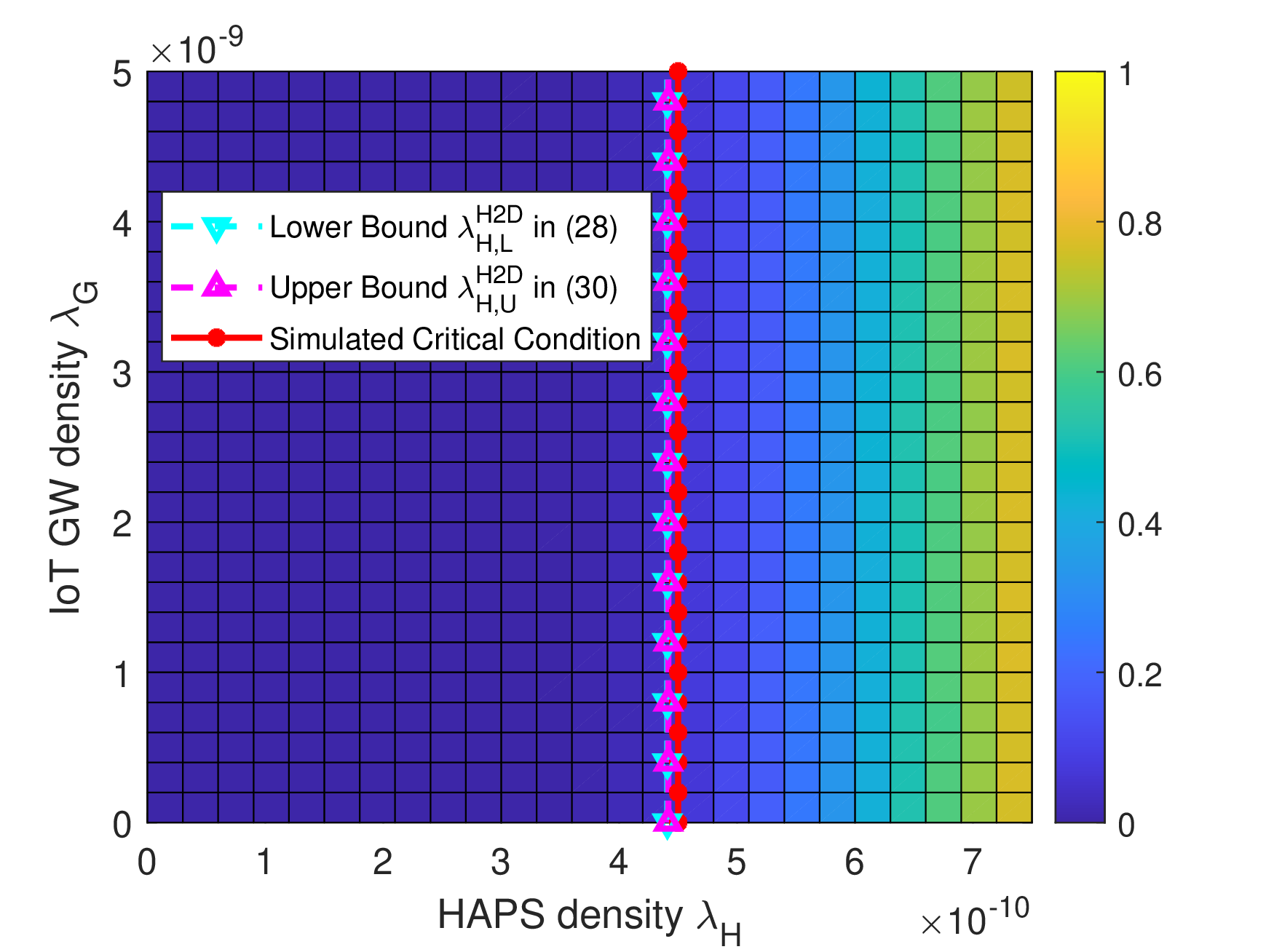}
    \caption{Heatmap of percolation probability in H2D coverage scheme with different HAPS densities and GW densities.}
    \label{fig:H2Dheatmap}
\end{figure}
\indent In Fig. \ref{fig:H2Dheatmap}, we draw the heatmap of percolation probability in H2D coverage scheme, considering different combinations of HAPS density $\lambda_{\rm H}$ and GW density $\lambda_{\rm G}$.  The phase transition from the sub-critical case to the super-critical case relies on the increase in HAPS density $\lambda_{\rm H}$, which matches the concept that we show in Lemma \ref{lem:criticalH2D}. When the HAPS density exceeds $4.5\times 10^{-10}\;{\rm HAPSs/km^2}$, the percolation probability rapidly increases from $0.1$, and reach $0.83$ when $\lambda_{\rm H}=7.5\times 10^{-10}\;{\rm HAPSs/km^2}$. In this coverage scheme, percolation probability only depends on the HAPS density when the footprint or associating area of each HAPS is fixed.\\ 
\begin{figure}[ht]
    \centering
    \includegraphics[width=1\linewidth]{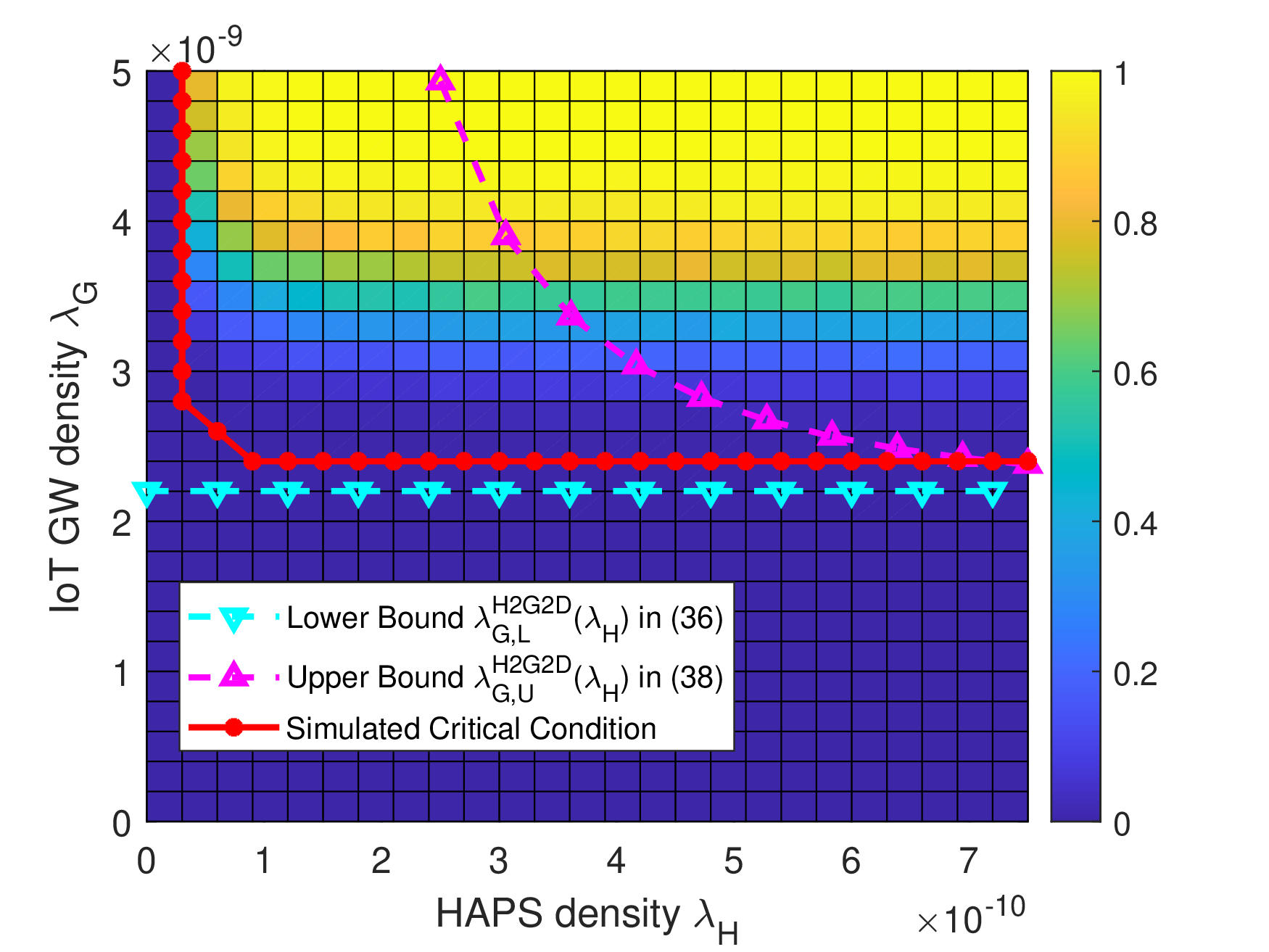}
    \caption{Heatmap of percolation probability in H2G2D coverage scheme with different HAPS densities and GW densities.}
    \label{fig:H2G2Dheatmap}
\end{figure}
\indent In Fig. \ref{fig:H2G2Dheatmap}, we draw the heatmap of percolation probability in H2G2D coverage scheme with different combinations of $\lambda_{\rm H}$ and $\lambda_{\rm G}$. The phase transition from the sub-critical case to the super-critical case relies on the increase in not only $\lambda_{\rm H}$ but also $\lambda_{\rm G}$. Following Lemma \ref{lem:criticalH2G2D}, higher HAPS density $\lambda_{\rm H}$ leads to lower requirement of GW density $\lambda_{\rm G}$. When $\lambda_{\rm H}$ is higher than $9\times 10^{-11}\;{\rm HAPSs/km^2}$, the requirement of GW density approaches to a constant $2\times 10^{-9}\;{\rm GWs/km^2}$. The set of points on critical state are located between the bounds of sub-critical region and super-critical region. When $\lambda_{\rm G}$ increases, the percolation probability rapidly increases from $0.01$. When $\lambda_{\rm G}=3\times 10^{-9}\;{\rm GWs/km^2}$, the increase in HAPS density can not obtain a rapid performance improvement because there is no enough GWs to form the continuous service area. However, when $\lambda_{\rm G}=4\times 10^{-9}\;{\rm GWs/km^2}$, the increase in HAPS density can make the percolation probability be promoted rapidly. In short, when HAPS density is sufficiently high, because the devices are connected to the GWs directly, we only need to deploy enough GWs to ensure that most neighbor GWs have overlapping coverage. When HAPS density is insufficient, the communications between GWs can help eliminate the negative impact of insufficient HAPS deployment, and sufficient density is also important to activate the GWs through G2G links. Therefore, in the H2G2D coverage scheme, the first task is to make sure that the GW mesh network is well-deployed. \\
\begin{figure}[ht]
    \centering
    \includegraphics[width=1\linewidth]{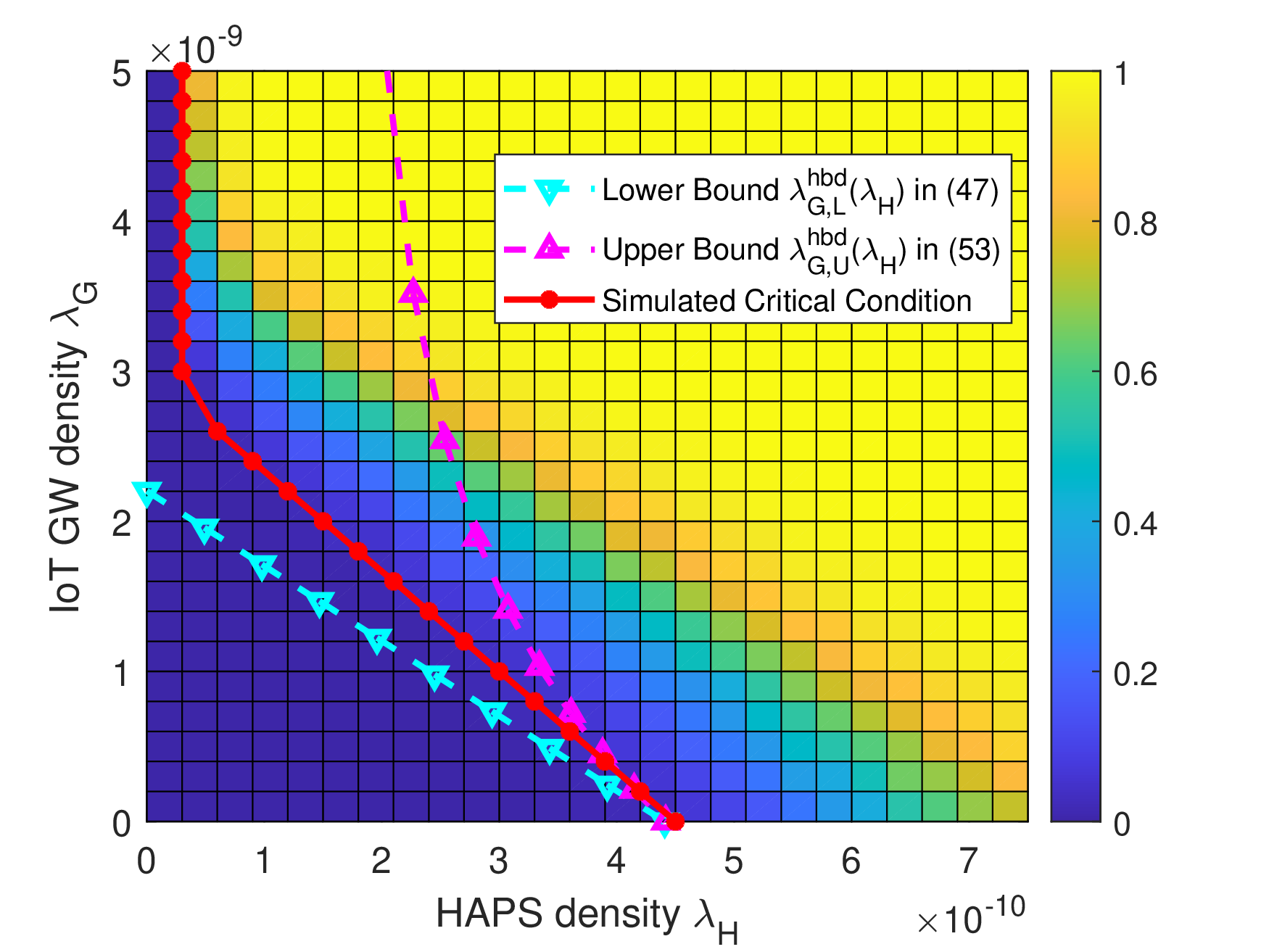}
    \caption{Heatmap of percolation probability in hybrid coverage scheme with different HAPS densities and GW densities.}
    \label{fig:Hybridheatmap}
\end{figure}
\indent In Fig. \ref{fig:Hybridheatmap}, we draw the heatmap of percolation probability in the hybrid coverage scheme with different combinations of $\lambda_{\rm H}$ and $\lambda_{\rm G}$. Similar to the H2G2D case, the phase transition from the sub-critical case to the super-critical case relies on the increase in $\lambda_{\rm H}$ and also $\lambda_{\rm G}$. Differently, the sub-critical region is smaller than those in H2D scheme and H2G2D scheme, while the super-critical region is larger or equal to those in H2D scheme and H2G2D scheme. The set of points on the critical state are located between the bounds of sub-critical region and super-critical region. When $\lambda_{\rm H}>4.5\times 10^{-10}\;{\rm GWs/km^2}$, the percolation probability is non-zero. When $\lambda_{\rm H}<4.5\times 10^{-10}\;{\rm GWs/km^2}$, percolation probability increases rapidly from $0.1$ when the HAPS density is larger than the corresponding critical value. For any fixed GW density, the increase in HAPS density can always bring the performance improvement, however, the increase in $\lambda_{\rm G}$ leads to a less critical value of $\lambda_{\rm H}$. For example, when $\lambda_{\rm G}=1\times10^{-9}\;{\rm GWs/km^2}$, the critical value of $\lambda_{\rm H}$ is $3\times10^{-10}\;{\rm HAPSs/km^2}$. But when $\lambda_{\rm G}=2\times10^{-9}\;{\rm GWs/km^2}$, the critical value of $\lambda_{\rm H}$ is $1.5\times10^{-10}\;{\rm HAPSs/km^2}$. Therefore, in the hybrid coverage scheme, a high HAPS density only requires GWs to fill the gaps between HAPS coverage using few GW deployment, but a low HAPS density requires more GWs to generate large-scale mesh networks and continuous serving areas.\\
\begin{figure}[!ht]
    \centering
    \includegraphics[width=0.9\linewidth]{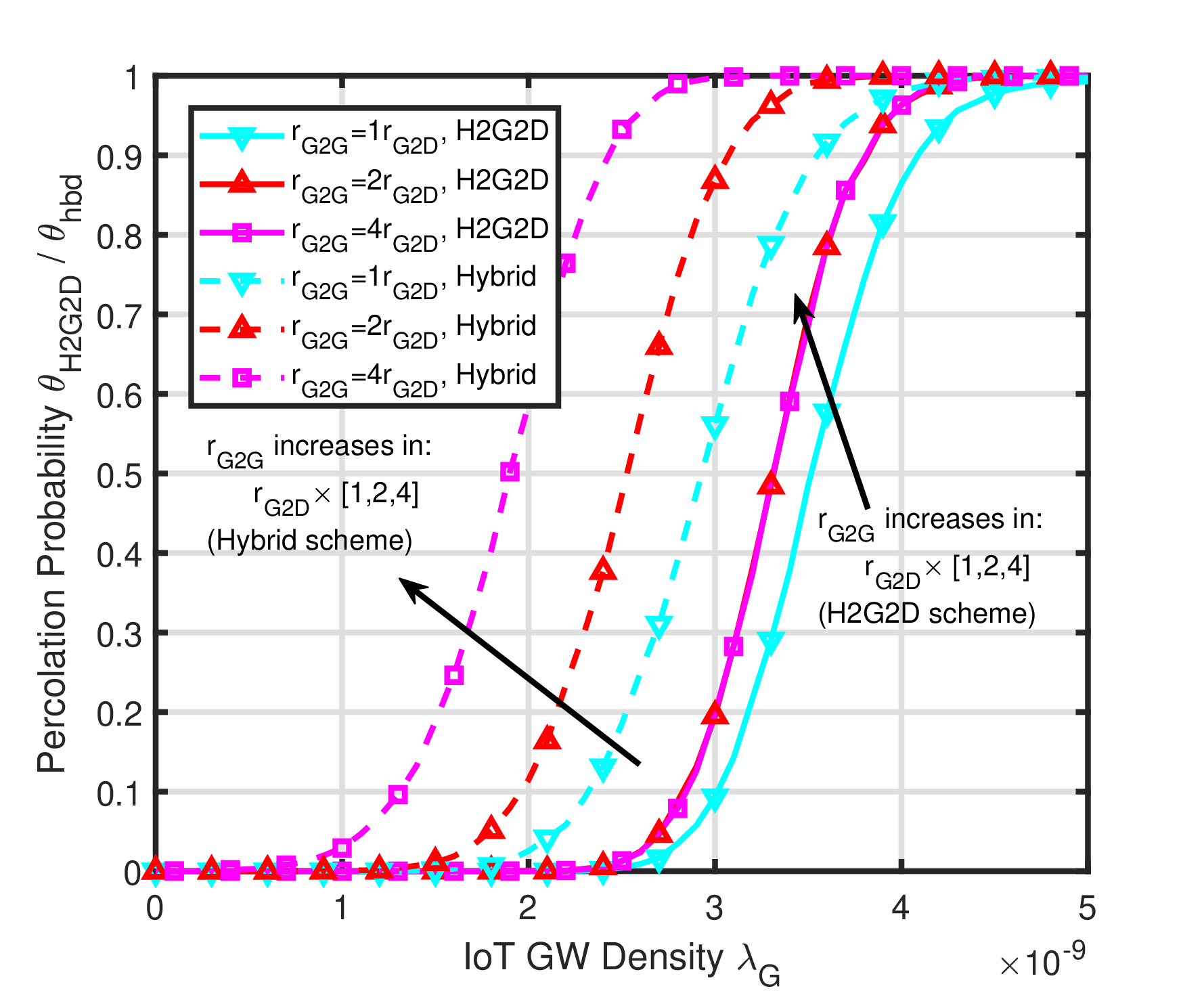}
    \caption{Percolation probability vs GW density with different value of $r_{\rm G2G}$ and a fixed HAPS density in H2G2D coverage scheme and hybrid coverage scheme.}
    \label{fig:rG2G}
\end{figure}
\indent We notice that the hybrid coverage scheme can obtain a higher percolation probability by jointing the H2D and H2G2D coverage schemes. However, when the antenna array, spectrum resource and energy are limited, the performance of the hybrid coverage scheme depends on the resource allocation strategy. When all resources are allocated to the H2D function, the hybrid coverage scheme is reduced to the H2D coverage scheme. Similarly, when all resources are allocated to the H2G2D function, the hybrid coverage scheme is reduced to the H2G2D coverage scheme. Therefore, we can optimize the resource allocation strategy to obtain better connectivity performance.\\
\indent For the H2G2D coverage scheme and the hybrid coverage scheme, the G2G communications between GWs build a mesh network, where the coverage expansion can be realized via multi-hop G2G links. In Fig. \ref{fig:rG2G}, we show that the increase in G2G communication range $r_{\rm G2G}$ can lead to the performance improvement. However, when $r_{\rm G2G}>2r_{\rm G2D}$, the increase in $r_{\rm G2G}$ can not further lead to the increase in percolation probability. Differently, for the hybrid coverage scheme, the increase in $r_{\rm G2G}$ can lead to the increase in percolation probability even
when $r_{\rm G2G}>2r_{\rm G2D}$, since it brings the opportunities for H2D coverage and H2G2D coverage being connected together. \\
\indent Therefore, from the proposed perspective of percolation theory, the operators can quickly find out the critical conditions for HAPS density and GW density, where they can use minimal number of HAPSs or GWs to realize the continuous service path they want. The operators can optimize the choice of HAPS density and GW density, based on the CAPEX and OPEX for each HAPS or each GW. Take the hybrid coverage scheme as an example. The CAPEX and OPEX of HAPSs and GWs are predictable. When the budget is fixed, we have multiple combinations of HAPS density and GW density, and we can select the one with the highest percolation probability. When the requirement of percolation probability is fixed, we can draw contour lines with percolation probability equal to a certain threshold, and select the density combination with the lowest expenditure. In the case where the GW deployment or HAPS deployment is limited (such as maritime areas with a small number of GW locations or protected areas with regulated HAPS deployment), we can further constrain the densities. However, one challenge is that there is a lack of closed-form expression for the percolation probability according to the size of whole region and characteristics of nodes, so it is necessary to use some approximate expressions or directly use experimental data to obtain the best density combination.\\
\indent In this manuscript, all HAPSs and GWs are assumed to be available. In actual situations, operators can obtain the required available node densities through this framework and then derive the required node densities by considering the node failure rates. The weather condition is also an important factor in designing the HAPS deployment. For regions with stable weather conditions, operators can use the average communication distance to approximate the required node densities. For regions with variable weather, operators can consider the communication distance under different weather conditions separately, to determine the range of required node densities. In high-traffic areas, the associating area of each HAPS will be reduced. The framework in this manuscript can help operators quickly estimate the required density of HAPSs and dynamically adjust the movement of HAPSs. By changing the height of devices $h_H$, the proposed framework can fit different use cases. For example, $h_H=0\;{\rm m}$ works for IoT devices, while $h_H=1\;{\rm m}$ matches the typical height of mobile users. For future aerial vehicles, $h_H$ is expected to be set as hundreds of meters. In addition, when the altitudes of wireless devices are different or changing, especially for mountain and post-disaster areas, 3D service continuity is expected to be investigated.

In the future, dynamic percolation analysis considering the random shadowing effect, instantaneous SINR, traffic changes and spatial correlations between nodes is expected to be developed. We realize that there exists some level of mutual correlation among different nodes (GWs, HAPS, users) that is not captured in our framework for analytical tractability. This should be an interesting research direction to extend the work in this paper. Considering the spectrum competition between HAPSs and terrestrial networks and different levels of shadowing impacts, the density of HAPSs may differ depending on existing network deployments and environmental conditions. Therefore, it is essential to investigate the coverage performance and connectivity of inhomogeneous HAPS deployments and more possible vHetNet architectures. Advanced traffic management strategies on HAPSs and GWs should be developed in these proposed HAPS-based solutions. In this paper, we discuss the feasibility of HAPS-based solutions in achieving large-scale continuous service based on hexagonal face percolation. We encourage discussions based on different lattice, numerical methods, and other percolation theory tools to study and improve research on coverage connectivity analysis of wireless networks.
\section{Conclusion}\label{sec:conclusion}
In this paper, we introduce three HAPS-based solutions for large-scale continuous service: (i) the H2D coverage scheme, (ii) the H2G2D coverage scheme and (iii) the hybrid coverage scheme. We investigate the phase transition behavior of percolation probability, which represents the probability of generating large-scale continuous coverage areas via HAPSs' direct coverage or via GWs which are connected to HAPSs. We show that the critical condition for phase transition exists between the sub-critical region and the super-critical region, in all of these three HAPS-based coverage schemes. For each coverage scheme, the critical condition can form a curve about HAPS density and GW density, which can help minimize the cost from HAPSs or GWs, or jointly minimize the CAPEX and OPEX of HAPS-based wireless networks.  
\appendices
\section{Proof of Lemma \ref{lem:increaseH2D} }\label{app:increaseH2D}
In the H2D coverage scheme, we consider two sets of HAPSs $\Psi_1$ and $\Psi_2$ with densities $\lambda_{\rm H,1}$ and $\lambda_{\rm H,2}$, respectively, where $0<\lambda_{\rm H,1}< \lambda_{\rm H,2}$. Since $\Psi_1$ and $\Psi_2$ are both PPPs, $\Psi_1$ can be constructed by thinning $\Psi_2$ with probability $\frac{\lambda_{\rm H,1}}{\lambda_{\rm H,2}}$ and $\Psi_1\subseteq\Psi_2$. $K_{\rm H2D,1}(0)$ and $K_{\rm H2D,2}(0)$ are the connected components containing the origin in the random graph $G_{\rm H2D,1}$ and $G_{\rm H2D,2}$, respectively. Because $V_{\rm H2D,1}\subseteq V_{\rm H2D,2}$ and $E_{\rm H2D,1}\subseteq E_{\rm H2D,2}$, we can obtain $K_{\rm H2D,1}(0)\subseteq K_{\rm H2D,2}(0)$ and $|K_{\rm H2D,1}(0)|\leq |K_{\rm H2D,2}(0)|$. Therefore, $0<\lambda_{\rm H,1}<\lambda_{\rm H,2}$ indicates $\theta_{\rm H2D}(\lambda_{\rm H,1})\leq\theta_{\rm H2D}(\lambda_{\rm H,2})$. That means the percolation probability is a non-decreasing function of $\lambda_{\rm H}$.

\section{Proof of Lemma \ref{lem:increaseH2G2D}}\label{app:increaseH2G2D}
For a fixed value of HAPS density $\lambda_{\rm H,1}$, we consider the same set of HAPSs' 2D locations $\Psi$ with density $\lambda_{\rm H,1}$, and two different sets of GWs' 2D locations $\Phi_1$ and $\Phi_2$ with densities $\lambda_{\rm G,1}$ and $\lambda_{\rm G,2}$, respectively, where $0<\lambda_{\rm G,1}<\lambda_{\rm G,2}$. Since $\Phi_1$ and $\Phi_2$ are both PPPs, $\Phi_1$ can be constructed by thinning $\Phi_2$ with probability $\frac{\lambda_{\rm G,1}}{\lambda_{\rm G,2}}$ and $\Phi_1\subseteq\Phi_2$. The set $\Phi_2\backslash\Phi_1$ can be considered a PPP with density $\lambda_{\rm G,2}-\lambda_{\rm G,1}$. Appending any point in $\Phi_2\backslash\Phi_1$ into $\Phi_1$ does not remove any point in $\Phi_{c,1}$ because any point in $\Phi_{c,1}$ has already generated the path to HAPSs via other points in $\Phi_{c,1}$. The appended GW can be covered by HAPSs directly and join in $\Phi_{d,2}$. It can also build a G2G link to other GWs in $\Phi_{d,1}$ or $\Phi_{id,1}$, and then join in $\Phi_{id,2}$. Therefore, we can obtain $\Phi_{d,1}\subseteq\Phi_{d,2}$ and $\Phi_{id,1}\subseteq\Phi_{id,2}$. Because $\Phi_{c,1}= \Phi_{d,1}\cup \Phi_{id,1}$ and $\Phi_{c,2}= \Phi_{d,2}\cup \Phi_{id,2}$, we obtain that $\Phi_{c,1}\subseteq\Phi_{c,2}$, $V_{\rm H2G2D,1}\subseteq V_{\rm H2G2D,2}$, and $E_{\rm H2G2D,1}\subseteq E_{\rm H2G2D,2}$. Therefore, the giant component also satisfy the relationship $K_{\rm H2G2D,1}(0)\subseteq K_{\rm H2G2D,2}(0)$, and percolation probabilities satisfy:
\begin{equation}
    \theta_{\rm H2G2D}(\lambda_{\rm H,1},\lambda_{\rm G,2})\geq \theta_{\rm H2G2D}(\lambda_{\rm H,1},\lambda_{\rm G,1})
\end{equation}
and percolation probability in the H2G2D coverage scheme is a non-decreasing function of $\lambda_{\rm G}$.\\
\indent Next, for a fixed value of GW density $\lambda_{\rm G,1}$, we consider the same set of GWs $\Phi$ with density $\lambda_{\rm G,1}$, and two different sets of HAPSs $\Psi_{1}$ and $\Psi_{2}$ with densities $\lambda_{\rm G,1}$ and $\lambda_{\rm G,2}$, respectively, where $0<\lambda_{\rm G,1}<\lambda_{\rm G,2}$. Since $\Psi_1$ and $\Psi_2$ are both PPPs, $\Psi_1$ can be constructed by thinning $\Psi_2$ with probability $\frac{\lambda_{\rm H,1}}{\lambda_{\rm H,2}}$ and $\Psi_{1}\subseteq\Psi_{2}$. The set $\Psi_{2}\backslash\Psi_{1}$ can be considered a PPP with density $\lambda_{\rm H,2}-\lambda_{\rm H,1}$. Appending any point in $\Psi_{2}\backslash\Psi_{1}$ into $\Psi_{1}$ makes the below changes. First, more GWs can be directly covered by HAPSs, that is $\Phi_{d,1}\subseteq \Phi_{d,2}$. Second, other GWs in $\Phi\backslash\Phi_{c,1}$ have the chance to join in $\Phi_{id,2}$ through the GWs in $\Phi_{d,2}\backslash\Phi_{d,1}$. Third, the GWs that connect to HAPSs via G2G communications can be directly covered by the new HAPS or still covered via previous multi-hop G2G paths, that is $\Phi_{id,1}\subseteq(\Phi_{d,2}\cup\Phi_{id,2})$.   Therefore, we can obtain $(\Phi_{d,1}\cup\Phi_{id,1})\subseteq (\Phi_{d,2}\cup\Phi_{id,2})$, that is $\Phi_{c,1}\subseteq \Phi_{c,2}$. The vertex sets and edge sets also satisfy $V_{\rm H2G2D,1}\subseteq V_{\rm H2G2D,2}$ and $E_{\rm H2G2D,1}\subseteq E_{\rm H2G2D,2}$, and the giant components satisfy $K_{\rm H2G2D,1}(0)\subseteq K_{\rm H2G2D,2}(0)$ and $|K_{\rm H2G2D,1}(0)|\leq |K_{\rm H2G2D,2}(0)|$ at the same time. Therefore, we have
\begin{equation}
    \theta_{\rm H2G2D}(\lambda_{\rm H,2},\lambda_{\rm G,1})\geq \theta_{\rm H2G2D}(\lambda_{\rm H,1},\lambda_{\rm G,1})
\end{equation}
and percolation probability in the H2G2D coverage scheme is a non-decreasing function of $\lambda_{\rm H}$.
\begin{figure}
    \centering
    \includegraphics[width=1\linewidth]{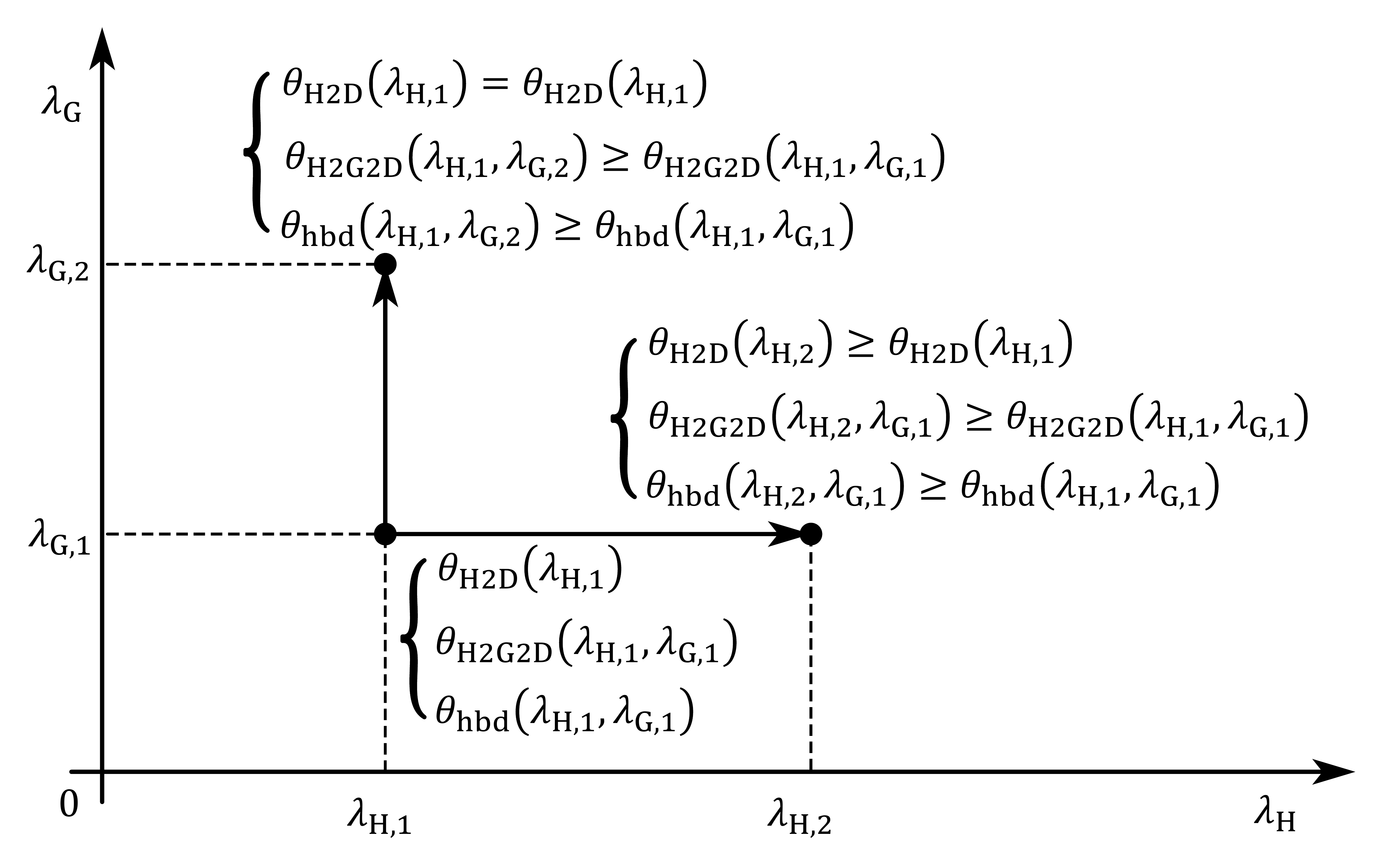}
    \caption{Illustration for the relationship between percolation probability and HAPS density $\lambda_{\rm H}$ and GW density $\lambda_{\rm G}$ in each coverage schemes.}
    \label{fig:increaseH2G2D}
\end{figure}
\section{Proof of Lemma \ref{lem:criticalH2G2D}}\label{app:criticalH2G2D}
Different from the single layer coverage model like the H2D coverage model, the critical condition for phase transition of percolation probability can be affected by the HAPS density $\lambda_{\rm H}$ and GW density $\lambda_{\rm G}$ because it can be formulated as a non-decreasing function of each of them. The critical condition infers that the increase in $\lambda_{\rm H}$ or $\lambda_{\rm G}$ will lead to a non-zero percolation probability. Therefore, there are three cases when $\theta_{\rm H2G2D}(\lambda_{\rm H,1},\lambda_{\rm G,1})=0$, $\lambda_{\rm H,2}=\lambda_{\rm H,1}+\Delta \lambda_{\rm H}$ and $\lambda_{\rm G,2}=\lambda_{\rm G,1}+\Delta\lambda_{\rm G}$ where for all $\Delta\lambda_{\rm H}>0$ and $\Delta\lambda_{\rm G}>0$: 
\begin{equation}
    \left.
\begin{array}{rr@{}l}
    \rmnuma:&
    \left\{\begin{array}{l}
    \theta_{\rm H2G2D}(\lambda_{\rm H,2},\lambda_{\rm G,1})>0 \\ \theta_{\rm H2G2D}(\lambda_{\rm H,1},\lambda_{\rm G,2})>0 \end{array}\right.,  \\
    \rmnumb:&
    \left\{\begin{array}{l}
    \theta_{\rm H2G2D}(\lambda_{\rm H,2},\lambda_{\rm G,1})>0 \\ \theta_{\rm H2G2D}(\lambda_{\rm H,1},\lambda_{\rm G,2})=0 \end{array}\right.,  \\
    \rmnumc:&
    \left\{\begin{array}{l}
    \theta_{\rm H2G2D}(\lambda_{\rm H,2},\lambda_{\rm G,1})=0 \\ \theta_{\rm H2G2D}(\lambda_{\rm H,1},\lambda_{\rm G,2})>0 \end{array}\right..  \\
\end{array}
    \right.
\end{equation}
It is worth noting that, if $\theta_{\rm H2G2D}(\lambda_{\rm H,2},\lambda_{\rm G,1})=0$ and $\theta_{\rm H2G2D}(\lambda_{\rm H,1},\lambda_{\rm G,2})=0$, the point $(\lambda_{\rm H,1},\lambda_{\rm G,1})$ does not satisfy the optimization object in (\ref{perproH2G2DHAPS}) and (\ref{perproH2G2DGW}).\\
\indent We first consider the case $\rmnuma$. Because $\theta_{\rm H2G2D}(\lambda_{\rm H,1},\lambda_{\rm G,1})=0$ and $\theta_{\rm H2G2D}(\lambda_{\rm H,2},\lambda_{\rm G,1})>0$, there exist $\lambda_{\rm H}'$ where $\lambda_{\rm H,1}\leq \lambda_{\rm H}'<\lambda_{\rm H,2}$ and 
\begin{equation}
\begin{array}{rl}
     \theta_{\rm H2G2D}(\lambda_{\rm H},\lambda_{\rm G,2})=0&\lambda_{\rm H}\leq\lambda_{\rm H}',  \\
     \theta_{\rm H2G2D}(\lambda_{\rm H},\lambda_{\rm G,2})>0&\lambda_{\rm H}>\lambda_{\rm H}'.
\end{array}
\end{equation}
Similarly, because $\theta_{\rm H2G2D}(\lambda_{\rm H,1},\lambda_{\rm G,1})=0$ and $\theta_{\rm H2G2D}(\lambda_{\rm H,1},\lambda_{\rm G,2})>0$, there exist $\lambda_{\rm G}'$ where $\lambda_{\rm G}^{-}\leq\lambda_{\rm G,1}\leq \lambda_{\rm G}'<\lambda_{\rm G,1}$ and 
\begin{equation}
\begin{array}{rl}
     \theta_{\rm H2G2D}(\lambda_{\rm H,2},\lambda_{\rm G})=0&\lambda_{\rm G}<\lambda_{\rm G}',  \\
     \theta_{\rm H2G2D}(\lambda_{\rm H,2},\lambda_{\rm G})>0&\lambda_{\rm G}>\lambda_{\rm G}'. 
\end{array}
\end{equation}
Therefore, the points $(\lambda_{\rm H}',\lambda_{\rm G,2})$ and $(\lambda_{\rm H,2},\lambda_{\rm G}')$ both satisfy the critical condition where $\lambda_{\rm H}'<\lambda_{\rm H,2}$ and $\lambda_{\rm G,2}>\lambda_{\rm G}'$. For any $\Delta\lambda_{\rm H}>0$ and $\Delta\lambda_{\rm G}>0$, such a relationship holds. Therefore, if we express the critical value of $\lambda_{\rm G}$ as a function of $\lambda_{\rm H}$, this function should be decreasing, vice versa.\\
\indent Next, we discuss the case $\rmnumb$. 
In this case, the increase in the $\lambda_{\rm G}$ does not lead to a non-zero percolation probability, that is $\theta_{\rm H2G2D}(\lambda_{\rm H,1},\lambda_{\rm G,2})=0$. For any $\lambda_{\rm H,2}>\lambda_{\rm H,1}$, $\theta_{\rm H2G2D}(\lambda_{\rm H,2},\lambda_{\rm G,2})\geq \theta_{\rm H2G2D}(\lambda_{\rm H,2},\lambda_{\rm G,1})$, so that $\theta_{\rm H2G2D}(\lambda_{\rm H,2},\lambda_{\rm G,2})>0$ for all $\lambda_{\rm H,2}>\lambda_{\rm H,1}$. For any $\lambda_{\rm H}\leq\lambda_{\rm H,1}$, $\theta_{\rm H2G2D}(\lambda_{\rm H},\lambda_{\rm G,2})\leq \theta_{\rm H2G2D}(\lambda_{\rm H,1},\lambda_{\rm G,2})$ and $\theta_{\rm H2G2D}(\lambda_{\rm H},\lambda_{\rm G,2})=0$. Therefore, all points $(\lambda_{\rm H,1},\lambda_{\rm G})$ where $\lambda_{\rm G}\geq \lambda_{\rm G,1}$ satisfy the critical condition. For a relatively high $\lambda_{\rm G}$, the requirement of $\lambda_{\rm H}$ for phase transition approaches to a constant. \\
\indent Similarly, in the case $\rmnumc$, the increase in $\lambda_{\rm H}$ does not lead to a non-zero percolation probability, that is $\theta_{\rm H2G2D}(\lambda_{\rm H,2},\lambda_{\rm G,1})=0$ for any $\lambda_{\rm H,2}>\lambda_{\rm H,1}$. For any $\lambda_{\rm G,2}>\lambda_{\rm G,1}$, $\theta_{\rm H2G2D}(\lambda_{\rm H,2},\lambda_{\rm G,2})\geq \theta_{\rm H2G2D}(\lambda_{\rm H,1},\lambda_{\rm G,2})$, so that $\theta_{\rm H2G2D}(\lambda_{\rm H,2},\lambda_{\rm G,2})>0$. For any $\lambda_{\rm G}^{-}\leq\lambda_{\rm G}\leq\lambda_{\rm G,1}$, $\theta_{\rm H2G2D}(\lambda_{\rm H,2},\lambda_{\rm G})\leq \theta_{\rm H2G2D}(\lambda_{\rm H,2},\lambda_{\rm G,1})$ and $\theta_{\rm H2G2D}(\lambda_{\rm H,2},\lambda_{\rm G})=0$. Therefore, all points $(\lambda_{\rm H},\lambda_{\rm G,1})$ where $\lambda_{\rm H}\geq \lambda_{\rm H,1}$ satisfy the critical condition. For a relatively high $\lambda_{\rm H}$, the requirement of $\lambda_{\rm G}$ for phase transition approaches to a constant.\\
\indent In summary, the case $\rmnumb$ shows that the critical value of $\lambda_{\rm H}$ has a lower bound and approaches to a constant when $\lambda_{\rm G}$ goes to infinity. The case $\rmnumc$ shows that the critical value of $\lambda_{\rm G}$ has a lower bound and approaches to a constant when $\lambda_{\rm H}$ goes to infinity. The case $\rmnuma$ describes the critical condition between these two cases, where the critical value of HAPS density and GW density can be non-increasing functions of each other. We can represent the set of all points in these three cases using an implicit function $\mathcal{F}(\lambda_{\rm H},\lambda_{\rm G})=0$. \\
\indent It is worth noting that, the lower bound of $\lambda_{\rm H}$ for $\mathcal{F}(\lambda_{\rm H},\lambda_{\rm G})=0$ is between 0 and $\lambda_{\rm H}^{*}$, and the lower bound of $\lambda_{\rm G}$ for $\mathcal{F}(\lambda_{\rm H},\lambda_{\rm G})=0$ is between $\lambda_{\rm G}^{-}$ and $\lambda_{\rm G}^{+}$. This is because the super-critical region is located in the area $\lambda_{\rm H}>\lambda_{\rm H}^{*}$ and $\lambda_{\rm G}>\lambda_{\rm G}^{+}$, and the sub-critical region is located in the area $\lambda_{\rm G}<\lambda_{\rm G}^{-}$.

\section{Proof of Lemma \ref{lem:increaseHybrid}}\label{app:increaseHybrid}
For a fixed value of HAPS density $\lambda_{\rm H,1}$, we consider the same set of HAPSs $\Psi$ with density $\lambda_{\rm H,1}$, and two different sets of GWs $\Phi_1$ and $\Phi_2$ with densities $\lambda_{\rm G,1}$ and $\lambda_{\rm G,2}$, respectively, where $0<\lambda_{\rm G,1}<\lambda_{\rm G,2}$. Since $\Phi_1$ and $\Phi_2$ are both PPPs, $\Phi_1$ can be constructed by thinning $\Phi_2$ with probability $\frac{\lambda_{\rm G,1}}{\lambda_{\rm G,2}}$ and $\Phi_1\subseteq \Phi_2$. The set $\Phi_2\backslash\Phi_1$ can be considered a PPP with density $\lambda_{\rm G,2}-\lambda_{\rm G,1}$. The same as the H2G2D case, appending any point in $\Phi_2\backslash\Phi_1$ into $\Phi_1$ does not remove any point in $\Phi_c$, so that $\Phi_{c,1}\subseteq\Phi_{c,2}$ and $V_{\rm H2G2D,1}\subseteq V_{\rm H2G2D,2}$. With the same group of HAPSs, we also have $V_{\rm H2D,1}=V_{\rm H2D,2}=\Psi$. Since $V_{\rm hbd}=V_{\rm H2D}\cup V_{\rm H2G2D}=\Phi_{c}\cup\Psi$, we have $V_{\rm hbd,1}\subseteq V_{\rm hbd,2}$. We know that $E_{\rm hbd}=E_{\rm H2D}\cup E_{\rm H2G2D}\cup E_{\rm HG}$, appending points in $\Phi_{c,1}$ leads to the extension of $E_{\rm H2G2D}$ and $E_{\rm HG}$, that is $E_{\rm H2G2D,1}\subseteq E_{\rm H2G2D,2}$ and $E_{\rm HG,1}\subseteq E_{\rm HG,2}$. The edge set that describes the connections between HAPSs' direct coverage does not change, \ie $E_{\rm H2D,1}=E_{\rm H2D,2}$. Therefore, we have $E_{\rm hbd,1}\subseteq E_{\rm hbd,2}$, and then $G_{\rm hbd,1}\subseteq G_{\rm hbd,2}$. The giant components also satisfy $K_{\rm hbd,1}(0)\subseteq K_{\rm hbd,2}(0)$ and $|K_{\rm hbd,1}(0)|\leq |K_{\rm hbd,2}(0)|$, and percolation probabilities satisfy:
\begin{equation}
    \theta_{\rm hbd}(\lambda_{\rm H,1},\lambda_{\rm G,2})\geq \theta_{\rm hbd}(\lambda_{\rm H,1},\lambda_{\rm G,1}),
\end{equation}
and percolation probability in the hybrid coverage scheme is a non-decreasing function of $\lambda_{\rm G}$.

Next, for a fixed value of GW density $\lambda_{\rm G,1}$, we consider the same set of GWs $\Phi$ with density $\lambda_{\rm G,1}$, and two different sets of HAPSs $\Psi_1$ and $\Psi_2$ with densities $\lambda_{\rm H,1}$ and $\lambda_{\rm H,2}$, respectively, where $0<\lambda_{\rm H,1}<\lambda_{\rm H,2}$. Since $\Psi_1$ and $\Psi_2$ are both PPPs, $\Psi_1$ can be constructed by thinning $\Psi_2$ with probability $\frac{\lambda_{\rm H,1}}{\lambda_{\rm H,2}}$ and $\Psi_1\subseteq \Psi_2$. The set $\Psi_2\backslash \Psi_1$ can be considered a PPP with density $\lambda_{\rm H,2}-\lambda_{\rm H,1}$. The same as the H2G2D coverage scheme, appending points in $\Psi_2\backslash\Psi_1$ into $\Psi_1$ makes more GWs connected to the core network, so that $\Phi_{c,1}\subseteq\Phi_{c,2}$ and $V_{\rm H2G2D,1}\subseteq V_{\rm H2G2D,2}$. Because $\Psi_1\subseteq\Psi_2$, we have $V_{\rm H2D,1}\subseteq V_{\rm H2D,2}$. Since $V_{\rm hbd}=V_{\rm H2D}\cup V_{\rm H2G2D}=\Phi_{c}\cup\Psi$, we have $V_{\rm hbd,1}\subseteq V_{\rm hbd,2}$.  At the same time, due to the increase in number of HAPSs and connected GWs, the edge sets satisfy:
$E_{\rm H2D,1}\subseteq E_{\rm H2D,2}$, $E_{\rm H2G2D,1}\subseteq E_{\rm H2G2D,2}$ and $E_{\rm HG,1}\subseteq E_{\rm HG,2}$, therefore $E_{\rm hbd,1}\subseteq E_{\rm hbd,2}$. The random graphs $G_{\rm hbd,1}\subseteq G_{\rm hbd,2}$ and the giant components satisfy $K_{\rm hbd,1}(0)\subseteq K_{\rm hbd,2}(0)$ and $|K_{\rm hbd,1}(0)|\geq |K_{\rm hbd,2}(0)|$. The percolation probabilities satisfy:
\begin{equation}
    \theta_{\rm hbd}(\lambda_{\rm H,2},\lambda_{\rm G,1})\geq \theta_{\rm hbd}(\lambda_{\rm H,1},\lambda_{\rm G,1}),
\end{equation}
and percolation probability in the hybrid coverage scheme is a non-decreasing function of $\lambda_{\rm H}$.

\section{Proof of Lemma \ref{lem:criticalHybrid}}\label{app:criticalhybrid}
In Case 1, the supercritical region is on the right side of $\lambda_{\rm H}=\lambda_{\rm H}^{+}$, and it does not have overlapping areas with the subcritical region. In Case 2, the supercritical region is on the right side of $\lambda_{\rm H}=\lambda_{\rm H}^{*}$ and does not overlap with the subcritical region. In Case 3, we need to prove that 
\begin{equation}
\begin{array}{r@{}l}
\frac{1}{\pi (r_{\rm G2D}-a)^2} \ln (1+\frac{e^{-\lambda_{\rm H}\pi(r_{\rm H2D}-a)^2}-\frac{1}{2}}{\frac{1}{2}-e^{-\lambda_{\rm H}\pi (r_{\rm H2G}-r_{\rm G2D}+a)^2}})\\
\geq\frac{\ln 2}{\pi (r_{\rm G2D}+a)^2}-\lambda_{\rm H}\frac{(r_{\rm H2D}+a)^2}{(r_{\rm G2D}+a)^2}
\end{array}
\end{equation}
when $\lambda_{\rm H}^{*}<\lambda_{\rm H}<\lambda_{\rm H}^{-}$. We can notice that, when $\lambda_{\rm H}^{*}<\lambda_{\rm H}<\lambda_{\rm H}^{+}$, we have
\begin{equation}
\begin{array}{r@{}l}
    \frac{\ln 2}{\pi (r_{\rm G2D}+a)^2}&-\lambda_{\rm H}\frac{(r_{\rm H2D}+a)^2}{(r_{\rm G2D}+a)^2}\\
    &<\frac{\ln 2}{\pi (r_{\rm G2D}-a)^2}-\lambda_{\rm H}\frac{(r_{\rm H2D}-a)^2}{(r_{\rm G2D}-a)^2}
\end{array}
\end{equation}
and 
\begin{equation}
\begin{array}{r@{}l}
    &\frac{1}{\pi (r_{\rm G2D}-a)^2} \ln (1+\frac{e^{-\lambda_{\rm H}\pi(r_{\rm H2D}-a)^2}-\frac{1}{2}}{\frac{1}{2}-e^{-\lambda_{\rm H}\pi (r_{\rm H2G}-r_{\rm G2D}+a)^2}})\\
    \geq&\frac{1}{\pi (r_{\rm G2D}-a)^2} \ln (1+\frac{e^{-\lambda_{\rm H}\pi(r_{\rm H2D}-a)^2}-\frac{1}{2}}{\frac{1}{2}})\\
    =&\frac{1}{\pi (r_{\rm G2D}-a)^2} \ln (2e^{-\lambda_{\rm H}\pi(r_{\rm H2D}-a)^2})\\
    =&\frac{\ln 2}{\pi (r_{\rm G2D}-a)^2}-\lambda_{\rm H}\frac{(r_{\rm H2D}-a)^2}{(r_{\rm G2D}-a)^2}.
\end{array}
\end{equation}

Therefore, the supercritical region and subcritical region do not overlap. Similar to the H2G2D coverage scheme, the percolation probability $\theta_{\rm hbd}$ is a non-decreasing function about $\lambda_{\rm H}$ or $\lambda_{\rm G}$. Consider the curve where $\theta_{\rm hbd}(\lambda_{\rm H},\lambda_{\rm G})=p$ $\forall p>0$, the set of solutions also corresponds to a curve. In this curve, the increase in the $\lambda_{\rm H}$ does not lead to a higher value of $\lambda_{\rm G}$'s solution and the increase in $\lambda_{\rm G}$ does not lead to a higher value of $\lambda_{\rm H}$'s solution as well. When $p$ approaches 0, we can obtain the curve where the percolation probability meets the phase transition, which we defined as $\mathcal{G}(\lambda_{\rm H},\lambda_{\rm G})=0$. In the hybrid coverage scheme, the value of $\lambda_{\rm H}$ for $\mathcal{G}(\lambda_{\rm H},\lambda_{\rm G})=0$ should not exceed $\lambda_{\rm H}^{+}$ because $\theta_{\rm hbd}(\lambda_{\rm H},\lambda_{\rm G})>0$ in the super-critical case.

\ifCLASSOPTIONcaptionsoff
  \newpage
\fi

\bibliographystyle{IEEEtran}
\bibliography{ref}

\end{document}